\documentclass[twoside,11pt]{entics}
\usepackage{enticsmacro}
\usepackage{graphicx}
\usepackage{tikz-cd}
\usepackage[all]{xy}
\def\conf{MFPS 2026} 	%
\volume{NN}			%
\def\lastname{Earnshaw, Nester, Román} %

\usepackage[inline]{enumitem}
\usepackage{mathrsfs}
\usepackage{amsmath}
\usepackage{stmaryrd}
\usepackage{cleveref}
\usepackage{mathtools}
\usepackage{thmtools,thm-restate}

\usepackage[xcolor,no patch,hyperref,quotation,electronic]{knowledge}
\knowledgeconfigure{quotation,protect quotation={tikzcd}}
\knowledgeconfigure{notion}
\knowledgestyle{notion}{color=nordnight}
\knowledgestyle{intro notion}{emphasize,color=nordnight}

\knowledge{notion}
| extension order

\knowledge{notion}
| PCM
| PCMs
| \PCM
| partial commutative monoid
| Partial commutative monoid
| partial commutative monoids
| Partial commutative monoids

\knowledge{notion}
| partial commutative monoid morphism
| morphism of partial commutative monoids
| homomorphism of partial commutative monoids

\knowledge{notion}
| orthogonal

\knowledge{notion}
| effectful category
| effectful categories

\knowledge{notion}
| \Eff

\knowledge{notion}
| \Freyd

\knowledge{notion}
| \SymEff

\knowledge{notion}
| effectful functor
| effectful functors

\knowledge{notion}
| symmetric strict effectful category
| symmetric strict effectful categories
| symmetric effectful category
| symmetric effectful categories

\knowledge{notion}
| symmetric strict effectful functor
| symmetric strict effectful functors

\knowledge{notion}
| strict effectful category
| strict effectful categories
| Strict effectful categories

\knowledge{notion}
| monoidal category graded by
| graded monoidal category
| -graded monoidal category
| \pcm{}-graded monoidal categories
| $E$-graded monoidal category
| $E$-graded monoidal categories
| $(E, \oplus , 0)$-graded monoidal category

\knowledge{notion}
| strict monoidal functor

\knowledge{notion}
| Freyd category
| Freyd categories

\knowledge{notion}
| convolution

\knowledge{notion}
| profunctor

\knowledge{notion}
| symmetric promonoidal category

\knowledge{notion}
| symmetric promonoidal functor

\knowledge{notion}
| promonoidal category
| promonoidal structure

\knowledge{notion}
| strict premonoidal functor

\knowledge{notion}
| premonoidal categories
| strict premonoidal categories
| strict premonoidal category
| strict premonoidal structure
| premonoidal structure
| premonoidal category

\knowledge{notion}
| monoidal product
| tensor product
| strict monoidal category
| strict monoidal structure
| monoidal category
| monoidal categories
| monoidal structure

\knowledge{notion}
| cartesian monoidal functor

\knowledge{notion}
| cartesian monoidal
| cartesian monoidal category

\knowledge{notion}
| symmetric strict monoidal category
| symmetric strict monoidal structure

\knowledge{notion}
| symmetric strict monoidal functor

\knowledge{notion}
| \SymGradMonTop

\knowledge{notion}
| symmetric strict premonoidal category
| symmetric premonoidal category
| symmetric premonoidal
| symmetric strict premonoidal structure

\knowledge{notion}
| symmetric premonoidal functor
| symmetric strict premonoidal functor

\knowledge{notion}
| \bf{1}
| \mathbf{1}

\knowledge{notion}
| \bf{2}
| \mathbf{2}
| $\bf{2}$

\knowledge{notion}
| \bf{3}
| \mathbf{3}

\knowledge{notion}
| has joins

\knowledge{notion}
| directed

\knowledge{notion}
| top element
| with top
| with top element
| has a top element

\knowledge{notion}
| \GradMonTop

\knowledge{notion}
| lax promonoidal
| lax promonoidal functor

\knowledge{notion}
| companion

\knowledge{notion}
| exclusive monoid

\knowledge{notion}
| extension preorder

\knowledge{notion}
| effect algebra
| effect algebras

\knowledge{notion}
| separation algebra
| separation algebras

\knowledge{notion}
| center
| central
| Central
| central morphisms

\knowledge{notion}
| interchange
| interchanging

\knowledge{notion}
| center of a premonoidal category

\knowledge{notion}
| functor of Freyd categories

\usepackage{xcolor}

\definecolor{nordred}{HTML}{bf616a}
\definecolor{bordeaux}{HTML}{821529}
\definecolor{bluelink}{HTML}{003399}
\definecolor{nordred}{HTML}{bf616a}
\definecolor{nordblue}{HTML}{81a1c1}
\definecolor{norddarkblue}{HTML}{5e81ac}
\definecolor{nordgreen}{HTML}{a3be8c}
\definecolor{nordnight}{HTML}{4c566a}

\AtEndPreamble{
  \RequirePackage{hyperref}
  \RequirePackage{cleveref}
  \hypersetup{
    breaklinks = true,
    linktocpage,
    colorlinks = true,
    linkcolor = nordnight,
    citecolor = nordgreen,
    urlcolor = nordblue
  }
} %

\usepackage{etoolbox}
\AtBeginEnvironment{lem}{\crefalias{thm}{lem}}
\AtBeginEnvironment{prop}{\crefalias{thm}{prop}}
\AtBeginEnvironment{cor}{\crefalias{thm}{cor}}
\AtBeginEnvironment{defn}{\crefalias{thm}{defn}}
\AtBeginEnvironment{rem}{\crefalias{thm}{rem}}
\AtBeginEnvironment{exmp}{\crefalias{thm}{exmp}}
\AtBeginEnvironment{prob}{\crefalias{thm}{prob}}

\crefname{thm}{Theorem}{Theorems}
\crefname{cor}{Corollary}{Corollaries}
\crefname{lem}{Lemma}{Lemmas}
\crefname{prop}{Proposition}{Propositions}
\crefname{defn}{Definition}{Definitions}
\crefname{rem}{Remark}{Remarks}
\crefname{exmp}{Example}{Examples}
\crefname{claim}{Claim}{Claims}
\crefname{axiom}{Axiom}{Axioms}
\crefname{conj}{Conjecture}{Conjectures}
\crefname{fact}{Fact}{Facts}
\crefname{hypo}{Hypothesis}{Hypotheses}
\crefname{assum}{Assumption}{Assumptions}
\crefname{crit}{Criterion}{Criteria}
\crefname{prob}{Problem}{Problems}
\crefname{prin}{Principle}{Principles}
\crefname{alg}{Algorithm}{Algorithms} 
\crefname{algorithm}{Algorithm}{Algorithms}
\crefname{note}{Note}{Notes}
\crefname{summ}{Summary}{Summaries}
\crefname{case}{Case}{Cases}
\crefname{equation}{Equation}{Equations}
\crefname{figure}{Figure}{Figures}
\crefname{table}{Table}{Tables}
\crefname{construction}{Construction}{Constructions}
\crefname{observation}{Observation}{Observation}

\usepackage{cmll}

\newcommand\scalemath[3]{\scalebox{#1}[#2]{\mbox{\ensuremath{\displaystyle #3}}}}

\newcommand{\leftarrowtip}{\ensuremath{\tikz\draw[line width=0.5pt,->] (10pt,0) -- (0,0);}}
\newcommand{\leftarrowtailnotip}{\ensuremath{\tikz\draw[line width=0.5pt,-<] (0,0) -- (10pt,0);}}

\newcommand{\unicodeStar}{\ensuremath{\star}}
\DeclareUnicodeCharacter{2605}{\unicodeStar}

\DeclareUnicodeCharacter{21D2}{\ensuremath{\Rightarrow}}
\DeclareUnicodeCharacter{2218}{\ensuremath{\circ}}

\DeclareUnicodeCharacter{2022}{\ensuremath{\bullet}}
\DeclareUnicodeCharacter{2219}{\ensuremath{\bullet}}
\DeclareUnicodeCharacter{2026}{\ensuremath{\dots}}
\DeclareUnicodeCharacter{2208}{\ensuremath{\in}}
\DeclareUnicodeCharacter{2192}{\ensuremath{\to}}
\DeclareUnicodeCharacter{2190}{\ensuremath{\leftarrowtip}}
\DeclareUnicodeCharacter{2919}{\ensuremath{\leftarrowtailnotip}}

\DeclareUnicodeCharacter{00D7}{\ensuremath{\times}}
\DeclareUnicodeCharacter{00B7}{\ensuremath{\cdot}}
\DeclareUnicodeCharacter{222B}{\ensuremath{\int}}
\DeclareUnicodeCharacter{22A4}{\ensuremath{\top}}
\DeclareUnicodeCharacter{22A5}{\ensuremath{\bot}}
\DeclareUnicodeCharacter{2264}{\ensuremath{\leq}}

\newcommand{\unicodecolon}{\ensuremath{\colon}}
\DeclareUnicodeCharacter{FE55}{\unicodecolon}
\newcommand{\unicodeleftpar}{\ensuremath{\left(}}
\DeclareUnicodeCharacter{27EE}{\unicodeleftpar}
\newcommand{\unicoderightpar}{\ensuremath{\right)}}
\DeclareUnicodeCharacter{27EF}{\unicoderightpar}
\DeclareUnicodeCharacter{2260}{\neq}
\DeclareUnicodeCharacter{22A9}{\Vdash}
\DeclareUnicodeCharacter{2237}{\proportion}
\DeclareUnicodeCharacter{2124}{\mathbb{Z}}
\DeclareUnicodeCharacter{27E8}{\langle}
\DeclareUnicodeCharacter{27E9}{\rangle}
\DeclareUnicodeCharacter{21A6}{\mapsto}
\DeclareUnicodeCharacter{22A2}{\vdash}
\DeclareUnicodeCharacter{2090}{\ensuremath{{}_a}}
\DeclareUnicodeCharacter{A71B}{{}^\uparrow}
\DeclareUnicodeCharacter{A71C}{{}^\downarrow}
\DeclareUnicodeCharacter{27E6}{\llbracket}
\DeclareUnicodeCharacter{27E7}{\rrbracket}

\newcommand{\unicoderightcircle}{\ensuremath{\RIGHTcircle}}
\DeclareUnicodeCharacter{25D1}{\unicoderightcircle}
\newcommand{\unicodeleftcircle}{\ensuremath{\LEFTcircle}}
\DeclareUnicodeCharacter{25D0}{\unicodeleftcircle}
\DeclareUnicodeCharacter{229B}{\circledast}

\newcommand{\unicodebbA}{\ensuremath{\mathbb{A}}}
\DeclareUnicodeCharacter{1D538}{\unicodebbA}
\newcommand{\unicodebbB}{\ensuremath{\mathbb{B}}}
\DeclareUnicodeCharacter{1D539}{\unicodebbB}
\newcommand{\unicodebbC}{\ensuremath{\mathbb{C}}}
\DeclareUnicodeCharacter{2102}{\unicodebbC}
\DeclareUnicodeCharacter{1D53B}{\ensuremath{\mathbb{D}}}
\DeclareUnicodeCharacter{2115}{\ensuremath{\mathbb{N}}}
\DeclareUnicodeCharacter{211D}{\ensuremath{\mathbb{R}}}
\DeclareUnicodeCharacter{1D543}{\ensuremath{\mathbb{L}}}
\newcommand\UnicodeBlackboardP{\ensuremath{\mathbf{P}}} \DeclareUnicodeCharacter{2119}{\UnicodeBlackboardP}
\DeclareUnicodeCharacter{211A}{\ensuremath{\mathbb{Q}}}
\DeclareUnicodeCharacter{1D544}{\ensuremath{\mathbb{M}}}
\DeclareUnicodeCharacter{1D54C}{\ensuremath{\mathbb{U}}}
\DeclareUnicodeCharacter{1D54D}{\ensuremath{\mathbf{V}}}
\DeclareUnicodeCharacter{1D54E}{\ensuremath{\mathbb{W}}}
\DeclareUnicodeCharacter{1D546}{\ensuremath{\mathbb{O}}}
\DeclareUnicodeCharacter{1D540}{\ensuremath{\mathbb{I}}}
\DeclareUnicodeCharacter{1D54A}{\ensuremath{\mathbb{S}}}
\DeclareUnicodeCharacter{1D53C}{\ensuremath{\mathbb{E}}}

\newcommand{\unicodecalS}{\ensuremath{\mathcal{S}}}
\newcommand{\unicodecalT}{\ensuremath{\mathcal{T}}}
\newcommand{\unicodecalC}{\ensuremath{\mathcal{C}}}
\newcommand{\unicodecalD}{\ensuremath{\mathcal{D}}}
\newcommand{\unicodecalX}{\ensuremath{\mathcal{X}}}
\newcommand{\unicodecalN}{\ensuremath{\mathcal{N}}}
\newcommand{\unicodecalE}{\ensuremath{\mathcal{E}}}
\DeclareUnicodeCharacter{1D4D4}{\unicodecalE}
\DeclareUnicodeCharacter{1D4D2}{\unicodecalC}
\DeclareUnicodeCharacter{1D4D3}{\unicodecalD}
\DeclareUnicodeCharacter{1D4DE}{\mathcal{O}}
\DeclareUnicodeCharacter{1D4AA}{\mathcal{O}}
\DeclareUnicodeCharacter{210B}{\mathcal{H}}
\DeclareUnicodeCharacter{1D4E2}{\unicodecalS}
\DeclareUnicodeCharacter{1D4E3}{\unicodecalT}
\DeclareUnicodeCharacter{1D4E7}{\unicodecalX}
\DeclareUnicodeCharacter{1D4D0}{\ensuremath{\mathcal{A}}}
\DeclareUnicodeCharacter{1D4D1}{\ensuremath{\mathcal{B}}}
\DeclareUnicodeCharacter{1D4D6}{\ensuremath{\mathcal{G}}}
\DeclareUnicodeCharacter{1D4D7}{\ensuremath{\mathcal{H}}}
\DeclareUnicodeCharacter{1D4DB}{\ensuremath{\mathcal{L}}}
\DeclareUnicodeCharacter{1D4DC}{\ensuremath{\mathcal{M}}}
\DeclareUnicodeCharacter{1D4DD}{\unicodecalN}
\DeclareUnicodeCharacter{1D4B1}{\ensuremath{\mathcal{V}}}
\DeclareUnicodeCharacter{1D4E5}{\ensuremath{\mathcal{V}}}
\DeclareUnicodeCharacter{1D4E6}{\ensuremath{\mathcal{W}}}
\DeclareUnicodeCharacter{1D4E4}{\ensuremath{\mathcal{U}}}

\DeclareUnicodeCharacter{03B1}{\alpha}
\DeclareUnicodeCharacter{03B2}{\beta}
\DeclareUnicodeCharacter{03BC}{\mu}
\DeclareUnicodeCharacter{03B4}{\delta}
\DeclareUnicodeCharacter{03B5}{\varepsilon}
\DeclareUnicodeCharacter{03B7}{\eta}
\DeclareUnicodeCharacter{03BB}{\lambda}
\DeclareUnicodeCharacter{03C1}{\rho}
\DeclareUnicodeCharacter{03C8}{\psi}
\DeclareUnicodeCharacter{03C4}{\tau}
\DeclareUnicodeCharacter{03A8}{\Psi}
\DeclareUnicodeCharacter{03C3}{\sigma}
\DeclareUnicodeCharacter{03C6}{\varphi}
\DeclareUnicodeCharacter{03A6}{\Phi}
\DeclareUnicodeCharacter{03A3}{\Sigma}
\DeclareUnicodeCharacter{03D5}{\phi}
\DeclareUnicodeCharacter{03B8}{\theta}
\DeclareUnicodeCharacter{03C0}{\ensuremath{\pi}}
\DeclareUnicodeCharacter{0393}{\Gamma}
\DeclareUnicodeCharacter{0394}{\Delta}
\DeclareUnicodeCharacter{03BA}{\kappa}
\DeclareUnicodeCharacter{03BD}{\nu}
\DeclareUnicodeCharacter{03A9}{\Omega}
\DeclareUnicodeCharacter{03C9}{\omega}
\DeclareUnicodeCharacter{03D2}{\Upsilon}
\DeclareUnicodeCharacter{03C5}{\upsilon}

\DeclareUnicodeCharacter{25A0}{\blacksquare}
\DeclareUnicodeCharacter{25AA}{\blacksquare}
\DeclareUnicodeCharacter{2205}{\varnothing}
\DeclareUnicodeCharacter{2254}{\coloneq}
\DeclareUnicodeCharacter{03B3}{\gamma}

\newcommand{\hirayo}{\scaleobj{0.9}{\text{\usefont{U}{min}{m}{n}\symbol{'210}}}}
\DeclareUnicodeCharacter{3088}{\hirayo}
\DeclareFontFamily{U}{min}{}
\DeclareFontShape{U}{min}{m}{n}{<-> udmj30}{}

\newcommand\UnicodeWhiteRightPointingSmallTriangle{\triangleright}
\DeclareUnicodeCharacter{25B9}{\mathbin{\UnicodeWhiteRightPointingSmallTriangle}}
\newcommand\UnicodeWhiteDownPointingSmallTriangle{\triangledown}
\DeclareUnicodeCharacter{25BF}{\mathbin{\UnicodeWhiteDownPointingSmallTriangle}}
\newcommand\UnicodeWhiteUpPointingSmallTriangle{\scalemath{1}{-1}{{}^{\triangledown}}}
\DeclareUnicodeCharacter{25B5}{\mathbin{\UnicodeWhiteUpPointingSmallTriangle}}

\DeclareUnicodeCharacter{2080}{\ensuremath{{}_0}}
\DeclareUnicodeCharacter{2081}{\ensuremath{{}_1}}
\DeclareUnicodeCharacter{2082}{\ensuremath{{}_2}}
\DeclareUnicodeCharacter{2083}{\ensuremath{{}_3}}

\DeclareUnicodeCharacter{1D62}{\ensuremath{{}_i}}
\DeclareUnicodeCharacter{2C7C}{\ensuremath{{}_j}}
\DeclareUnicodeCharacter{02B3}{\ensuremath{{}^r}}
\DeclareUnicodeCharacter{02E1}{\ensuremath{{}^\ell}}
\DeclareUnicodeCharacter{1D48}{\ensuremath{{}^d}}
\DeclareUnicodeCharacter{1D50}{\ensuremath{{}^m}}
\DeclareUnicodeCharacter{1D58}{\ensuremath{{}^u}}
\DeclareUnicodeCharacter{209A}{\ensuremath{{}_p}}
\DeclareUnicodeCharacter{2096}{\ensuremath{{}_k}}

\DeclareUnicodeCharacter{2245}{\ensuremath{\cong}}
\DeclareUnicodeCharacter{2286}{\subseteq}

\DeclareUnicodeCharacter{22C5}{\cdot}
\DeclareUnicodeCharacter{25C3}{\ensuremath{\triangleleft}}
\DeclareUnicodeCharacter{25B9}{\ensuremath{\triangleright}}

\DeclareUnicodeCharacter{2217}{\ast}
\DeclareUnicodeCharacter{039E}{\Xi}
\DeclareUnicodeCharacter{2295}{\oplus}
\DeclareUnicodeCharacter{2297}{\otimes}
\DeclareUnicodeCharacter{214B}{\parr}
\DeclareUnicodeCharacter{2298}{\oslash}
\DeclareUnicodeCharacter{25C0}{\mathbin{\blacktriangleleft}}
\DeclareUnicodeCharacter{25C1}{\mathbin{\vartriangleleft}}
\DeclareUnicodeCharacter{22B3}{\mathbin{\triangleright}}
\DeclareUnicodeCharacter{22B2}{\mathbin{\triangleleft}}
\DeclareUnicodeCharacter{FF5C}{\mid}
\DeclareUnicodeCharacter{227A}{\mathbin{\prec}}
\DeclareUnicodeCharacter{227B}{\mathbin{\succ}}
\DeclareUnicodeCharacter{22A3}{\mathbin{\dashv}}
\DeclareUnicodeCharacter{219D}{\ensuremath{\leadsto}}
\DeclareUnicodeCharacter{2191}{\ensuremath{\uparrow}}
\DeclareUnicodeCharacter{1361}{\colon}

\DeclareUnicodeCharacter{29D1}{\mathrel{\multimapdotbothB}}
\DeclareUnicodeCharacter{29D2}{\mathrel{\multimapdotbothA}}
\DeclareUnicodeCharacter{22C4}{\mathbin{\diamond}}
\DeclareUnicodeCharacter{226B}{\mathrel{\gg}}
\DeclareUnicodeCharacter{25A1}{\mathbin{\square}}
\DeclareUnicodeCharacter{266F}{\sharp}

\DeclareUnicodeCharacter{2113}{\ell}
\DeclareUnicodeCharacter{2261}{\equiv}
\DeclareUnicodeCharacter{2099}{_n}
\DeclareUnicodeCharacter{2098}{_m}
\DeclareUnicodeCharacter{1D5AD}{\ensuremath{\mathsf{N}}}

\DeclareUnicodeCharacter{1D4DF}{\mathcal{P}}
\DeclareUnicodeCharacter{1D415}{\mathbf{V}}
\DeclareUnicodeCharacter{1D400}{\mathbf{A}}
\DeclareUnicodeCharacter{1D6C2}{\pmb{\alpha}}
\DeclareUnicodeCharacter{1D6C3}{\pmb{\beta}}
\DeclareUnicodeCharacter{1D6C4}{\pmb{\gamma}}
\DeclareUnicodeCharacter{1D6C5}{\pmb{\delta}}
\DeclareUnicodeCharacter{1D6C6}{\pmb{\epsilon}}
\DeclareUnicodeCharacter{1D6C8}{\pmb{\eta}}
\DeclareUnicodeCharacter{1D6DA}{\pmb{\omega}}

\newcommand\mydots{\makebox[0.6em][c]{.\hfil.\hfil.}}
\DeclareUnicodeCharacter{2026}{\mydots}
\DeclareUnicodeCharacter{226B}{\gg}

\DeclareUnicodeCharacter{22C9}{\ltimes}
\DeclareUnicodeCharacter{22CA}{\rtimes}

\newcommand{\unicodeRelationalComposition}{\fatsemi}
\DeclareUnicodeCharacter{2A3E}{\unicodeRelationalComposition}

\newcommand{\Pcms}{\kl{PCMs}}
\newcommand{\pcm}{\kl{PCM}}
\newcommand{\pcms}{\kl{PCMs}}

\newcommand{\Bool}{\mathsf{Bool}}
\newcommand{\join}{\vee}
\newcommand{\comp}{\fatsemi}
\newcommand{\obj}{_{\text{obj}}}
\newcommand{\id}{\mathrm{id}}
\newcommand{\bid}{\mathrm{bid}}
\newcommand{\undefi}{\uparrow}
\newcommand{\regaux}[2]{\smash{\begin{array}{@{}l@{}}
  \scriptstyle #2\\[-1.04em] %
  \scriptstyle #1
\end{array}}} %

\newcommand{\reg}[2]{\raisebox{.12em}{$\regaux{#1}{#2}$}}

\newcommand{\cat}[1]{\mathsf{#1}}

\newcommand{\GradMon}{\cat{GradMon}}

\newcommand{\MonCatl}{\cat{MonCat}_{\text{lax}}}

\newcommand{\TwoGradMon}{\mathbf{2}\text{-}\cat{GradMon}}
\newcommand{\TwoSymGradMon}{\cat{Sym}\mathbf{2}\text{-}\cat{GradMon}}
\newcommand{\TwoCartGradMon}{\cat{Cart}\mathbf{2}\text{-}\cat{GradMon}}
\newcommand{\GradMonTop}{\cat{GradMon}_{\top}}
\newcommand{\SymGradMonTop}{\cat{SymGradMon}_{\top}}
\newcommand{\CartGradMonTop}{\cat{CartGradMon}_{\top}}

\newcommand{\PCM}{\cat{PCM}}

\newcommand{\Cat}{\cat{Cat}}
\newcommand{\Set}{\cat{Set}}
\newcommand{\Eff}{\cat{Eff}}
\newcommand{\SymEff}{\cat{SymEff}}
\newcommand{\Freyd}{\cat{Freyd}}

\newcommand{\op}{^\mathsf{op}}

\newcommand{\nperp}{\mathbin{\rotatebox{90}{\ensuremath{\nvdash}}}}

\usepackage{tikz}

\newcommand{\venturieq}{
  \mathrel{
    \begin{tikzpicture}[baseline=-0.25em, line width=0.06em]
      \pgfmathsetmacro{\gap}{0.008ex}
      \pgfmathsetmacro{\len}{0.015em}
      \pgfmathsetmacro{\tip}{0.01em}
      \pgfmathsetmacro{\spread}{0.008em}
      \draw (-\tip, \spread) -- (0, \gap) -- (\len, \gap);
      \draw (-\tip, -\spread) -- (0, -\gap) -- (\len, -\gap);
    \end{tikzpicture}
  }
}

\usepackage{trimclip}
\makeatletter
\DeclareRobustCommand{\shortto}{%
  \mathpalette\short@to\relax
}

\newcommand{\short@to}[2]{%
  \clipbox{{0.4\width} 0 0 0}{$\m@th#1\vphantom{+}{\rightarrow}$}%
  }
\makeatother

 \makeatletter
 \def\slashedarrowfill@#1#2#3#4#5{%
 $\m@th\thickmuskip0mu\medmuskip\thickmuskip\thinmuskip\thickmuskip
 \relax#5#1\mkern-7mu%
  \cleaders\hbox{$#5\mkern-2mu#2\mkern-2mu$}\hfill
  \mathclap{#3}\mathclap{#2}%
 \cleaders\hbox{$#5\mkern-2mu#2\mkern-2mu$}\hfill
    \mkern-7mu#4$%
    }
    \def\rightslashedarrowfill@{%
    \slashedarrowfill@\relbar\relbar\mapstochar\rightarrow}
    \newcommand\xslashedrightarrow[2][]{%
    \ext@arrow 0055{\rightslashedarrowfill@}{#1}{#2}}
    \def\leftslashedarrowfill@{%
    \slashedarrowfill@\leftarrow\relbar\mapsfromchar\relbar}
    \newcommand\xslashedleftarrow[2][]{%
    \ext@arrow 0055{\leftslashedarrowfill@}{#1}{#2}}
    \makeatother

    \newcommand{\xlto}{\xslashedrightarrow}
    \newcommand{\lto}{\xlto{}}

\begin{document}

\begin{frontmatter}
  \title{Monoidal categories graded by \\ partial commutative monoids}%
  \author{Matthew Earnshaw\thanksref{a}}%
  \author{Chad Nester\thanksref{a}}
  \author{Mario Román\thanksref{b}\thanksref{c}}
   \address[a]{Institute of Computer Science, University of Tartu, Estonia}  							
  \address[b]{Department of Software Science, Tallinn University of Technology, Estonia} 
  \address[c]{Department of Computer Science, University of Oxford, United Kingdom} 
\begin{abstract} 
  Effectful categories have two classes of morphisms: \emph{pure} morphisms, which form a monoidal category; and \emph{effectful} morphisms, which can only be combined monoidally with central morphisms (such as the pure ones), forming a premonoidal category. This suggests seeing morphisms of an effectful category as carrying a \emph{grade} that combines under the monoidal product in a \emph{partially defined} manner. We axiomatize this idea with the notion of \emph{monoidal category graded by a partial commutative monoid} (PCM). Monoidal categories arise as the special case of grading by the singleton PCM, and effectful categories arise from grading by a two-element PCM. Further examples include grading by powerset PCMs, modelling non-interfering parallelism for programs accessing shared resources, and grading by intervals, modelling bounded resource usage.
  We show that effectful categories form a coreflective subcategory of PCM-graded monoidal categories; introduce cartesian structure, recovering Freyd categories; and describe PCM-graded monoidal categories as monoids by viewing a PCM as a thin promonoidal category.%
\end{abstract}
\begin{keyword}
  monoidal categories, premonoidal categories, effectful categories, Freyd categories, partial commutative monoids
\end{keyword}
\end{frontmatter}

\maketitle

\section{Introduction}
\emph{Effectful categories}, or \emph{generalized Freyd categories}, refine monoidal categories into a structure fit for the semantics of effectful programming languages \cite{jeffrey1997premonoidal1,LEVY2003182,roman2022promonads}. They do this by dividing morphisms between a category of \emph{pure computations} and a category of \emph{effectful computations}, with a functor including the former amongst the latter. Since pure computations are independent of one another, their parallel execution is well defined, and is modelled by a monoidal product. Effectful computations instead form a \emph{premonoidal category}: since in general they depend on one another, these morphisms may only be combined monoidally with \emph{central} morphisms\footnote{"Central" morphisms in a premonoidal category are those \emph{interchanging} with all other morphisms. This defines a monoidal product on the category when all morphisms are central. The "center" of a premonoidal category is monoidal.}, such as morphisms in the image of the functor from the category of pure computations \cite{power_robinson_1997}. This
two-part %
categorization of morphisms
suggests taking the perspective that morphisms in an effectful category have an algebraic \emph{grading}.

In the existing literature on \emph{graded effect systems} \cite{kammar2012algebraic,10.1145/73560.73564,mcdermott:LIPIcs.FSCD.2025.28,10.1145/601775.601776} and their semantics in \emph{graded monads} \cite{katsumata2014parametric,mellies2017,orchard2014semantic,Smirnov2008}, grades typically combine under \emph{sequential composition} of morphisms. The grading introduced here is different in two respects. Firstly, grades combine only under the \emph{monoidal product} of morphisms. The monoidal product of a pure and an effectful computation, for example, should yield an effectful computation. On the other hand, sequentially composing two pure computations should yield a pure computation, and likewise for effectful computations. Secondly, the combination of grades need only be \emph{partially defined}: for example, the monoidal product of two effectful computations must be undefined. The central contribution of this paper, laid out starting in \Cref{sec:gradpcm}, is the notion of \emph{monoidal category graded by a partial commutative monoid (PCM)}, which axiomatizes this idea.

In a monoidal category graded by a PCM, every morphism has a grade, taken from a partial commutative monoid $(E, \oplus, 0)$. Thus they comprise a family of categories $\{\mathbb{C}_a\}_{a \in E}$ indexed by the grading PCM, and monoidal product operations of the type
$$(\otimes)_{a,b} : \mathbb{C}_a(X;Y) \times \mathbb{C}_b(X';Y') \to \mathbb{C}_{a \oplus b}(X \otimes X' ; Y \otimes Y').$$
Crucially, $(\otimes)_{a,b}$ exists only when $a \oplus b$ is defined in the grading PCM. If we grade by the singleton PCM, $\mathbf{1}$, we recover monoidal categories. Effectful categories are isomorphic to monoidal categories graded by the powerset PCM, $\bf{2} \cong \mathscr{P}(1)$, in which $1 \oplus 1$ is undefined.

More generally, we can consider the powerset \pcm{}, $\mathscr{P}(X)$, over an arbitrary set $X$, whose operation of union is defined only on disjoint subsets. A monoidal category graded by the powerset \pcm{} models safe parallelism of programs accessing a set of heap locations, file handles or other such \emph{devices} in $X$: their monoidal product exists only when they use disjoint devices. Grading by an interval $[0,r]$, which forms a \pcm{} under \emph{bounded addition}, captures the situation of bounded \emph{bandwidth}, where morphisms are graded by their bandwidth usage, and may use no more than the bound \emph{in parallel}. All of these examples, and more, are given in more detail in \Cref{sec:gradpcm}.

In \Cref{sec:effectful}, we establish an isomorphism between the category of effectful categories and the category of "$\bf{2}$"-graded monoidal categories, and exhibit the latter as a coreflective subcategory of \pcm{}-graded monoidal categories. We go on to introduce symmetric and cartesian structure on \pcm{}-graded monoidal categories, extending the case of grading by $\bf{2}$ to find an isomorphism between the category of cartesian $"\bf{2}"$-graded monoidal categories and the category of Freyd categories.

In \Cref{sec:globalcat} we show that for certain well behaved \pcms{} $E$, we can assemble the family of categories $\{\mathbb{C}_a\}_{a \in E}$ of an $E$-graded monoidal category into a single category in two ways, which makes sense of the sequential composition of heterogeneously graded morphisms.

\Cref{sec:gradprom} lays out a more formal perspective on our central definition, reformulating \pcm{}-graded monoidal categories as monoids in a monoidal category. The key idea in this reformulation is to view a \pcm{} as a thin promonoidal category. This is enough structure to obtain a monoidal structure on presheaves, which forms the heart of this reformulation.

\subsection{Related work}

Premonoidal categories were introduced by Power and Robinson \cite{power_robinson_1997} as a reformulation of Moggi’s monadic semantics of effectful programming languages \cite{moggi1991notions}, capturing structure present in the Kleisli category of strong monads. Freyd categories, introduced by Levy, Power and Thielecke \cite{10.1007/BFb0014560,levy2004}, extend these with a cartesian base and inclusion functor, providing semantics for effectful call-by-value languages. Jeffrey \cite{jeffrey1997premonoidal1} considered an additional symmetric monoidal category of central computations to model control-flow graphs. Bonchi, %
Di Lavore and Román \cite{11186351} have used the resulting \emph{effectful triples} to define effectful Mealy machines. Non-cartesian Freyd categories or \emph{effectful categories} have recently %
been applied to the semantics of SSA by Ghalayini and Krishnaswami \cite{ghalayini2024denotationalsemanticsssa}. 

A substantial literature studies the syntax and semantics of languages in which programs are equipped with grades tracking quantitative or qualitative information such as effects, costs, or security levels. Key examples include graded monads (Katsumata \cite{katsumata2014parametric}; Melliès \cite{mellies2017}; Orchard, Petricek and Mycroft \cite{orchard2014semantic}), indexed monads (Maillard and Melliès \cite{10.1109/LICS.2015.45}), parameterized monads (Atkey \cite{ATKEY_2009}), and graded Freyd categories \cite{gaboardi2021graded}, where grades combine under sequential composition. By contrast, sequential composition in PCM-graded monoidal categories preserves the grade. Grades instead combine under the monoidal product, and moreover govern its existence: the monoidal product is defined only when the sum of the grades is defined in the grading \pcm{}. This makes \pcm{}-graded monoidal categories well suited to modelling ambient resources such as memory locations or bandwidth, rather than sequentially accumulating information.

\Pcms{}, especially qua \emph{separation} and \emph{effect algebras}, are widely used to model shared resources or \emph{ghost state} in program logics and verification \cite{dinsdale2013views,10.1145/2398856.2364536,10.1007/978-3-642-28869-2_19,10.1145/2429069.2429134}. Undefinedness typically models \emph{disjointness}, as in Reynolds' separation logic \cite{10.5555/645683.664578,calcagno2007local,PYM2004257,10.1145/2676726.2676980}, with heaps providing the paradigmatic example.

Partial monoidal categories have been defined in the literature on categorical quantum mechanics (Coecke and Lau \cite{Coecke2013-COECCR}). These restrict the monoidal product to a full subcategory of the product category, modelling space-like separation of resources. Hefford and Kissinger \cite{Hefford:2022wan} relate these to promonoidal categories, showing they generally differ but agree in special cases. In our notion, objects form a monoid and so always have a monoidal product, and the monoidal product of morphisms is controlled by the grading.

Sarkis and Zanasi \cite{sarkis_et_al:LIPIcs.CALCO.2025.5} study monoidal categories graded by a symmetric semicartesian strict monoidal category. Unlike in our notion, these %
grades combine totally, and by the same operation, under both sequential and monoidal composition. %

In prior work \cite{earnshaw2025resourceful}, we showed that every effectful category has an underlying signature which is effectively ``graded by'' a powerset, with a left adjoint that constructs free effectful categories. This relies on a weak notion of signature morphism; its connection to the present work remains to be clarified further. %

Heunen and Sigal \cite{heunen2023duoidally} defined enriched Freyd categories over a duoidal category $\mathcal{V}$. In \Cref{sec:gradprom}, we show that PCM-graded monoidal categories can be captured as an instance of this definition, where $\mathcal{V}$ is the category of presheaves on a promonoidal category encoding the PCM.

\section{Partial commutative monoids} \label{sec:pcms}
This section introduces \pcms{} and some basic propositions on them. Readers may wish to start at \Cref{sec:gradpcm}, consulting this section as a reference.

To avoid proliferation of side-conditions on definedness when working with \pcms{}, it is helpful to introduce the following ``Kleene equality'' notation.

\begin{definition}
  For partial functions $f, g : A \rightharpoonup B$, we write $f(a) \simeq g(a)$ to denote that if either side is defined then both are, and they are equal.
  We write $f(a) \venturieq g(a)$ to denote that if the left-hand side is defined, then so is the right, and they are equal. We write $f(a)\!\undefi$ to denote that $f$ is undefined at $a$.
\end{definition}

\begin{definition}
  \AP A \intro{partial commutative monoid} $(E, \oplus, 0)$ is a set $E$, a
  partial function $\oplus : E \times E \rightharpoonup E$ and an element $0 \in E$
  satisfying
  \begin{gather*}
  a \oplus b \simeq b \oplus a, \\
  a \oplus 0 = a = 0 \oplus a, \\
  (a \oplus b) \oplus c \simeq a \oplus (b \oplus c).
  \end{gather*}
  In view of associativity, we may unambiguously write $a \oplus b \oplus c$.
  We write $a \perp b$ (``$a$ is ""orthogonal"" to $b$'') when $a \oplus b$ is defined; conversely, we write $a \nperp b$ when $a \oplus b$ is undefined.
\end{definition}

 Of course, a commutative monoid is a \pcm{}, and we shall consider some total examples in the following.

\begin{definition}
  A ""homomorphism of partial commutative monoids"" is a (total) function $f : E \to E'$ satisfying $f(0_E) = 0_{E'}$ and $f(a \oplus b) \venturieq f(a) \oplus f(b).$
\end{definition}

\Pcms{} and their homomorphisms form a category, $"\PCM"$. Every \pcm{} gives rise to a canonical preorder, which we will use extensively in the following.

\begin{definition}
  The ""extension preorder"" $(E,\leqslant)$ on the elements of a partial commutative monoid $(E, \oplus, 0)$ is the preorder defined by
  $a \leqslant b \text{ if and only if there exists } c \text{ such that } a \oplus c = b.$
\end{definition}

\begin{lemma} \label{lem:leqslant-cong}
  For every element $b$ of a \pcm{} $(E, \oplus, 0)$, the operations $(- \oplus b)$ are monotonic with respect to the "extension preorder": $x \leqslant y$ and $y \perp b$ implies $x \perp b$ and $x \oplus b \leqslant y \oplus b$.
\end{lemma}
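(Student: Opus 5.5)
We unfold the definition of the extension preorder and use only the Kleene-equality form of commutativity and associativity. Suppose $x \leqslant y$ and $y \perp b$. By definition of $\leqslant$ there is some $c$ with $x \oplus c = y$. Since $y \perp b$, the expression $(x \oplus c) \oplus b$ is defined, with value $y \oplus b$.

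The key step rewrites this defined expression with the Kleene-equality axioms, so that definedness of one side transfers to the other. First, associativity gives $(x \oplus c) \oplus b \simeq x \oplus (c \oplus b)$. Commutativity, $c \oplus b \simeq b \oplus c$, lets us replace the inner term, so $x \oplus (b \oplus c)$ is defined. Associativity again gives $x \oplus (b \oplus c) \simeq (x \oplus b) \oplus c$. Hence $(x \oplus b) \oplus c$ is defined and equals $y \oplus b$. Summarising:
\[
  y \oplus b = (x \oplus c) \oplus b \simeq x \oplus (c \oplus b) \simeq x \oplus (b \oplus c) \simeq (x \oplus b) \oplus c .
\]

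Definedness of $(x \oplus b) \oplus c$ requires in particular that the inner term $x \oplus b$ be defined, which is exactly $x \perp b$. The same equation exhibits $c$ as a witness that $x \oplus b \leqslant y \oplus b$.

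The only point needing care is the reading of $\simeq$ for compound expressions: an outer $\oplus$ is defined only when its arguments are. This convention is what lets us extract $x \perp b$ from definedness of $(x \oplus b) \oplus c$. Under the standard reading of the associativity axiom it is automatic, so no real obstacle arises.
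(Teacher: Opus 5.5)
Your proof is correct and is essentially the paper's argument: both take the witness $c$ with $x \oplus c = y$ and rewrite $(x \oplus c) \oplus b$ to $(x \oplus b) \oplus c$ by associativity, then commutativity, then associativity again, so that $c$ witnesses $x \oplus b \leqslant y \oplus b$. You also spell out how $x \perp b$ follows from the definedness of the compound term under the Kleene-equality reading, a step the paper leaves implicit.
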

\begin{proof}
  Since $x \leqslant y$ we have $\exists c$, $x \oplus c = y$ by definition.
  Let $b$ be an element of $E$ and $y \perp b$. Then $y \oplus b = (x \oplus c) \oplus b = (x \oplus b) \oplus c$, applying associativity twice and commutativity.
  That is, $x \oplus b \leqslant y \oplus b$.
\end{proof}

Often we shall have $a \leqslant b$ just when $a$ is a ``less capacious'' grade than $b$, that is, $b$ ``extends'' $a$. Clearly, $0$ is a least element in the "extension preorder". \Pcms{} with a top element in their "extension preorder", such as \emph{effect algebras}, feature prominently in the following. Let us now introduce a few examples of \pcms{}.

\begin{example} \label{pcmone}
  The \pcm{} $""\bf{1}""$ is the singleton with the unique total operation.
\end{example}

\begin{example} \label{pcm2}
  The \pcm{} $""\bf{2}""$ has two elements ($0$ and $1$) with partial operation,
  \begin{gather*}
    0 \oplus 0 = 0 \qquad 0 \oplus 1 = 1 \oplus 0 = 1 \qquad 1 \nperp 1.
  \end{gather*}
\end{example}

The \pcm{} $\bf{2}$ is isomorphic to the powerset \pcm{} of $\bf{1}$, defined as follows.

\begin{example} \label{powersetpcm}
  The powerset of a set, $\mathscr{P}(X)$, has a "partial commutative monoid" structure $(\uplus, \varnothing)$ defined by taking the union of subsets only when they are disjoint
  $$S \uplus T := \begin{cases} S \cup T & \text{if } S \cap T = \varnothing \\ \undefi & \text{otherwise}.\end{cases}$$
\end{example}

\begin{definition} \label{prodpcm}
  Given a family of \pcms{} $\{(A_i,\oplus_i,0_i)\}_{i \in I}$, their product has carrier
  $\prod_{i \in I} A_i$ and operation
  $$
    (a_i)_{i \in I} \oplus (b_i)_{i \in I} := \begin{cases}
      (a_i \oplus_i b_i)_{i \in I} & \text{if } a_i \perp b_i \text{ for all } i \in I, \\
      \undefi & \text{otherwise,}
    \end{cases} \qquad   \text{with unit}\ (0_i)_{i \in I}.
  $$
\end{definition}

\emph{Separation algebras} and \emph{effect algebras} are \pcms{} with extra properties/structure making them particularly well behaved as gradings.

\begin{definition} \textup{(Calcagno, O'Hearn, Yang \cite{calcagno2007local})}
  A \intro{separation algebra} is a \pcm{} that is cancellative: if $a \oplus c = b \oplus c$ then $a = b$.
\end{definition}

\begin{definition} \textup{(Foulis and Bennett \cite{Foulis1994})}
  An ""effect algebra"" is a \pcm{} $(E,\oplus,0)$ equipped with a unary operation $(-)^{⊥} ፡ E → E$, such that $a^\bot$ is the unique element such that $a \oplus a^\bot = 1$, where $1 := 0^{\bot}$.
\end{definition}

Powerset \pcms{} are separation algebras, for example. It follows that an element $c$ witnessing $a \leqslant b$ in the "extension preorder" of a "separation algebra" is \emph{unique}. This moreover implies that the extension preorder of a separation algebra is a \emph{poset}. Every "effect algebra" is a "separation algebra" \cite[Theorem 2.5]{Foulis1994}. In the "extension preorder" of an "effect algebra", $1$ is the "top element", as witnessed by $x \oplus x^{\bot} = 1$. The archetypical "effect algebras" are \emph{intervals} (see \Cref{ex:intgrad}).

\section{Monoidal categories graded by partial commutative monoids} \label{sec:gradpcm}

In this section we introduce our central definition, its basic properties, and several examples. We shall refer to elements of \pcms{} as \emph{grades}. \emph{Partiality} of their operation will correspond to partial definedness of monoidal products. \emph{Commutativity} will correspond to the fact that the grade of a monoidal product does not change if the factors are swapped.

\begin{definition} \label{gradmoncat}
  \AP For a "partial commutative monoid" $(E,\oplus,0)$, an ""$E$-graded monoidal category"" consists of
  \begin{itemize}
  \item a monoid of objects, $(ℂ\obj, ⊗, I)$, 
  \item for each grade, $a ∈ E$, a category $ℂ_a$ with set of objects $ℂ\obj$, with composition denoted by $$(⨾)_a ፡ ℂ_a(X;Y) × ℂ_a(Y;Z) → ℂ_a(X;Z),$$ and identities at grade 0 denoted $\id_{X}$,
    \item for each $a \leqslant b$ in the "extension preorder" an identity-on-objects regrading functor $$(-)_a^{b}  : ℂ_{a} → ℂ_{b},$$ allowing us to denote identities at grade $a$ by $(\id_{X})\reg{0}{a}$, and
  \item  monoidal product operations for every pair of grades $a$ and $b$ such that $a \perp b$
  $$(⊗)_{a,b} ፡ ℂ_a(X;Y) × ℂ_b(X';Y') → ℂ_{a ⊕ b}(X ⊗ X';Y ⊗ Y').$$
  \end{itemize}
  These are subject to the following axioms, whenever well typed (and parametric in the omitted subscripts),
  
  \setlist[description]{font=\normalfont\textsc}
  \begin{description}[leftmargin=2.5cm, labelwidth=2cm, labelsep=0.5cm, itemsep=0.1cm]
  \item[(\textsc{Reg-Act})]  \label{regfunc} $f\reg{a}{a} = f$ and $(f\reg{a}{b})\reg{b}{c} = f\reg{a}{c}$, for $f \in \mathbb{C}_a$

  \item[(\textsc{Reg-$\otimes$})] \label{regotimes}
    $(f \otimes g)\reg{a \oplus b}{c \oplus d} = f\reg{a}{ c} \otimes g\reg{b}{ d}$, for $f \in \mathbb{C}_a$, $g \in \mathbb{C}_b$,%

  \item[($\otimes$-\textsc{U-A})] \label{otimesunitassoc} $f \otimes \id_I = f = \id_I \otimes f$ and  $(f \otimes g) \otimes h = f \otimes (g \otimes h)$

  \item[($\otimes$-\textsc{ID})] \label{otimesid} $\id_X \otimes \id_Y = \id_{X \otimes Y}$

  \item[(\textsc{Inter})] \label{inter} $(f \otimes g) \comp (h \otimes k) = (f \comp h) \otimes (g \comp k)$
    whenever  
    $f \in ℂ_a(X;Y)$, $h \in ℂ_a(Y;Z)$, $g \in ℂ_b(X';Y')$, and $k \in ℂ_b(Y';Z')$.
\end{description}

\end{definition}

The crucial point of \Cref{gradmoncat} is that $\otimes_{a,b}$ only exists when $a \perp b$: the \pcm{} structure on grades controls our ability to take the monoidal product of morphisms.
The data of a \pcm{}-graded category can be formulated in terms of a monoid in a monoidal category of lax monoidal functors, as in \Cref{sec:gradprom}.

We can show that regradings are given by monoidal products with the identity on $I$ in some grade.

\begin{proposition}
 Let $f \in \mathbb{C}_a$ be a morphism in an "$E$-graded monoidal category", where $a \leqslant b$. Then $f\reg{a}{b} = f \otimes (\id_I)\reg{0}{c}$, for every $c$ witnessing $a \leqslant b$.
\end{proposition}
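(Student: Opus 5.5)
The proof is a short calculation that combines unitality of the monoidal product with the axiom (\textsc{Reg-$\otimes$}). Fix $f \in \mathbb{C}_a(X;Y)$ and an element $c$ with $a \oplus c = b$. In particular $a \perp c$, so the monoidal product $(\otimes)_{a,c}$ is available and $f \otimes (\id_I)\reg{0}{c}$ is a well-typed morphism of $\mathbb{C}_{a \oplus c}(X \otimes I; Y \otimes I) = \mathbb{C}_b(X;Y)$. Here we use that the objects form a strict monoid, so $X \otimes I = X$.

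By ($\otimes$-\textsc{U-A}) we have $f = f \otimes \id_I$, where $\id_I \in \mathbb{C}_0$ and the product is taken at grades $(a,0)$, landing in grade $a \oplus 0 = a$. Regrading both sides along $a \leqslant b$ gives
\[ f\reg{a}{b} = (f \otimes \id_I)\reg{a \oplus 0}{a \oplus c}. \]
The target grade $b$ is written as $a \oplus c$, which is legitimate because $a \leqslant a$ (witnessed by $0$) and $0 \leqslant c$. Now apply (\textsc{Reg-$\otimes$}) with the grade pairs $a \leqslant a$ and $0 \leqslant c$. This yields
\[ (f \otimes \id_I)\reg{a \oplus 0}{a \oplus c} = f\reg{a}{a} \otimes (\id_I)\reg{0}{c}. \]
Finally, (\textsc{Reg-Act}) gives $f\reg{a}{a} = f$, so the right-hand side is $f \otimes (\id_I)\reg{0}{c}$, as required.

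The one point needing care is how to read (\textsc{Reg-$\otimes$}). The regrading functor $(-)\reg{a}{b}$ is indexed only by the pair of grades $a \leqslant b$, not by a chosen witness. We must therefore check that the instance used is well typed: it needs $a \leqslant a$, $0 \leqslant c$ and $a \perp c$, with $a \oplus 0 = a$ and $a \oplus c = b$. All of these hold by hypothesis. Because the regrading into $\mathbb{C}_b$ does not depend on the witness, the argument works for every $c$ witnessing $a \leqslant b$, which gives the ``for every $c$'' clause. I expect no genuine obstacle beyond this bookkeeping.
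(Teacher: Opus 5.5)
Your proof is correct and is essentially the paper's argument: the paper runs the same chain (\textsc{Reg-Act}, then \textsc{Reg-$\otimes$} with grades $a \leqslant a$ and $0 \leqslant c$, then $\otimes$-\textsc{U-A}) starting from $f \otimes (\id_I)\reg{0}{c}$, and you simply traverse it in the reverse direction. Your remark that regrading depends only on the pair $a \leqslant b$ and not on the witness is exactly why the identity holds for every witness $c$.
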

\begin{proof}
    $f \otimes (\id_I)\reg{0}{c} \overset{\textsc{(Reg-Act)}}{=}
  f\reg{a}{a} \otimes (\id_I)\reg{0}{c} \overset{\textsc{(Reg-$\otimes$)}}{=}
  (f \otimes \id_I)\reg{a}{a \oplus c} \overset{\textsc{($\otimes$-U-A)}}{=}
  f\reg{a}{b}.$ 
\end{proof}

When $E$ is a "separation algebra", the witnesses of $a \leqslant b$ are necessarily unique. Moreover, when $E$ is an "effect algebra", not only do we have regrading functors $(-)\reg{a}{1}$ for every grade $a$, their behaviour is given by monoidal product with the identity on $I$ at the grade $a^\bot$, i.e. $f\reg{a}{1} = f \otimes (\id_I)\reg{0}{a^\bot}$.

  Beware of \emph{red herrings} \cite{redherring}: \pcm{}-graded monoidal categories are not, in general, "monoidal categories" equipped with extra stuff, structure, or properties.
  However, we do have the following.%

\begin{lemma} \label{idem}
  Let $(\mathbb{C}, \otimes, I)$ be an "$E$-graded monoidal category", and let $e$ be an idempotent in $E$. Then the category $\mathbb{C}_e$ is a "strict monoidal category" with monoidal product $\otimes_{e,e}$ and unit $I$.
\end{lemma}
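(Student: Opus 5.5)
Since $e$ is idempotent, $e \perp e$ and $e \oplus e = e$, so the operation $(\otimes)_{e,e}$ exists and has type $\mathbb{C}_e(X;Y) \times \mathbb{C}_e(X';Y') \to \mathbb{C}_e(X \otimes X'; Y \otimes Y')$. That is, it is an operation internal to $\mathbb{C}_e$. On objects we take the given monoid $(\mathbb{C}\obj, \otimes, I)$, which is already strictly associative and unital. The plan is to check the axioms of a strict monoidal category in turn, each reduced to one of the axioms of \Cref{gradmoncat} specialised to grades $e$.

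First comes functoriality of $\otimes_{e,e} : \mathbb{C}_e \times \mathbb{C}_e \to \mathbb{C}_e$. Preservation of composition is exactly (\textsc{Inter}) with $a = b = e$. For identities, note that the identity of $\mathbb{C}_e$ at $X$ is $(\id_X)\reg{0}{e}$, which exists since $0 \leqslant e$. We then compute
\[
(\id_X)\reg{0}{e} \otimes (\id_Y)\reg{0}{e} = (\id_X \otimes \id_Y)\reg{0}{e} = (\id_{X\otimes Y})\reg{0}{e},
\]
using (\textsc{Reg-$\otimes$}) with $0 \oplus 0 = 0$ and $e \oplus e = e$, followed by ($\otimes$-\textsc{ID}). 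So identities are preserved.

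Next come strict associativity and unitality on morphisms. Associativity $(f\otimes g)\otimes h = f \otimes (g \otimes h)$ for $f,g,h \in \mathbb{C}_e$ is directly ($\otimes$-\textsc{U-A}); both sides live in grade $e\oplus e\oplus e = e$. Unitality is the step needing care, and I expect it to be the only real obstacle. The unit of $\mathbb{C}_e$ must be $(\id_I)\reg{0}{e}$, whereas ($\otimes$-\textsc{U-A}) only gives $f \otimes \id_I = f$ with $\id_I$ at grade $0$, and $f \otimes \id_I$ lands in grade $e \oplus 0$ rather than being a use of $\otimes_{e,e}$. To bridge this, apply the earlier proposition on regrading with $a = e$, $b = e$ and witness $c = e$, which is valid since $e \oplus e = e$. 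It gives $f\reg{e}{e} = f \otimes (\id_I)\reg{0}{e}$, and by (\textsc{Reg-Act}) the left side is $f$. Symmetrically, commutativity of $\oplus$ and the analogous computation (\textsc{Reg-Act}), (\textsc{Reg-$\otimes$}), ($\otimes$-\textsc{U-A}) give $(\id_I)\reg{0}{e} \otimes f = f$.

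Since the objects already form a monoid and the associator and unitors are identities, no coherence conditions remain. Hence $(\mathbb{C}_e, \otimes_{e,e}, I)$ is a strict monoidal category.
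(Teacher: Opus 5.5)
Your proof is correct and takes the same approach as the paper, which observes that $\otimes_{e,e}$ has the required type because $e \oplus e = e$, and that the strict monoidal axioms are exactly ($\otimes$-\textsc{U-A}), ($\otimes$-\textsc{ID}) and (\textsc{Inter}) at grade $e$. Your version is more careful than the paper's one-line argument: the identities and the unit of $\mathbb{C}_e$ are the regraded $(\id_X)\reg{0}{e}$, whereas ($\otimes$-\textsc{U-A}) and ($\otimes$-\textsc{ID}) speak of grade-$0$ identities, so these axioms must be transported to grade $e$ using (\textsc{Reg-$\otimes$}), (\textsc{Reg-Act}) and the regrading proposition with witness $c = e$, and that witness is valid precisely because $e$ is idempotent.
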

\begin{proof}
  If $e = e \oplus e$ is an idempotent, $e \perp e$ and so the monoidal product operation $(\otimes)_{e,e}$ has the type required of a "strict monoidal structure", $ℂ_e(X;Y) \times ℂ_e(X';Y') \to ℂ_e(X \otimes X'; Y \otimes Y')$.
  The axioms that these must satisfy (\Cref{moncat}) are exactly given by %
  \textsc{$\otimes$-\textsc{U-A}}, \textsc{$\otimes$-\textsc{ID}} and \textsc{Inter} of \Cref{gradmoncat}.
\end{proof}

\begin{corollary} \label{zeromon}
  For every "$(E, \oplus, 0)$-graded monoidal category", the category $ℂ_0$ at the identity of $E$ is a "monoidal category".
\end{corollary}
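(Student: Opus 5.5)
The corollary will follow directly from \Cref{idem}, once we check that the unit $0$ of the \pcm{} is an idempotent. By the unit axiom of a "partial commutative monoid", $a \oplus 0 = a$ for every $a \in E$. Taking $a = 0$ gives $0 \oplus 0 = 0$. In particular $0 \perp 0$ and $0$ is idempotent. Applying \Cref{idem} with $e = 0$ then shows that $\mathbb{C}_0$ is a "strict monoidal category", with monoidal product $\otimes_{0,0}$ and unit $I$, and hence a "monoidal category".

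It may be worth making explicit where each piece of the strict monoidal structure comes from, although this is already handled in the proof of \Cref{idem}. On objects, the product is the monoid $(\mathbb{C}\obj, \otimes, I)$. On morphisms, $\otimes_{0,0}$ lands in $\mathbb{C}_{0 \oplus 0} = \mathbb{C}_0$. Functoriality in two variables comes from \textsc{Inter} together with \textsc{$\otimes$-ID}. The latter is stated for the identities $\id_X$, and by definition these live at grade $0$, so no regrading is involved. Strict unitality and associativity come from \textsc{$\otimes$-U-A}; here $\id_I$ also lives at grade $0$, and the grade $(0 \oplus 0) \oplus 0 = 0 \oplus (0 \oplus 0) = 0$ is consistent on both sides.

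I expect no real obstacle. The only point needing care is that the unit identities in \textsc{$\otimes$-U-A} use $\id_I$ at grade exactly $0$, which is what grade-$0$ morphisms require. So the argument reduces to the one-line observation $0 \oplus 0 = 0$.
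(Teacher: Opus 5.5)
Your proposal is correct and follows the paper's route exactly: the corollary comes from instantiating \Cref{idem} at the idempotent $e = 0$, which is justified by $0 \oplus 0 = 0$ from the unit law of the PCM. Your breakdown of which axioms supply each part of the strict monoidal structure is accurate, but it only repeats what the proof of \Cref{idem} already covers.
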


The \pcm{} $\bf{1}$ (\Cref{pcmone}) therefore provides our first example, which provides a sanity check for the definition of "$E$-graded monoidal category".

\begin{example}
  A $"\bf{1}"$-graded monoidal category is precisely a strict "monoidal category".
\end{example}

\begin{lemma} \label{toppremon}
  Let $\mathbb{C}$ be an $E$-graded monoidal category and $a \in E$ be a grade. Then the category $ℂ_a$ is a "strict premonoidal category" with
  $A \ltimes B = A \rtimes B := A \otimes B$ and whiskering functors
  \begin{align*}
    (A \ltimes -) \quad &:= \quad (\id_A \otimes_{0,a} -) : ℂ_a(X;Y) \to ℂ_a(A \otimes X; A \otimes Y) \\
    (- \rtimes A) \quad &:= \quad (- \otimes_{a,0} \id_A) : ℂ_a(X;Y) \to ℂ_a(X \otimes A; Y \otimes A).
  \end{align*}
\end{lemma}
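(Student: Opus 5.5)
We verify the axioms of a strict premonoidal category directly from the axioms of \Cref{gradmoncat}. The standard definition asks for a monoid of objects (we take $(ℂ\obj, \otimes, I)$); for each object $A$, functors $(A \ltimes -)$ and $(- \rtimes A)$ agreeing on objects; and equations saying whiskering is unital, associative, and that left and right whiskerings commute with one another. The whiskering operations are well typed: $0 \perp a$ with $0 \oplus a = a = a \oplus 0$, so $\otimes_{0,a}$ and $\otimes_{a,0}$ land in $ℂ_a$.

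\textbf{Functoriality.} For $f \in ℂ_a(X;Y)$ and $g \in ℂ_a(Y;Z)$, \textsc{Inter} at grades $(0,a)$ gives
\[
(\id_A \otimes f) \comp (\id_A \otimes g) = (\id_A \comp \id_A) \otimes (f \comp g) = \id_A \otimes (f \comp g).
\]
For identities, the identity of $ℂ_a$ at $X$ is $(\id_X)\reg{0}{a}$. By \textsc{Reg-Act} and \textsc{Reg-$\otimes$} (with $0 \oplus 0 \leqslant 0 \oplus a$ witnessed by $a$),
\[
\id_A \otimes (\id_X)\reg{0}{a} = (\id_A)\reg{0}{0} \otimes (\id_X)\reg{0}{a} = (\id_A \otimes \id_X)\reg{0}{a}.
\]
By \textsc{$\otimes$-ID} this equals $(\id_{A\otimes X})\reg{0}{a}$. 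The right whiskering is symmetric.

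\textbf{Coherence equations.} These are the following.
\begin{itemize}
\item Unit: $I \ltimes f = f = f \rtimes I$, immediate from \textsc{$\otimes$-U-A}.
\item Associativity of whiskering:
\[
(A \otimes B) \ltimes f = A \ltimes (B \ltimes f), \qquad f \rtimes (A \otimes B) = (f \rtimes A) \rtimes B.
\]
Use \textsc{$\otimes$-U-A} associativity together with \textsc{$\otimes$-ID} to rewrite $\id_{A\otimes B} = \id_A \otimes \id_B$, at grades $0,0,a$.
\item Mixed associativity: $(A \ltimes f) \rtimes B = A \ltimes (f \rtimes B)$, which is the associativity $(\id_A \otimes f) \otimes \id_B = \id_A \otimes (f \otimes \id_B)$.
\end{itemize}
Since the paper defines $A\ltimes B = A \rtimes B = A\otimes B$ on objects, the condition that both whiskerings give the same object $A \otimes B$ holds by definition.

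\textbf{Main obstacle.} The step needing care is the identity law. There the identity of $ℂ_a$ is a regraded identity, not $\id_X$ itself, so \textsc{Reg-$\otimes$} is essential. We must also check that the subscript bookkeeping in \textsc{$\otimes$-U-A} is parametric in grades: for example $(0 \oplus 0) \oplus a = 0 \oplus (0 \oplus a)$, so both sides live in $ℂ_a$. No centrality claims are required by the statement, so nothing further is needed.
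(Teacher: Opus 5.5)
Your proposal is correct and follows the paper's approach. Both note that $0 \oplus a = a = a \oplus 0$ makes the whiskerings well typed, then verify the strict premonoidal laws directly from \textsc{$\otimes$-U-A}, \textsc{$\otimes$-ID} and \textsc{Inter}; your additional explicit check that whiskering preserves the grade-$a$ identity $(\id_X)\reg{0}{a}$ via \textsc{Reg-Act} and \textsc{Reg-$\otimes$} is correct, and it fills in a step the paper's one-line sketch leaves implicit.
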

\begin{proof}
  The whiskering functors are defined since $0 \oplus a = a = a \oplus 0$ by the unit laws of \pcms{}. That they satisfy the laws of "strict premonoidal categories" (\Cref{defn:premonoidal}) follows straightforwardly from the axioms \textsc{$\otimes$-U-A}, \textsc{$\otimes$-ID}, and \textsc{Inter}.
\end{proof}

Via this lemma, the \pcm{} "$\bf{2}$" (\Cref{pcm2}) supplies our first non-trivial example.

\begin{example} \label{ex:2gradeff}
  A $"\bf{2}"$-graded monoidal category comprises two categories $ℂ_0$ and $ℂ_1$, and a non-trivial regrading functor $(-)\reg{0}{1} : ℂ_0 \to ℂ_1$.
  By \Cref{toppremon,idem}, we have that $ℂ_0$ is monoidal and $ℂ_1$ is premonoidal. As we shall see in \Cref{eff-2grad}, the regrading functor is moreover premonoidal and has image in the "center" of $\mathbb{C}_1$,
  and so a $"\bf{2}"$-graded monoidal category is precisely an "effectful category", also known as a \emph{generalized Freyd category} \cite{commutativity,LEVY2003182,10.1007/3-540-48523-6_59,roman2025}. We investigate this case in more detail in \Cref{sec:effectful}.
\end{example}

Since $\bf{2} \cong \mathscr{P}(\mathbf{1})$, \Cref{ex:2gradeff} is a ``coarse grained'' instance of grading by arbitrary powerset \pcms{}. %

\begin{example} \label{ex:powersetgrad}
  Recall the powerset "separation algebra" $(\mathcal{P}(D), \uplus, \varnothing)$ from \Cref{powersetpcm}. In a $\mathscr{P}(D)$-graded monoidal category $\mathbb{C}$, morphisms are graded by subsets of a set $D$, and the monoidal product of morphisms $f \in \mathbb{C}_S$ and $g \in \mathbb{C}_T$ is defined if and only if the intersection $S \cap T$ is empty.
  This could model programs that use subsets of a given set $D$ of \emph{devices}, which may be thought of as resources corresponding to definite noun phrases, such as ``\emph{the} database'' or ``\emph{the} lock $x$''. The monoidal product of two programs that share a device is not defined; it is defined just when they use disjoint sets of devices. Sequential composition of programs that use the same set of devices again yields a program using that set.
  
  This notion of \emph{device} was introduced in our prior work \cite{earnshaw2025,earnshaw2025resourceful}, where we showed that every effectful category has an underlying \emph{effectful signature} -- in the context of the present paper we might call this informally a ``$\mathscr{P}(D)$-graded signature''. %
\end{example}

\begin{example} \label{ex:rw}
  Consider the \pcm{} $\mathsf{RW} := (\mathscr{P}(L)^2, \oplus, (\varnothing, \varnothing))$ where $L$ is a set and where
  $$(R_1, W_1) \oplus (R_2, W_2) := \begin{cases}
    (R_1 \cup R_2, W_1 \cup W_2) & \text{if}\ W_1 \cap (R_2 \cup W_2) = \varnothing = W_2 \cap (R_1 \cup W_1) \\
    \undefi & \text{otherwise}.
  \end{cases}$$

  A morphism in an $\mathsf{RW}$-graded monoidal category with grade $(R,W)$ could model a program which may read a set of memory locations $R \subseteq L$ and write a set of memory locations $W \subseteq L$.
  The side conditions for $\oplus$ enforce non-interference: read-read overlap is allowed, but any overlap involving a write is forbidden, so parallel composition is defined exactly for race-free pairs of morphisms.
\end{example}

\begin{example} \label{ex:intgrad}
  For a choice of $r \geqslant 0$, the real interval $([0,r], \dotplus, 0)$ is an "effect algebra" with the operation of bounded addition,
  $$x \dotplus y := \begin{cases}
    x + y & \text{if}\ x + y \leqslant r \\
    \undefi & \text{otherwise}
  \end{cases} \qquad\qquad x^{\bot} := r - x$$
  In an $[0,r]$-"graded monoidal category" $\mathbb{C}$, morphisms are graded by real numbers in the interval $[0,r]$, and the monoidal product of morphisms $f \in \mathbb{C}_x$ and $g \in \mathbb{C}_y$ is defined if and only if $x + y \leqslant r$. These grades might mark the amount of some finite resource, whose total amount is $r$, used by that morphism. For example, this could model programs that use some amount of a fixed bandwidth or processor capacity. The monoidal product exists only so long as together they do not exceed the bound $r$, in which case the amount of the resource used by the monoidal product is the sum of that used by the factors. The sequential composition of two programs using the same amount of the resource again uses that same amount.
\end{example}

\begin{example} \label{devquant}
  We can refine \Cref{ex:powersetgrad} by taking a product of intervals as follows. Consider a set $D$ equipped with a function $\mathfrak{b} : D \to \mathbb{R}$, thought of as specifying the maximum available quantity of each element of $D$.
  Consider the family of interval PCMs $\{ ([0, \mathfrak{b}(d)], \dotplus, 0) \}_{d \in D}$, and let $[0,\mathfrak{b}]$ denote the product of this family, as in \Cref{prodpcm}.
  
  A $[0,\mathfrak{b}]$-graded monoidal category is one in which every morphism may use a certain amount of each device, up to the bound specified by $\mathfrak{b}$.
  The monoidal product is defined, and has grade given by the pointwise sum of the quantities, only when this does not exceed the bound for any $d$. This is a quantitative refinement of  \Cref{ex:powersetgrad}. In that example, a device was either present or not, without multiplicity, and the monoidal product of two morphisms sharing a device was necessarily undefined.  
\end{example}

\begin{example} \label{ex:nmax}
  Let $(ℕ,\max,0)$ denote the total monoid of natural numbers with the maximum operation.
  Morphisms in a $(ℕ,\max,0)$-graded monoidal category are graded by natural numbers, the monoidal product exists for every pair of morphisms, and the grade of a monoidal product is the maximum of the grades of the factors.
It is tempting to interpret these grades as marking the \emph{running time} of a morphism, but for this to be the case, grades should sum under sequential composition, which they do not. However, we can interpret these grades as marking, for example, the maximum time of an atomic step in the execution of a program. The sequential composition of programs, each having maximum time of an atomic step $n$, again has maximum time of an atomic step $n$.
In \Cref{sec:globalcat}, and \Cref{ex:upb} in particular, we show how to build a category from an $(\mathbb{N}, \max, 0)$-graded monoidal category in which we can interpret the idea of grades summing under sequential composition.
\end{example}

\begin{example} \label{ex:nplus}
  Let $(\mathbb{N},+,0)$ denote the total monoid of natural numbers with addition.
  Morphisms in a $(ℕ,+,0)$-graded monoidal category are graded by a natural number, the monoidal product exists for every pair of morphisms, and the grade of a monoidal product is the sum of the grades of the factors.
  We might interpret these grades as measuring how much of some reusable ambient resource a program has access to, where running programs in parallel uses disjoint resources (and hence has grade given by the sum of the factors), but running programs sequentially uses the same resource pool (and so the grades must match, and do not accumulate). An example of such a resource might be auxiliary memory cells.
\end{example}

\begin{example}
  Let $(S, \join, \bot)$ be a join semilattice, which might for example model clearance or security levels, with the join of two levels being the least level above its factors. Since $(S, \join,\bot)$ is also a total monoid, we can consider an $(S, \join, \bot)$-graded monoidal category $\mathbb{C}$. A morphism $f \in \mathbb{C}_{\ell_1}$ where $\ell_1 \in S$, might model the fact that one needs clearance level $\ell_1$ to run the program $f$. Then given $g \in \mathbb{C}_{\ell_2}$, the grade of the monoidal product $f \otimes g$, which is always defined, is given by $\ell_1 \join \ell_2$, indicating the clearance required to run $f$ and $g$ in parallel.
\end{example}

In the following section, we shall need the category of \pcm{}-graded monoidal categories.

\begin{definition} \label{gradmorphism}
  A morphism of \pcm-graded monoidal categories $(M, \phi) : (\mathbb{C},E) \to (\mathbb{D}, F)$ comprises

  \begin{itemize}
    \item a monoid homomorphism $M : (\mathbb{C}\obj,\otimes_{\mathbb{C}}, I_{\mathbb{C}}) \to (\mathbb{D}\obj, \otimes_{\mathbb{D}}, I_{\mathbb{D}})$
    \item a "homomorphism of partial commutative monoids" $\phi : (E, \oplus_E, 0_E) \to (F, \oplus_F, 0_F)$,
    \item for every $e \in E$, a functor $M_{e} : \mathbb{C}_e \to \mathbb{D}_{\phi(e)}$ with action $M$ on objects.
    \end{itemize}
      These must satisfy the axioms enforcing preservation of monoidal products and regradings
      \begin{itemize}
      \item $M_{e \oplus e'}(g \otimes h) = M_e(g) \otimes M_{e'}(h)$, whenever $g \in \mathbb{C}_e$ and $h \in \mathbb{C}_{e'}$ have "orthogonal" grades $e \perp e'$, and
      \item $M_f\!\left(g\reg{e}{f}\right) = M_e(g)\reg{\phi(e)}{\phi(f)}.$
      \end{itemize}
 \end{definition}

\begin{proposition} \label{prop:indexed}
  There is a functor $(-)\text{-}\GradMon : \PCM\op \to \Cat$, taking a \pcm{} $E$ to the category of $E$-graded monoidal categories and morphisms between them.
\end{proposition}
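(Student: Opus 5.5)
The plan is to build the functor by reindexing along PCM homomorphisms, i.e.\ restriction of grades. Fix a homomorphism $\phi : E \to F$ and define $\phi^* : F\text{-}\GradMon \to E\text{-}\GradMon$. Given an $F$-graded monoidal category $\mathbb{D}$, take $\phi^*\mathbb{D}$ to have the same monoid of objects $(\mathbb{D}\obj,\otimes,I)$ and set $(\phi^*\mathbb{D})_e := \mathbb{D}_{\phi(e)}$ for each $e \in E$. The structure is then pulled back along $\phi$ as follows.
\begin{itemize}
\item Identities at grade $0_E$ are those of $\mathbb{D}_{\phi(0_E)} = \mathbb{D}_{0_F}$, using $\phi(0_E)=0_F$.
\item For $e \perp e'$ we have $\phi(e\oplus e') = \phi(e)\oplus\phi(e')$. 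Hence $\phi(e)\perp\phi(e')$, and we may define $(\otimes)_{e,e'} := (\otimes)_{\phi(e),\phi(e')}$, which lands in $\mathbb{D}_{\phi(e\oplus e')}$ as required.
\item For regradings, note first that $\phi$ is monotone for the extension preorders: if $a\oplus c = b$ then $\phi(a)\oplus\phi(c)=\phi(b)$, so $\phi(a)\leqslant\phi(b)$. We then set $f\reg{a}{b} := f\reg{\phi(a)}{\phi(b)}$.
\end{itemize}

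Next we check the axioms. Each axiom of \Cref{gradmoncat} for $\phi^*\mathbb{D}$ is an instance of the same axiom for $\mathbb{D}$ at the image grades.
\begin{itemize}
\item \textsc{Reg-Act} is immediate.
\item \textsc{Reg-$\otimes$} instantiates at $\phi(a)\oplus\phi(b)=\phi(a\oplus b)$, and similarly for $c,d$.
\item \textsc{$\otimes$-U-A}, \textsc{$\otimes$-ID} and \textsc{Inter} are identical statements once the grades are replaced by their images under $\phi$.
\end{itemize}
The point to be careful about is that every well-typed instance upstairs yields a well-typed instance downstairs. This holds because $\phi$ preserves every sum that is defined, which is exactly the $\venturieq$ condition in the definition of homomorphism.

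On morphisms of $F$-graded monoidal categories, i.e.\ those of the form $(M,\id_F)$ (the fibre over $F$), set $\phi^*(M,\id_F) := (M,\id_E)$ with components $(\phi^*M)_e := M_{\phi(e)}$. Preservation of products and regradings again follows by instantiation at image grades. Preservation of identities and composition in $F\text{-}\GradMon$ is evident, so $\phi^*$ is a functor.

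Finally, strict functoriality in $\phi$ is checked on the nose. We have $\id_E^* = \id$, and $(\psi\circ\phi)^* = \phi^*\psi^*$ holds because both sides send $\mathbb{D}$ to the family $e \mapsto \mathbb{D}_{\psi(\phi(e))}$ with identical operations, and likewise on morphisms. This gives a functor $\PCM\op\to\Cat$.

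The main obstacle is only bookkeeping. One must confirm that the regrading data $(-)\reg{a}{b}$ is indexed by the preorder relation and not by a witness $c$, so that monotonicity of $\phi$ suffices. If one prefers to index by witnesses, then $\phi(c)$ witnesses $\phi(a)\leqslant\phi(b)$ and the same argument applies.
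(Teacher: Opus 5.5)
Your proposal is correct and follows the paper's proof: both define $\phi^*$ by reindexing, $(\phi^*\mathbb{D})_e := \mathbb{D}_{\phi(e)}$, use that a PCM homomorphism preserves orthogonality to transport the monoidal products, and send $(M,\id_F)$ to $(M,\id_E)$ with components $M_{\phi(e)}$. You are more explicit than the paper on points it leaves implicit, namely the monotonicity of $\phi$ needed for regradings, the use of $\phi(0_E)=0_F$ for identities, and the strict functoriality $(\psi\circ\phi)^* = \phi^*\psi^*$.
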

\begin{proof}
  Let $E$ be a \pcm{}. Then $E\text{-}\GradMon$ has objects $E$-graded monoidal categories, and morphisms (\Cref{gradmorphism}) of the form $(M, \id_E)$. Define $(M, \id_E) \circ (N, \id_E)$ to have monoid homomorphism $M \circ N$, morphism of \pcms{} $\id_E$, and functors $(M \circ N)_e := M_e \circ N_e$. This is unital with identities $(\id_ℂ, \id_E)$.
  Let $\phi : E \to F$ be a morphism of \pcms{}. Define a functor $\phi^{*} : F\text{-}\GradMon \to E\text{-}\GradMon$ on objects by sending an $F$-graded monoidal category $ℂ$ to one having the same objects, and $\phi^{*}{ℂ}_e(X;Y) := ℂ_{\phi(e)}(X;Y)$. Since $\phi$ is a morphism of \pcms{}, $e \perp e'$ implies $\phi(e) \perp \phi(e')$, and so monoidal products as well as sequential composition can be defined by those in $ℂ$. On a morphism $(P, \id_F) : (ℂ,F) \to (\mathbb{D},F)$ in $F\text{-}\GradMon$, $\phi^{*}(P) := (P, \id_E)$ with local functors $P_e : \phi^{*}(ℂ)_e \to \phi^{*}(\mathbb{D})_e$ being $P_e : ℂ_{\phi(e)} \to \mathbb{D}_{\phi(e)}$ from the definition of $P$, which assignment is clearly functorial.
\end{proof}

\begin{definition} \label{egradmon}
  $\GradMon$ is the total category of the corresponding split fibration over $\PCM$ induced by \Cref{prop:indexed}.
  For a \pcm{} $E$, the category $E\text{-}\GradMon$ is a subcategory of $\GradMon$, the fibre over $E$.
\end{definition}

\section{Effectful categories are \bf{2}-graded monoidal categories} \label{sec:effectful}

This section lays out in detail the isomorphism between "effectful categories" and $"\bf{2}"$-graded monoidal categories (\Cref{eff-2grad}), previewed in \Cref{ex:2gradeff}, establishing a new perspective on a well established structure in the semantics of programming languages. We also show that if $E$ has a top element, for example, when it is an "effect algebra", then $E$-graded monoidal categories can be functorially ``squashed'' into $\bf{2}$-graded monoidal categories, and this is a coreflector (\Cref{coreflective}). We then introduce symmetric and cartesian structure for graded monoidal categories, and show that cartesian $\bf{2}$-graded monoidal categories are isomorphic to Freyd categories. %

\begin{lemma}\label{lem:interchange}
  Let $ℂ$ be an "$E$-graded monoidal category", and let $f \in ℂ_0(X; Y)$ and $g \in ℂ_a(X';Y')$. Then $f$ and $g$ interchange in $ℂ_a$: $(f\reg{0}{a} \otimes \id_{X'})\comp (\id_Y \otimes g) = (\id_X \otimes g) \comp (f\reg{0}{a} \otimes \id_{Y'}) = f \otimes g,$ and similarly for $g \otimes f$.
\end{lemma}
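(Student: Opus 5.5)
The plan is to compute each side with the interchange axiom \textsc{Inter}, after rewriting the regraded morphism and the identities so that their grades match up. The key observation is that $f\reg{0}{a} \otimes \id_{X'}$ lives in grade $a \oplus 0 = a$. By \textsc{(Reg-$\otimes$)} with $\id_{X'} = (\id_{X'})\reg{0}{0}$ (using \textsc{Reg-Act}), it equals $(f \otimes_{0,0} \id_{X'})\reg{0}{a}$. Likewise $\id_Y \otimes_{0,a} g$ is a morphism of grade $0 \oplus a = a$.

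To apply \textsc{Inter} in the form $(p \otimes q) \comp (r \otimes s) = (p \comp r) \otimes (q \comp s)$, the first factors $p, r$ must share a grade, and so must the second factors $q, s$. In $(f\reg{0}{a} \otimes_{a,0} \id_{X'}) \comp (\id_Y \otimes_{0,a} g)$ the grades are mismatched: $(a,0)$ against $(0,a)$. I will repair this by regrading on both sides.

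First, rewrite $\id_Y \otimes_{0,a} g$ as follows. By \textsc{Reg-Act} and \textsc{Reg-$\otimes$} with $c = 0$, $d = a$, we have $\id_Y \otimes_{0,a} g = (\id_Y)\reg{0}{0} \otimes g\reg{a}{a}$, which is already the required form $0 \oplus a$. So the needed move is on the other factor. Write
\[ f\reg{0}{a} \otimes_{a,0} \id_{X'} = (f \otimes_{0,0} \id_{X'})\reg{0}{a}. \]
Using \textsc{Reg-$\otimes$} with the decomposition $0 \oplus 0 \leqslant 0 \oplus a$ instead, this also equals $f \otimes_{0,a} (\id_{X'})\reg{0}{a}$. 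Both expressions are the regrading of the same grade-$0$ morphism $f \otimes \id_{X'}$ to grade $a$, so they agree by \textsc{Reg-$\otimes$} applied twice; the two witnesses of $0 \leqslant a$ give the same regrading functor $(-)\reg{0}{a}$.

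Now both factors have grade pair $(0,a)$, and \textsc{Inter} gives
\[ (f \comp \id_Y) \otimes_{0,a} \bigl((\id_{X'})\reg{0}{a} \comp g\bigr) = f \otimes g. \]
The last step uses the fact that $(\id_{X'})\reg{0}{a}$ is the identity of $\mathbb{C}_a$, since regrading is a functor.

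The other composite, $(\id_X \otimes g) \comp (f\reg{0}{a} \otimes \id_{Y'})$, is handled identically. I rewrite the second factor as $f \otimes_{0,a} (\id_{Y'})\reg{0}{a}$ and apply \textsc{Inter} to get $(\id_X \comp f) \otimes (g \comp \id) = f \otimes g$. The statement for $g \otimes f$ is symmetric, moving the regraded identity to the first slot.

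The main obstacle is the bookkeeping of \textsc{Reg-$\otimes$}. I must make sure the same regrading $(-)\reg{0}{a}$ of $f \otimes_{0,0} \id$ can be presented as either $f\reg{0}{a} \otimes \id$ or $f \otimes \id\reg{0}{a}$. Both follow directly from \textsc{Reg-$\otimes$}, since regrading functors are indexed only by the pair of grades, not by the witness.
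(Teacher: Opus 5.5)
Your proof is correct and is essentially the paper's argument. Both rewrite $f_0^a \otimes \id_{X'}$ as $(f \otimes \id_{X'})_0^a$ via \textsc{Reg-Act} and \textsc{Reg-$\otimes$}, then re-split it as $f \otimes (\id_{X'})_0^a$ using $a \oplus 0 = 0 \oplus a$ (the paper's ``PCM comm.'' step, which is what your remark about regradings depending only on source and target grades amounts to), and finish with \textsc{Inter} and the unit laws in $\mathbb{C}_0$ and $\mathbb{C}_a$; your preliminary rewriting of $\id_Y \otimes g$ is a no-op and can be dropped.
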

\begin{proof}
  See \Cref{app:lem:interchange}.
\end{proof}

\begin{proposition} \label{eff-2grad}
  The category $\bf{2}\text{-}\GradMon$ is isomorphic to the category $"\Eff"$ of strict effectful categories and effectful functors.
\end{proposition}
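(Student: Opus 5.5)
I will construct functors $\Phi : \bf{2}\text{-}\GradMon \to \Eff$ and $\Psi : \Eff \to \bf{2}\text{-}\GradMon$ and check that they are mutually inverse on objects and morphisms. I take a strict effectful category to be an identity-on-objects functor $J : \mathbb{V} \to \mathbb{C}$ from a strict monoidal category to a strict premonoidal category. This functor is required to be strict premonoidal, so that $J(A \otimes B) = A \otimes B$, $J(\id_A \otimes f) = A \ltimes J(f)$ and $J(f \otimes \id_A) = J(f) \rtimes A$, and to land in the "center" of $\mathbb{C}$. An "effectful functor" is a pair of a strict monoidal functor and a strict premonoidal functor that commute with the inclusions.

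\textbf{From graded to effectful.} Given a $\bf{2}$-graded monoidal category, set $\mathbb{V} := ℂ_0$, which is strict monoidal by \Cref{zeromon}. Set $\mathbb{C} := ℂ_1$, which is strict premonoidal by \Cref{toppremon}. Set $J := (-)\reg{0}{1}$, which is identity-on-objects.
\begin{itemize}
\item $J$ is strict premonoidal: by \textsc{Reg-$\otimes$} and \textsc{Reg-Act} with $0 \oplus 1 = 1$, we get $(\id_A \otimes_{0,0} f)\reg{0}{1} = (\id_A)\reg{0}{0} \otimes_{0,1} f\reg{0}{1} = A \ltimes J(f)$, and symmetrically on the other side.
\item $J$ lands in the center: for $f \in ℂ_0$ and $g \in ℂ_1$, \Cref{lem:interchange} shows that $J(f)$ interchanges with $g$ on both sides.
\end{itemize}
Morphisms $(M,\id_{\bf{2}})$ give the pair $(M_0, M_1)$. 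The first preservation axiom makes $M_0$ strict monoidal and makes $M_1$ preserve whiskerings, hence strict premonoidal. The regrading axiom gives $M_1 J = J M_0$.

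\textbf{From effectful to graded.} Given $J : \mathbb{V} \to \mathbb{C}$, set $ℂ_0 := \mathbb{V}$, $ℂ_1 := \mathbb{C}$ and $(-)\reg{0}{1} := J$. Define the three products:
\begin{itemize}
\item $\otimes_{0,0}$ is the monoidal product of $\mathbb{V}$;
\item $f \otimes_{0,1} g := (J f \rtimes X') \comp (Y \ltimes g)$;
\item $g \otimes_{1,0} f := (g \rtimes X) \comp (Y' \ltimes J f)$.
\end{itemize}
Centrality of $J f$ means the other composite order gives the same morphism. Then I verify the axioms of \Cref{gradmoncat}.
\begin{itemize}
\item \textsc{Reg-Act} is trivial.
\item \textsc{Reg-$\otimes$} reduces to $J$ being strict premonoidal and functorial, e.g. $J(f \otimes g) = (Jf \rtimes X')\comp(Y \ltimes Jg)$.
\item \textsc{$\otimes$-ID} and the unit laws follow from the strict premonoidal laws, such as $I \ltimes g = g$ and $\id$ whiskering to $\id$.
\item Associativity, e.g. $(f \otimes_{0,0} f') \otimes_{0,1} g = f \otimes_{0,1}(f' \otimes_{0,1} g)$, needs the associativity of whiskering, $A \ltimes (B \ltimes g) = (A\otimes B)\ltimes g$ and its mixed forms, together with the premonoidal-functor equations for $J$.
\item \textsc{Inter} in the mixed case $(f\otimes_{0,1} g)\comp(h \otimes_{0,1} k)$ requires moving the central $Jh \rtimes$ past $Y\ltimes g$, which centrality permits. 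The $(0,0)$ case is the monoidal interchange law of $\mathbb{V}$.
\end{itemize}
On morphisms, an effectful functor $(F_0,F_1)$ gives $M_0 := F_0$ and $M_1 := F_1$. Preservation of $\otimes_{0,1}$ follows because $F_1$ preserves whiskering and composition and $F_1 J = J F_0$.

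\textbf{Mutual inverse.} The composite $\Phi\Psi$ is visibly the identity. For $\Psi\Phi$, the only nontrivial point is that the mixed product $\otimes_{0,1}$ of the original graded category is recovered from whiskerings and regrading. This is exactly \Cref{lem:interchange}: $f \otimes g = (f\reg{0}{1}\otimes \id)\comp(\id \otimes g)$, and likewise for $\otimes_{1,0}$. The product $\otimes_{0,0}$ is kept as is, and $\otimes_{1,1}$ does not exist. Functoriality of $\Phi$ and $\Psi$ is immediate since everything is defined componentwise.

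I expect the main obstacle to be the verification of the graded axioms in the direction $\Psi$, in particular associativity and \textsc{Inter} across mixed grades. Each instance is a routine but fiddly diagram chase using centrality and the strict premonoidal laws. I would organise it by noting that every mixed product factors as a composite of whiskerings of $J f$ and $g$, then using centrality to normalise the order of factors before comparing.
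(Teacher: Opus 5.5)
Your proposal is correct and follows essentially the same route as the paper. It uses the same pair of functors, the same definition $f \otimes_{0,1} g := (Jf \rtimes X') \comp (Y \ltimes g)$, centrality of the regrading via \Cref{lem:interchange}, and centrality to reorder factors in the mixed \textsc{Inter} case; the only addition is that you make explicit, via \Cref{lem:interchange}, that the round trip recovers the original mixed products, which the paper simply asserts is clear.
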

\begin{proof}
  We define a functor $F : \bf{2}\text{-}\GradMon \to "\Eff"$. Let $\mathbb{C}$ be a $\bf{2}$-graded monoidal category. We claim that $F(\mathbb{C}) := (\mathbb{C}_0, \mathbb{C}_1, (-)\reg{0}{1} : ℂ_0 \to ℂ_1)$ is an "effectful category".
  We already have that $ℂ_0$ is a "monoidal category" by \Cref{zeromon}, and $ℂ_1$ is a "premonoidal category" by \Cref{toppremon}.
  The axioms \textsc{Reg-Act} and \textsc{Reg-$\otimes$} entail that the regrading functor $(-)_0^{ 1}$ preserves whiskerings, thus we have an identity-on-objects "strict premonoidal functor" $(-)_0^{ 1} : \mathbb{C}_0 \to \mathbb{C}_1$. That the image of this functor is "central" follows from unfolding the definition of $\ltimes$ and $\rtimes$ and applying \Cref{lem:interchange} with $a = 1$. %

For the action on morphisms, recall that a morphism $(M, \id_2) : ℂ \to 𝔻$ in $\mathbf{2}\text{-}\GradMon$ comprises a monoid homomorphism $M : (ℂ\obj, \otimes_ℂ, I_ℂ) \to (𝔻\obj, \otimes_𝔻, I_𝔻)$ and functors $M_0 : ℂ_0 \to 𝔻_0$, $M_1 : ℂ_1 \to 𝔻_1$ with action on objects given by $M$, satisfying compatibility with monoidal product and regradings. Since $0 \perp 0$ and the functors $M_e$ are monoid homomorphisms on objects, the monoidal product axiom gives us that $M_0$ is a "strict monoidal functor", and similarly that $M_1$ preserves whiskerings and hence is a "strict premonoidal functor". Finally, the regrading axiom is precisely the condition that these commute with the effectful categories $F(\mathbb{C})$ and $F(\mathbb{D})$.%

Conversely, let $(\mathbb{V}, ℂ, \eta)$ be an "effectful category". We define a $\mathbf{2}$-graded monoidal category $G(\mathbb{V}, ℂ, \eta)$ by setting $G(\mathbb{V}, ℂ, \eta)_0 = \mathbb{V}$ and $G(\mathbb{V}, ℂ, \eta)_1 = \mathbb{C}$, with only non-trivial regrading functor given by $(-)_0^1 := \eta : \mathbb{V} \to \mathbb{C}$. The monoidal product $(\otimes)_{0,0}$  is the tensor of $\mathbb{V}$. For $f \in \mathbb{V}(X;Y)$ and $g \in \mathbb{C}(X';Y')$, define
$$f \otimes_{0,1} g := (\eta(f) \rtimes X') \comp (Y \ltimes g).$$
The axioms \textsc{Reg-Act} are immediate. For \textsc{Reg-$\otimes$}, the only non-trivial cases are those involving the regrading $\eta$; for $f \in \mathbb{V}(X;Y)$ and $g \in \mathbb{V}(X';Y')$ we have
\begin{align*}
  (f \otimes g)_0^1
  &:= \eta(f \otimes g) \\
  &= \eta((f \otimes \id_{X'}) \comp (\id_Y \otimes g))
    &&\text{($\mathbb{V}$ mon. cat)} \\
  &= \eta(f \otimes \id_{X'}) \comp \eta(\id_Y \otimes g)
      &&\text{($\eta$ func.)} \\
  &= (\eta(f) \rtimes X') \comp (Y \ltimes \eta(g))
    &&\text{($\eta$ premon. cat)} \\
  &=: f \otimes_{0,1} \eta(g),
\end{align*}
and similarly $\eta(f \otimes g) = \eta(f) \otimes_{1,0} g$. The axioms $\otimes$-\textsc{U-A} and $\otimes$-\textsc{ID} follow from the strict premonoidal axioms in $\mathbb{C}$ together with preservation of identities by $\eta$.

For \textsc{Inter}, the grade $0$ case is interchange in $\mathbb{V}$. For the mixed case, let $f,h \in \mathbb{V}$ and $g,k \in \mathbb{C}$ be composable. Then
\begin{align*}
  (f \otimes_{0,1} g)\comp(h \otimes_{0,1} k)
  &:= (\eta(f) \rtimes X') \comp (Y \ltimes g) \comp (\eta(h) \rtimes Y') \comp (Z \ltimes k) \\
  &= (\eta(f) \rtimes X') \comp (\eta(h) \rtimes X')\comp(Z \ltimes g) \comp (Z \ltimes k)
        &&\text{($\eta$ central)} \\
  &= (\eta(f \comp h) \rtimes X') \comp (Z \ltimes (g \comp k))
    &&\text{($\eta,\rtimes,\ltimes$ func.)} \\
  &=: (f \comp h)\otimes_{0,1}(g \comp k).
\end{align*}
The case of $\otimes_{1,0}$ is analogous. Thus $G(\mathbb{V}, ℂ, \eta)$ is a $\mathbf{2}$-graded monoidal category.

An "effectful functor" $(H_0,H_1) : (\mathbb{V}, ℂ, \eta) \to (\mathbb{V}', ℂ', \eta')$ determines a morphism $G(H_0,H_1)$ in $\mathbf{2}\text{-}\GradMon$ with monoid homomorphism given by the action on objects of $H_0$ (which necessarily coincides with that of $H_1$) and local functors $G(H_0,H_1)_0$ and $G(H_0,H_1)_1$ given by $H_0$ and $H_1$ respectively. The fact that $H_0$ is strict monoidal and $H_1$ is strict premonoidal, along with the condition $\eta' \circ H_0 = H_1 \circ \eta$ gives the required preservation of monoidal products in the grade combinations where they are defined. The condition $\eta' \circ H_0 = H_1 \circ \eta$ corresponds precisely to the regrading compatibility axiom.
Thus we recover a morphism of $\mathbf{2}$-graded monoidal categories. It is clear that these processes are mutually inverse.
\end{proof}

\begin{definition}
  Denote by $""\GradMonTop""$ the subcategory of $\GradMon$ whose grading \pcms{} have a "top element", and whose morphisms preserve the top grade.
\end{definition}

Since $"\bf{2}"$ has a top element and morphisms in $\TwoGradMon$ preserve it, $\TwoGradMon$ is a full subcategory of $"\GradMonTop"$, and moreover a coreflective one.

\begin{restatable}[]{theorem}{coreflective} \label{coreflective}
  The full subcategory inclusion $i : \TwoGradMon \hookrightarrow "\GradMonTop"$ has a right adjoint. That is, $\TwoGradMon \cong \Eff$ is a coreflective subcategory of $"\GradMonTop"$.
\end{restatable}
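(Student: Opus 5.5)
We construct the right adjoint $R : \GradMonTop \to \TwoGradMon$ explicitly, by ``squashing'' an $E$-graded monoidal category onto its bottom and top grades. For $(\mathbb{C}, E)$ with top element $\top \in E$, define $R(\mathbb{C})$ to have the same monoid of objects, $R(\mathbb{C})_0 := \mathbb{C}_0$ and $R(\mathbb{C})_1 := \mathbb{C}_\top$. The only nontrivial regrading is $(-)\reg{0}{\top}$, which exists since $0 \leqslant \top$. The monoidal products are $\otimes_{0,0}$, $\otimes_{0,\top}$ and $\otimes_{\top,0}$, all defined because $0 \oplus \top = \top$. The axioms of \Cref{gradmoncat} for $R(\mathbb{C})$ are instances of the same axioms in $\mathbb{C}$, so $R(\mathbb{C})$ is a $\mathbf{2}$-graded monoidal category. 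Equivalently, $R(\mathbb{C}) = \psi^*(\mathbb{C})$, where $\psi : \mathbf{2} \to E$ is the map $0 \mapsto 0$, $1 \mapsto \top$. This $\psi$ is a PCM homomorphism because $1 \nperp 1$ in $\mathbf{2}$, so the condition on $1 \oplus 1$ is vacuous. Hence \Cref{prop:indexed} already gives the structure, and $R$ acts on a morphism $(M,\phi)$ with $\phi(\top) = \top$ by restricting to the local functors $M_0$ and $M_\top$; functoriality is immediate.

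The counit $\varepsilon_{\mathbb{C}} : iR(\mathbb{C}) \to \mathbb{C}$ is the morphism $(\id, \psi)$ whose local functors are identities $\mathbb{C}_0 \to \mathbb{C}_0$ and $\mathbb{C}_\top \to \mathbb{C}_\top$. It preserves the top grade since $\psi(1) = \top$. Its compatibility with monoidal products and regradings holds because the structure of $R(\mathbb{C})$ was inherited from $\mathbb{C}$.

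For the universal property, take a $\mathbf{2}$-graded $\mathbb{D}$ and a morphism $(N,\chi) : i\mathbb{D} \to \mathbb{C}$ in $\GradMonTop$. Top-preservation gives $\chi(1) = \top$, and $\chi(0) = 0$ holds for any PCM homomorphism, so $\chi = \psi$. The factorisation $\bar N : \mathbb{D} \to R(\mathbb{C})$ is then forced to be $(N, \id_{\mathbf{2}})$ with local functors $N_0$ and $N_1$, and $\varepsilon \circ \bar N = (N,\chi)$. This gives uniqueness, and existence follows by re-reading the axioms of $(N,\chi)$ as the axioms of $\bar N$.

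Composition in $\GradMon$ is not spelled out; I take it from the Grothendieck construction of \Cref{egradmon}, and verifying the factorisation against that definition is the step I expect to need the most care. One must check that a morphism over $\psi$ corresponds exactly to a morphism into $\psi^*\mathbb{C}$ over $\id_{\mathbf{2}}$, which is how cartesian lifts in a split fibration behave. The morphisms of \Cref{gradmorphism} are such pairs with $\phi$ arbitrary, so $\varepsilon_{\mathbb{C}}$ is precisely the cartesian lift of $\psi$ at $\mathbb{C}$, and the bijection $\hom(i\mathbb{D}, \mathbb{C}) \cong \hom(\mathbb{D}, R\mathbb{C})$ follows from cartesianness together with the observation that the only top-preserving homomorphism $\mathbf{2} \to E$ is $\psi$.

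Finally, $i$ is fully faithful, since the top-preserving endomorphisms of $\mathbf{2}$ reduce to the identity. Hence the subcategory is coreflective, and combining with \Cref{eff-2grad} gives $\TwoGradMon \cong \Eff$ coreflective in $\GradMonTop$.
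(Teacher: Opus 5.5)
Your proposal is correct and follows essentially the paper's proof. It uses the same $R\mathbb{C}$ built from grades $0$ and $\top$, the same counit with identity local functors over $0\mapsto 0$, $1\mapsto\top$, and the same observation that top-preservation forces the PCM component, which makes the factorisation unique. Your identification of $R\mathbb{C}$ with $\psi^*\mathbb{C}$, and of $\varepsilon_{\mathbb{C}}$ as the cartesian lift of $\psi$ in the Grothendieck construction, is a tidy repackaging of the universal-property check that the paper does by hand.
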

\begin{proof}
  We construct a universal morphism from the functor $i$ to an arbitrary $ℂ$ in $"\GradMonTop"$ with monoid of objects $(ℂ\obj, \otimes ,I)$, graded by a \pcm{} $E$ with a top element. Define $Rℂ$ to be the $"\bf{2}"$"-graded monoidal category" over the same monoid of objects and where, $Rℂ_0(X; Y) := ℂ_0(X; Y)$ and $Rℂ_1(X; Y) := ℂ_\top(X; Y)$, the non-trivial regrading $(-)\reg{0}{1}$ is $(-)\reg{0}{\top}$ and monoidal product for two morphisms of grade 0, or of grade 0 and grade 1 is just that in $ℂ$. Define a morphism $\varepsilon_ℂ : i(Rℂ) \to ℂ$ in $"\GradMonTop"$ to be identity-on-objects, with morphism of \pcms{} $2 \to E$ the unique $\top$ preserving morphism, sending $0$ to $0$ and $1$ to $\top$. The required functors from grade $0$ to $0$ and grade $1$ to $\top$ are then simply identities, and it is immediate that these commute with regradings and monoidal products where defined.
  Let $𝔻$ be a $2$-graded monoidal category and $(M,\phi) : i𝔻 \to ℂ$ a morphism in $"\GradMonTop"$, where $\phi$ is necessarily the unique top-preserving \pcm{} morphism $2 \to E$. Define $\widehat{M} : 𝔻 \to Rℂ$ in $\TwoGradMon$ to have the same action on objects and morphisms as $M$. Then it is easy to verify that $\varepsilon_ℂ \circ i(\widehat{M}) = M$, and moreover $\widehat{M}$ is unique with this property: if $\widehat{N}$ is another morphism such that $\varepsilon_ℂ \circ i(\widehat{N}) = M$ then $\widehat{N} = \widehat{M}$. Therefore $i$ has a right adjoint given on objects by $R$ and on a morphism $F : 𝔻 \to ℂ$ by the unique $\widehat{F \circ \varepsilon_𝔻}$.
\end{proof}

\subsection{Symmetric and cartesian structure} \label{sec:symcart}

Monoidal categories lack structure corresponding to the syntactic rules of symmetry, weakening, and contraction. This makes them suited to the semantics of non-cartesian domains, such as quantum \cite{SELINGER2004} or probabilistic programming languages \cite{stein2021structural,dilavore2025simple}. For classical programming languages, we must supply \emph{cartesian} structure.
\emph{Freyd categories} \cite{LEVY2003182,10.1007/3-540-48523-6_59} are the cartesian cousins of effectful categories. Their category of pure morphisms is a \emph{cartesian} monoidal category, and the category of effectful morphisms is a \emph{symmetric} premonoidal category. In this section, we define \emph{cartesian} \pcm{}-graded monoidal categories, and show that cartesian $"\bf{2}"$-graded monoidal categories are precisely Freyd categories, extending \Cref{eff-2grad}. We shall begin with symmetric structure.

\begin{definition}
  Let $(E,\oplus,0)$ be a partial commutative monoid, and let $\mathbb{C}$ be an $E$-graded monoidal category. We say that $\mathbb{C}$ is \emph{symmetric} in case the monoidal category $\mathbb{C}_0$ is a "symmetric strict monoidal category" with braidings $\sigma_{X,Y} \in \mathbb{C}_0(X \otimes Y ; Y \otimes X)$ such that for all $a,b \in E$ with $a \perp b$, and all $f \in \mathbb{C}_a(X;Y)$ and $g \in \mathbb{C}_b(X';Y')$, we have $$(f \otimes g)\comp
  (\sigma_{Y,Y'})_0^{a \oplus b} = (\sigma_{X,X'})_0^{a \oplus b}\comp(g \otimes f).$$
\end{definition}

\begin{example} \label{sym1}
  A symmetric $\bf{1}$-graded monoidal category is simply a "symmetric strict monoidal category". More generally, in a symmetric $E$-graded monoidal category $\mathbb{C}$, the category $\mathbb{C}_a$ is symmetric monoidal for any idempotent $a \in E$, with braidings $(\sigma_{X,Y})\reg{0}{a}$. The straightforward proof of this fact is essentially that of the following lemma.
\end{example}

\begin{lemma} \label{symtop}
  Let $E$ be a \pcm{}, let $\mathbb{C}$ be a symmetric $E$-graded monoidal category, and let $a \in E$ be a grade. Then $\mathbb{C}_a$ is a "symmetric premonoidal category" with braidings $(\sigma_{X,Y})_0^a$.
\end{lemma}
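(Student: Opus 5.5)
By \Cref{toppremon}, $\mathbb{C}_a$ is already a strict premonoidal category with whiskerings $A \ltimes - := \id_A \otimes_{0,a} -$ and $- \rtimes A := - \otimes_{a,0} \id_A$. It therefore remains to check the extra axioms of a symmetric strict premonoidal category for the candidate braidings $\beta_{X,Y} := (\sigma_{X,Y})\reg{0}{a}$. These axioms (\Cref{defn:premonoidal}) are:
\begin{enumerate*}[label=(\roman*)]
\item $\beta$ is natural in each variable with respect to whiskering;
\item $\beta$ is central;
\item $\beta_{X,Y} \comp \beta_{Y,X} = \id$ together with the hexagon equations.
\end{enumerate*}

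The coherence equations in (iii) only involve braidings. I will transport them from $\mathbb{C}_0$, where they hold because $\mathbb{C}_0$ is symmetric strict monoidal. Regrading $(-)\reg{0}{a}$ is a functor, so it preserves composites and identities. By \textsc{Reg-$\otimes$} with $0 \oplus 0 \mapsto 0 \oplus a$, we have $(\id_Z \otimes \sigma_{X,Y})\reg{0}{a} = \id_Z \otimes (\sigma_{X,Y})\reg{0}{a} = Z \ltimes \beta_{X,Y}$, and similarly on the right. Hence each hexagon, written with whiskerings, is exactly the image under regrading of the corresponding hexagon in $\mathbb{C}_0$.

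For naturality (i), take $f \in \mathbb{C}_a(X;Y)$. The symmetry axiom with grades $a$ and $0$ and $g = \id_{X'}$ gives $(f \otimes_{a,0} \id_{X'}) \comp (\sigma_{Y,X'})\reg{0}{a} = (\sigma_{X,X'})\reg{0}{a} \comp (\id_{X'} \otimes_{0,a} f)$, that is, $(f \rtimes X') \comp \beta_{Y,X'} = \beta_{X,X'} \comp (X' \ltimes f)$. This is naturality in the first variable. Naturality in the second variable follows from the axiom with grades $0$ and $a$.

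Centrality (ii) is where I expect the real content to lie, though it is short. The braiding $\beta_{X,Y}$ is the regrading of the grade-$0$ morphism $\sigma_{X,Y}$. \Cref{lem:interchange} says precisely that, for $f \in \mathbb{C}_0$ and $g \in \mathbb{C}_a$, the morphism $f\reg{0}{a}$ interchanges with $g$ on both sides in $\mathbb{C}_a$. So every $\beta_{X,Y}$ is central in $\mathbb{C}_a$. The only care needed is to match the whiskering-based definition of interchange from \Cref{defn:premonoidal} with the form of the equations in \Cref{lem:interchange}. This amounts to unfolding $\ltimes$ and $\rtimes$, exactly as in the proof of \Cref{eff-2grad}.
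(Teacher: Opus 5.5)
Your proposal is correct and takes essentially the same route as the paper. Both arguments use \Cref{toppremon} for the premonoidal structure and transport the braiding axioms from $\mathbb{C}_0$ along the regrading functor via \textsc{Reg-$\otimes$} and \textsc{Reg-Act}; this covers the unit axiom $\sigma_{X,I}=\id_X$ as well, which you should list explicitly. Naturality then comes from the symmetry axiom with one factor an identity, and centrality from \Cref{lem:interchange} after unfolding the whiskerings.
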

\begin{proof}
See \Cref{app:symtop}.
\end{proof}

\begin{proposition}\label{symeff}
  Let $E$ be a \pcm{}, let $\mathbb{C}$ be a symmetric $E$-graded monoidal category and let $a \in E$ be a grade. Then there is a "symmetric effectful category" given by $(-)_0^a : \mathbb{C}_0 \to \mathbb{C}_a$.
\end{proposition}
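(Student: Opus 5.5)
The plan is to assemble the data of a symmetric effectful category from results already proved, and then check the few remaining compatibility conditions. A symmetric effectful category consists of three pieces of data and one set of conditions:
\begin{itemize}
\item a symmetric strict monoidal category of pure morphisms;
\item a symmetric strict premonoidal category of effectful morphisms;
\item an identity-on-objects strict premonoidal functor between them, with central image;
\item the requirement that this functor preserve the symmetry.
\end{itemize}
I will supply each piece in turn.

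For the pure part, $\mathbb{C}_0$ is a strict monoidal category by \Cref{zeromon}. By the definition of a symmetric $E$-graded monoidal category it is symmetric with braidings $\sigma_{X,Y}$; this is the case $a=b=0$ of the defining equation, with $(-)\reg{0}{0}$ the identity by \textsc{Reg-Act}. For the effectful part, $\mathbb{C}_a$ is a strict premonoidal category by \Cref{toppremon}, and it is symmetric premonoidal with braidings $(\sigma_{X,Y})\reg{0}{a}$ by \Cref{symtop}.

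For the functor, $(-)\reg{0}{a}$ is identity-on-objects and is a functor by assumption. I will show it is strict premonoidal exactly as in the proof of \Cref{eff-2grad}. Using \textsc{Reg-$\otimes$} and \textsc{Reg-Act}, we get
\[(\id_A \otimes_{0,0} f)\reg{0}{a} = (\id_A)\reg{0}{0} \otimes_{0,a} f\reg{0}{a} = A \ltimes f\reg{0}{a},\]
since $0 \oplus a = a$. The right whiskering is handled symmetrically. Centrality of the image follows from \Cref{lem:interchange}: for $f \in \mathbb{C}_0$ and $g \in \mathbb{C}_a$, both orders of whiskered composition equal $f \otimes g$, and likewise with $f$ and $g$ swapped.

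It remains to check that the functor preserves the symmetry, i.e.\ that it sends the braiding $\sigma_{X,Y}$ of $\mathbb{C}_0$ to the braiding of $\mathbb{C}_a$. This holds on the nose, because the braidings of $\mathbb{C}_a$ were defined to be $(\sigma_{X,Y})\reg{0}{a}$.

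The main potential obstacle is matching the precise axioms of a symmetric effectful category as stated in the appendix, in particular whether naturality of the braidings is required only with respect to central morphisms. If so, that condition is already supplied by \Cref{symtop}. Otherwise I would derive it directly from the symmetry axiom with grades $(0,a)$ and $(a,0)$, together with the whiskering identities above. Apart from this, every step is an appeal to a previously established lemma.
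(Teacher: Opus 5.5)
Your proposal is correct and follows essentially the same route as the paper. \Cref{symtop} supplies the symmetric premonoidal structure on $\mathbb{C}_a$, braiding preservation holds by construction, and strict premonoidality and centrality of $(-)\reg{0}{a}$ come from \textsc{Reg-$\otimes$}, \textsc{Reg-Act} and \Cref{lem:interchange}, exactly as in the first part of the proof of \Cref{eff-2grad}; your worry about naturality is moot, since \Cref{symtop} already proves naturality with respect to all morphisms of $\mathbb{C}_a$.
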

\begin{proof}
  We have that $\mathbb{C}_a$ is symmetric premonoidal from \Cref{symtop}, and $(-)_0^a$ preserves the braiding by construction. That $(-)_0^a$ is strict premonoidal and has central image follows by the same reasoning as in the first part of the proof of \Cref{eff-2grad}.
\end{proof}

\begin{definition}
  A morphism $M$ of symmetric \pcm{}-graded monoidal categories is a morphism as in \Cref{gradmorphism}, for which $M_0(\sigma_{X,Y}) = \sigma_{MX,MY}$, i.e. $M_0$ is a "symmetric strict monoidal functor".
\end{definition}

\begin{proposition} \label{symeff-sym2grad}
  The category $"\SymEff"$ is isomorphic to the category $\TwoSymGradMon$.
\end{proposition}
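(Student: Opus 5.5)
The plan is to restrict the isomorphism $F : \TwoGradMon \to \Eff$, with inverse $G$, from \Cref{eff-2grad} to the symmetric structure on both sides. It suffices to check two things. First, $F$ sends symmetric $\mathbf{2}$-graded monoidal categories and their morphisms to symmetric effectful categories and symmetric effectful functors. Second, $G$ sends symmetric effectful categories and their functors to symmetric $\mathbf{2}$-graded monoidal categories and their morphisms. Since both extra conditions are properties, not extra data, the round trips are still identities, and the restricted functors remain mutually inverse.

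\textbf{Step 1: $F$ preserves symmetric structure.} Let $\mathbb{C}$ be a symmetric $\mathbf{2}$-graded monoidal category. Applying \Cref{symeff} with $a = 1$ gives exactly that $(-)\reg{0}{1} : \mathbb{C}_0 \to \mathbb{C}_1$ is a symmetric effectful category. Here $\mathbb{C}_0$ is symmetric strict monoidal by hypothesis, and $\mathbb{C}_1$ is symmetric premonoidal with braidings $(\sigma)\reg{0}{1}$ by \Cref{symtop}. For a morphism $M$, we have $M_0(\sigma) = \sigma$ by definition. The regrading axiom then gives $M_1(\sigma\reg{0}{1}) = M_0(\sigma)\reg{0}{1} = \sigma\reg{0}{1}$, so $M_1$ preserves the premonoidal braiding.

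\textbf{Step 2: $G$ preserves symmetric structure.} Let $(\mathbb{V}, \mathbb{C}, \eta)$ be a symmetric effectful category. I take this to mean that $\mathbb{V}$ is symmetric strict monoidal, $\mathbb{C}$ is symmetric premonoidal, and $\eta$ preserves the braidings. Then $G(\mathbb{V},\mathbb{C},\eta)_0 = \mathbb{V}$ is symmetric, and it remains to verify the naturality condition
\[
(f \otimes g)\comp \sigma\reg{0}{a\oplus b} = \sigma\reg{0}{a\oplus b}\comp (g\otimes f)
\]
for all orthogonal grade pairs. There are three cases.
\begin{itemize}
\item For $(0,0)$, this is naturality of $\sigma$ in $\mathbb{V}$.
\item For $(0,1)$, unfold $f \otimes_{0,1} g = (\eta(f) \rtimes X') \comp (Y \ltimes g)$ and $g \otimes_{1,0} f = (g \rtimes X) \comp (Y' \ltimes \eta(f))$. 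Using $\eta(\sigma) = \sigma$, apply naturality of the premonoidal braiding in $\mathbb{C}$ to each whiskered factor separately: $(h \rtimes A)\comp\sigma = \sigma \comp (A \ltimes h)$ and $(A \ltimes h)\comp \sigma = \sigma\comp(h \rtimes A)$. This moves $\sigma$ past the two factors and yields $\sigma \comp (X \ltimes \eta(f)) \comp (g \rtimes Y)$. Centrality of $\eta(f)$ then reorders this to $\sigma \comp (g \rtimes X)\comp(Y' \ltimes \eta(f))$.
\item The case $(1,0)$ is symmetric.
\end{itemize}
For morphisms, a symmetric effectful functor has $H_0$ symmetric strict monoidal, which is exactly the extra condition on $G(H_0,H_1)$.

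The main obstacle is the mixed case in Step 2. Its difficulty depends on how the paper defines symmetric premonoidal categories in its appendix, in particular whether naturality of the braiding with respect to whiskering is stated in both forms I use above. If only one form is stated, I would derive the other by pre- and post-composing with $\sigma$ and using $\sigma \comp \sigma = \id$. I would also need to check that the definition of $\otimes_{1,0}$ left implicit in \Cref{eff-2grad} (``analogous'') is the one I wrote, namely $(g \rtimes X)\comp(Y'\ltimes \eta(f))$.
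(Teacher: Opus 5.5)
Your proposal is correct and follows the paper's proof essentially step for step. You use \Cref{symeff} at $a=1$ for one direction, the three-case check of the symmetry axiom for the other, and $M_1((\sigma_{X,Y})\reg{0}{1}) = M_0(\sigma_{X,Y})\reg{0}{1}$ for morphisms; your explicit mixed-case computation (both forms of braiding naturality, the second obtained via $\sigma \comp \sigma = \id$ if needed, then centrality of $\eta(f)$) fills in what the paper leaves as a one-line appeal to $\mathbb{C}$ being symmetric premonoidal.

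Two cosmetic nits. The intermediate term should read $\sigma_{X,X'} \comp (X' \ltimes \eta(f)) \comp (g \rtimes Y)$, not $X \ltimes \eta(f)$. Also, braidings are data rather than a property, but the round trips are still identities because $\eta$ preserving braidings forces the braiding of $\mathbb{C}$ to be $\eta(\sigma)$.
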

\begin{proof}
  From \Cref{symeff} with $a=1$, a symmetric $"\bf{2}"$-graded monoidal category induces a "symmetric effectful category".
  From \Cref{eff-2grad} we have that an effectful category induces a $"\bf{2}"$-graded monoidal category.
  Assume $(\mathbb{V},\mathbb{C},\eta)$ is symmetric.
  To show the induced $"\bf{2}"$-graded monoidal category is symmetric, we must check
  $(f \otimes g)\comp (\sigma_{Y,Y'})_0^{a\oplus b} = (\sigma_{X,X'})_0^{a\oplus b}\comp (g \otimes f)$
  for all $a \perp b$.
  For $a=b=0$, this follows from the fact that $\mathbb{V}$ is symmetric.
  For $a=0, b=1$, we have $f\reg{0}{1}:=\eta(f)$ and $(\sigma_{X,X'})_0^1:=\eta(\sigma_{X,X'})$, so we must show
  $(\eta(f) \otimes g)\comp \sigma_{Y,Y'} = \sigma_{X,X'}\comp (g \otimes \eta(f)),$
  which follows from the fact that $\mathbb{C}$ is "symmetric premonoidal", using that $\eta$ preserves braidings and lands in the center. The case $a=1,b=0$ is analogous.
  Given a morphism of symmetric $"\bf{2}"$-graded monoidal categories, \Cref{eff-2grad} gives us an "effectful functor" with components $M_0$ and $M_1$. The condition that the morphism is symmetric is precisely that $M_0$ is a "symmetric strict monoidal functor", so it remains to check that $M_1$ is a "symmetric premonoidal functor", which follows from $M_1((\sigma_{X,Y})\reg{0}{1}) = M_0(\sigma_{X,Y})\reg{0}{1} = (\sigma_{MX,MY})\reg{0}{1}.$ Conversely, given a "symmetric strict effectful functor", we have a morphism of $"\bf{2}"$-graded monoidal categories from \Cref{eff-2grad}, and the condition that $M_0$ preserve symmetries is the extra condition that the functor between the monoidal categories be symmetric strict.
\end{proof}

\begin{proposition} \label{coreflective2}
The adjunction of \Cref{coreflective} restricts to symmetric structures, i.e. $\TwoSymGradMon \cong \SymEff$ is a coreflective subcategory of $"\SymGradMonTop"$.
\end{proposition}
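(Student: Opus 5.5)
The plan is to reuse the universal arrow from the proof of \Cref{coreflective} verbatim and check that each piece lives in the symmetric subcategories. First, fix what $\SymGradMonTop$ means: its objects are symmetric \pcm{}-graded monoidal categories whose grading \pcm{} has a top element, and its morphisms are symmetric morphisms (so $M_0$ preserves $\sigma$) that preserve the top grade. Then $\TwoSymGradMon$ is a full subcategory of it, for the same reason as before: the only top-preserving \pcm{} morphism $\mathbf{2}\to\mathbf{2}$ is the identity. For a symmetric $\mathbb{C}$ in $\SymGradMonTop$, graded by $E$, let $R\mathbb{C}$ be as in \Cref{coreflective}, so that $R\mathbb{C}_0=\mathbb{C}_0$ and $R\mathbb{C}_1=\mathbb{C}_\top$. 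Give it the braidings $\sigma_{X,Y}\in\mathbb{C}_0$.

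The first step is to check that $R\mathbb{C}$ is symmetric. The symmetry axiom must hold for the grade pairs $(0,0)$, $(0,1)$ and $(1,0)$. In each case, the products, regradings and braidings of $R\mathbb{C}$ are by definition those of $\mathbb{C}$ at the grades $0\oplus 0=0$, $0\oplus\top=\top$ and $\top\oplus 0=\top$. The regrading $(-)\reg{0}{1}$ in $R\mathbb{C}$ is $(-)\reg{0}{\top}$ in $\mathbb{C}$. So each required equation is literally an instance of the symmetry axiom of $\mathbb{C}$ with $a,b\in\{0,\top\}$, $a\perp b$. Next, the counit $\varepsilon_\mathbb{C}$ has $(\varepsilon_\mathbb{C})_0=\id$, which preserves braidings. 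Hence $\varepsilon_\mathbb{C}$ is a morphism of $\SymGradMonTop$.

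The second step is universality. Let $\mathbb{D}$ be a symmetric $\mathbf{2}$-graded monoidal category and $(M,\phi):i\mathbb{D}\to\mathbb{C}$ a symmetric morphism. Then $\widehat{M}$ from \Cref{coreflective} has the same data as $M$, so $\widehat{M}_0=M_0$ preserves $\sigma$ and $\widehat{M}$ is symmetric. Uniqueness is inherited directly. Any symmetric $\widehat{N}$ with $\varepsilon_\mathbb{C}\circ i(\widehat N)=M$ is in particular a morphism of $\TwoGradMon$ with that property, so $\widehat N=\widehat M$ by \Cref{coreflective}. Equivalently, the symmetric categories form a non-full subcategory, and the adjunction restricts because the unit and counit and the transposes of symmetric maps are symmetric.

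Finally, I would compose with \Cref{symeff-sym2grad} to identify $\TwoSymGradMon\cong\SymEff$, and conclude that this is a coreflective subcategory of $\SymGradMonTop$. I expect the only point needing care to be the first step. There one must confirm that the braiding used in $R\mathbb{C}_1$, namely $(\sigma)\reg{0}{1}$, agrees with $(\sigma)\reg{0}{\top}$ in $\mathbb{C}$, and that the symmetry equations for mixed grades transfer. Both follow because $R$ merely relabels grades $0,\top$ as $0,1$. No genuine obstacle is expected.
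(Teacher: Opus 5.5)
Your proposal is correct and follows essentially the same route as the paper. You reuse $R$ and the counit from \Cref{coreflective}, note that $(R\mathbb{C})_0=\mathbb{C}_0$, so $R\mathbb{C}$ is symmetric and the identity-component counit preserves braidings, and observe that $\widehat{M}_0=M_0$ preserves $\sigma$, with uniqueness inherited from the unsymmetric case. Your explicit check of the mixed-grade symmetry equations as instances of $\mathbb{C}$'s axiom at grades $0,\top$ just spells out what the paper's phrase ``$(-)\reg{0}{1}$ preserves braidings'' leaves implicit.
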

\begin{proof}
  Let $ℂ$ be a symmetric $E$-graded monoidal category for a \pcm{} $E$ with top element. From \Cref{coreflective} we obtain a $"\bf{2}"$-graded monoidal category $Rℂ$. Since by definition $(Rℂ)_0 = ℂ_0$, and $(-)\reg{0}{1}$ preserves braidings, $Rℂ$ is a symmetric $"\bf{2}"$-graded monoidal category. Since the counit is identity on objects and morphisms, it preserves the braiding and so is a morphism of symmetric $E$-graded monoidal categories. Given a symmetric $\bf{2}$-graded monoidal category $\mathbb{D}$ and a symmetric morphism $(M, \phi) : i\mathbb{D} \to ℂ$ in $\SymGradMonTop$, we have the unique $\widehat{M}$ as in \Cref{coreflective}, which we must check is a morphism of $\TwoSymGradMon$. It suffices to check that it preserves braidings, which follows from $\widehat{M_0}(\sigma_{X,Y}) := M_0(\sigma_{X,Y}) = \sigma_{MX,MY}$.
\end{proof}

\begin{definition}
 Let $E = (E,\oplus,0)$ be a \pcm{}. An $E$-graded monoidal category is said to be \emph{cartesian} in case $\mathbb{C}_0$ is a "cartesian monoidal category", and the braiding there makes $\mathbb{C}$ into a symmetric $E$-graded monoidal category.
\end{definition}

\begin{proposition} \label{cart-freyd}
  Let $E$ be a \pcm{}, let $\mathbb{C}$ be a cartesian $E$-graded monoidal category, and let $a \in E$ be a grade. Then there is a "Freyd category" given by $(-)_0^a : \mathbb{C}_0 \to \mathbb{C}_a$.
\end{proposition}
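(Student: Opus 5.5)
The plan is to reduce the statement to \Cref{symeff}. A "Freyd category" is a symmetric effectful category $(\mathbb{V}, \mathbb{C}, \eta)$ whose pure part $\mathbb{V}$ is "cartesian monoidal", with $\eta$ identity-on-objects, strict premonoidal, preserving the symmetry and landing in the "center". So it suffices to show two things: that $(-)_0^a : \mathbb{C}_0 \to \mathbb{C}_a$ is a symmetric effectful category, and that $\mathbb{C}_0$ is cartesian monoidal with the same symmetry.

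First, by definition a cartesian $E$-graded monoidal category is in particular symmetric, with braiding $\sigma$ coming from the cartesian structure of $\mathbb{C}_0$. Applying \Cref{symeff} at grade $a$ then yields the following facts.
\begin{itemize}
\item $\mathbb{C}_a$ is "symmetric premonoidal" with braidings $(\sigma_{X,Y})_0^a$ (via \Cref{symtop}).
\item $(-)_0^a$ is identity-on-objects and strict premonoidal; this follows from \textsc{Reg-Act} and \textsc{Reg-$\otimes$}.
\item $(-)_0^a$ preserves the braiding by construction.
\item $(-)_0^a$ has central image, by \Cref{lem:interchange}.
\end{itemize}

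Second, $\mathbb{C}_0$ is "cartesian monoidal" by hypothesis. I will check that the symmetry used in the symmetric effectful structure is the canonical one induced by the cartesian products, namely $\langle \pi_2, \pi_1\rangle$. This holds because the definition of cartesian grading takes the braiding to be the cartesian one. The main point of care is therefore bookkeeping: the symmetric structure required by \Cref{symeff} must coincide with the cartesian symmetry, and this is exactly what the definition stipulates. If the paper's notion of Freyd category requires the monoidal structure on objects of $\mathbb{C}_a$ to agree with that of $\mathbb{C}_0$, this is automatic, since both share the object monoid $(\mathbb{C}\obj, \otimes, I)$.

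No genuine obstacle is expected. The only step that warrants attention is centrality of the image in $\mathbb{C}_a$ for an arbitrary grade $a$ rather than $a=1$. This is handled by \Cref{lem:interchange}, which is stated for arbitrary $a$: a grade-$0$ morphism interchanges with any grade-$a$ morphism after regrading.
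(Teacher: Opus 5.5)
Your proposal is correct and follows the paper's proof. Both apply \Cref{symeff} to obtain the symmetric effectful category $(-)_0^a : \mathbb{C}_0 \to \mathbb{C}_a$, and then use that $\mathbb{C}_0$ is cartesian monoidal by definition. Your extra checks (that the symmetry is the cartesian one, and that centrality holds at an arbitrary grade $a$ via \Cref{lem:interchange}) are harmless bookkeeping that the definitions and \Cref{symeff} already cover.
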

\begin{proof}
  By \Cref{symeff}, $(-)_0^a : \mathbb{C}_0 \to \mathbb{C}_a$ is a "symmetric effectful category" and since $\mathbb{C}_0$ is cartesian monoidal by definition, this is a "Freyd category".
\end{proof}

\begin{definition}
  A morphism of cartesian $E$-graded monoidal categories is a morphism as in \Cref{gradmorphism}, for which $M_0$ is a "cartesian monoidal functor".
\end{definition}

\begin{theorem}
  The category of "Freyd categories" is isomorphic to the category of cartesian $"\bf{2}"$-graded monoidal categories, $\TwoCartGradMon \cong \Freyd$.
\end{theorem}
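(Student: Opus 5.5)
The plan is to restrict the isomorphism $\SymEff \cong \TwoSymGradMon$ of \Cref{symeff-sym2grad} along the two forgetful inclusions $\Freyd \hookrightarrow \SymEff$ and $\TwoCartGradMon \hookrightarrow \TwoSymGradMon$. Both inclusions are defined by the same extra condition on the grade-$0$ (pure) part. On objects, a Freyd category is a symmetric effectful category $(\mathbb{V}, \mathbb{C}, \eta)$ whose $\mathbb{V}$ is cartesian monoidal. A cartesian $\bf{2}$-graded monoidal category is a symmetric one whose $\mathbb{C}_0$ is cartesian monoidal and whose braiding is the one making it symmetric. Both the functor $F$ of \Cref{eff-2grad} and its inverse $G$ take the grade-$0$ category to the pure category unchanged, $F(\mathbb{C})_{\text{pure}} = \mathbb{C}_0$ and $G(\mathbb{V},\mathbb{C},\eta)_0 = \mathbb{V}$. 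So it suffices to check that the two conditions correspond exactly, on objects and on morphisms.

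For objects, I would first use \Cref{cart-freyd} with $a = 1$: a cartesian $\bf{2}$-graded monoidal category yields the Freyd category $(-)\reg{0}{1} : \mathbb{C}_0 \to \mathbb{C}_1$, and this agrees with $F$ of \Cref{eff-2grad}. Conversely, given a Freyd category, \Cref{symeff-sym2grad} yields a symmetric $\bf{2}$-graded monoidal category $G(\mathbb{V},\mathbb{C},\eta)$ whose grade-$0$ part is $\mathbb{V}$, which is cartesian monoidal. The one point to check is that the symmetry used in the symmetric structure is the one induced by the cartesian structure of $\mathbb{V}$. This depends on the convention for Freyd categories: one either takes the braiding of $\mathbb{C}$ to be the $\eta$-image of the cartesian braiding, or notes that a cartesian monoidal category has a canonical symmetry which the definition of Freyd category uses. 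In either case the symmetric $\bf{2}$-graded structure is built from $\sigma^{\mathbb{V}}$, so the object-level correspondences are mutually inverse, because they are restrictions of the mutually inverse $F$ and $G$.

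For morphisms, a morphism of Freyd categories is a symmetric effectful functor $(H_0, H_1)$ whose $H_0$ is cartesian, that is, it preserves the cartesian (product) structure strictly. A morphism of cartesian $\bf{2}$-graded monoidal categories is a morphism in the sense of \Cref{gradmorphism} whose $M_0$ is a cartesian monoidal functor. The correspondence of \Cref{symeff-sym2grad} sends $M_0$ to $H_0$ unchanged, so the two conditions coincide. One also has to confirm that a cartesian monoidal functor preserves the symmetry, so that these morphisms really are morphisms of the symmetric categories. This holds because the braiding of a cartesian category is built from projections and pairing, which a cartesian functor preserves. Hence the bijections on hom-sets of \Cref{symeff-sym2grad} restrict, and functoriality and mutual inverseness are inherited.

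The main obstacle is purely definitional. I need to fix the notion of \emph{cartesian monoidal functor} (strict preservation of the chosen products, projections and terminal object) and the convention that the symmetry of a Freyd category is the canonical cartesian one. Both are needed so that ``symmetric and $\mathbb{V}$ cartesian'' matches the Freyd axioms exactly, with no extra data. Once these conventions are fixed, everything else is inherited from \Cref{symeff-sym2grad} and \Cref{cart-freyd}.
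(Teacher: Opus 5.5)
Your proposal is correct and is essentially the paper's proof: you restrict the isomorphism $\SymEff \cong \TwoSymGradMon$ of \Cref{symeff-sym2grad}, use \Cref{cart-freyd} on objects, and note that on both sides the only extra condition on morphisms is that the grade-$0$ component be cartesian. The paper passes over the two definitional points you flag, and your hedge about conventions is unnecessary: in a strict cartesian monoidal category, naturality of $\sigma$ (taking $f = {!_X}$, $g = \id_Y$) together with $\sigma_{I,Y} = \id_Y$ forces $\sigma_{X,Y} = \langle \pi_2, \pi_1 \rangle$, and any strict monoidal functor preserves the projections $\pi_1 = \id_A \otimes {!_B}$ and hence this symmetry.
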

\begin{proof}
  By \Cref{symeff-sym2grad}, symmetric $"\bf{2}"$-graded monoidal categories are isomorphic to symmetric effectful categories. We show that this isomorphism restricts to the cartesian case. On objects, if $\mathbb{C}$ is a cartesian $"\bf{2}"$-graded monoidal category, then by \Cref{cart-freyd} the induced effectful category $(-)_0^1 : \mathbb{C}_0 \to \mathbb{C}_1$ is a Freyd category since $\mathbb{C}_0$ is cartesian by definition.

Conversely, if $(\mathbb{V},\mathbb{C},J)$ is a Freyd category, then \Cref{symeff-sym2grad} yields a symmetric $"\bf{2}"$-graded monoidal category with grade $0$ category $\mathbb{V}$ and grade $1$ category $\mathbb{C}$. Since $\mathbb{V}$ is cartesian monoidal, this is in fact a cartesian $"\bf{2}"$-graded monoidal category.
On morphisms, the isomorphism of \Cref{symeff-sym2grad} sends an effectful functor $(M_0,M_1)$ to the corresponding morphism of symmetric $"\bf{2}"$-graded monoidal categories with the same components. The additional requirement for a morphism in $\Freyd$ is precisely that $M_0$ be cartesian, which is exactly the additional requirement for a morphism in $\TwoCartGradMon$. %
  \end{proof}

\begin{proposition} \label{twocartgradmonco}
The adjunction of \Cref{coreflective2} restricts to cartesian structures, i.e. $\TwoCartGradMon \cong \Freyd$ is a coreflective subcategory of $"\CartGradMonTop"$.
\end{proposition}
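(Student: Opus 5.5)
We follow the pattern of \Cref{coreflective2}: take the coreflection $R$ and counit $\varepsilon$ from \Cref{coreflective}, and check that every piece restricts to the cartesian setting. Here $\CartGradMonTop$ denotes the subcategory of $\GradMonTop$ whose objects are cartesian $E$-graded monoidal categories with $E$ having a top element. Its morphisms are those of $\GradMonTop$ whose grade-$0$ component is a "cartesian monoidal functor". The inclusion $i : \TwoCartGradMon \hookrightarrow \CartGradMonTop$ is full, because both categories impose the same extra condition on $M_0$.

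First we check that $R$ preserves cartesianness. Let $ℂ$ be a cartesian $E$-graded monoidal category with $E$ having a top element. By definition $(Rℂ)_0 = ℂ_0$, so $(Rℂ)_0$ is cartesian monoidal with the same braiding $\sigma$. By \Cref{coreflective2}, $Rℂ$ is already a symmetric $\bf{2}$-graded monoidal category with respect to this braiding. Hence $Rℂ$ is a cartesian $\bf{2}$-graded monoidal category.

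Next we check the counit $\varepsilon_ℂ : iRℂ \to ℂ$. Its grade-$0$ component is the identity functor on $ℂ_0$, which is trivially a cartesian monoidal functor. So $\varepsilon_ℂ$ is a morphism of $\CartGradMonTop$.

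Finally we check the universal property. Let $𝔻$ be a cartesian $\bf{2}$-graded monoidal category and $(M,\phi) : i𝔻 \to ℂ$ a morphism of $\CartGradMonTop$. \Cref{coreflective} gives a unique $\widehat{M}$ with $\varepsilon_ℂ \circ i(\widehat{M}) = M$. \Cref{coreflective2} shows that $\widehat{M}$ is a morphism of symmetric $\bf{2}$-graded monoidal categories. It remains to show that $\widehat{M}_0$ is cartesian. Since $\widehat{M}_0 = M_0$ as functors $𝔻_0 \to ℂ_0 = (Rℂ)_0$, and $M_0$ is cartesian by hypothesis, this holds. Uniqueness within $\TwoCartGradMon$ follows from uniqueness in the larger category. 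The identification $\TwoCartGradMon \cong \Freyd$ is the preceding theorem.

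No step is a real obstacle. The one point needing care is that "cartesian monoidal functor" is a property of $M_0$ alone, and that $(Rℂ)_0$ and $ℂ_0$ coincide as cartesian monoidal categories, so $\widehat{M}_0$ and $M_0$ are literally the same functor.
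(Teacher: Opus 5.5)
Your proposal is correct and follows the paper's proof essentially step for step. $R\mathbb{C}$ is cartesian because $(R\mathbb{C})_0 = \mathbb{C}_0$, the counit is the identity in grade $0$, and the factorization $\widehat{M}$ obtained from \Cref{coreflective2} is cartesian because $\widehat{M}_0 = M_0$; like the paper, you tacitly need $M$ to be a symmetric morphism in order to invoke \Cref{coreflective2}, which is harmless since a strict monoidal functor between cartesian monoidal categories preserves projections, diagonals, and hence the canonical braiding.
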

\begin{proof}
  We extend \Cref{coreflective2}. Let $ℂ$ be a cartesian $E$-graded monoidal category for a \pcm{} $E$ with top element. From \Cref{coreflective2} we obtain a symmetric $\bf{2}$-graded monoidal category, $Rℂ$. Since by definition $(Rℂ)_0 = ℂ_0$, it is a cartesian $\bf{2}$-graded monoidal category. Since the counit is the identity on objects and morphisms, it is a morphism of cartesian $E$-graded monoidal categories. Given a Freyd category $\mathbb{D}$ and a morphism $(M,\phi) : i \mathbb{D} \to ℂ$ in $\CartGradMonTop$, we have from \Cref{coreflective2} a unique $\widehat{M} : \mathbb{D} \to R\mathbb{C}$ in $\TwoSymGradMon$. Since $\widehat{M}_0$ is defined to be exactly $M_0$, which is a cartesian monoidal functor by assumption, $\widehat{M}$ preserves cartesian structure and hence is a morphism in $\TwoCartGradMon$.
\end{proof}

Jeffrey \cite{jeffrey1997premonoidal1} considered programming language semantics in \emph{triples} of a cartesian category, a symmetric monoidal category and a symmetric premonoidal category, corresponding respectively to values, pure computations and effectful computations.
From this point of view, the \pcm{} $"\mathbf{2}"$ is a ``truncated'' instance of the total monoid given by \emph{maximum}, which extends to three elements as follows.

\begin{example} \label{pcm3}
  Denote by $""\bf{3}""$ the \pcm{} with three elements $\{0,1,2\}$ and partial operation
  $$x \oplus y := \begin{cases} \undefi & x = y = 2 \\
    \max(x,y) & \text{otherwise.}
    \end{cases}$$
\end{example}

Then combining the above results, we observe that

\begin{corollary}
  Cartesian $"\mathbf{3}"$-graded monoidal categories are the triples of Jeffrey \cite{jeffrey1997premonoidal1}.
\end{corollary}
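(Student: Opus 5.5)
The plan is to unpack a cartesian $\mathbf{3}$-graded monoidal category $\mathbb{C}$ grade by grade and match each piece with a component of a Jeffrey triple $(\mathbb{V}, \mathbb{M}, \mathbb{C}, \mathbb{V} \to \mathbb{M} \to \mathbb{C})$. Here $\mathbb{V}$ is cartesian, $\mathbb{M}$ is symmetric monoidal, and $\mathbb{C}$ is symmetric premonoidal. The identity-on-objects functors must preserve the structure, and their images must be central.

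In $\mathbf{3}$ we have $0 \leqslant 1 \leqslant 2$, and both $0$ and $1$ are idempotent since $1 \oplus 1 = \max(1,1) = 1$. The correspondence then proceeds as follows.
\begin{itemize}
\item $\mathbb{C}_0$ is cartesian monoidal by the definition of a cartesian graded monoidal category.
\item $\mathbb{C}_1$ is strict monoidal by \Cref{idem}, and symmetric with braidings $(\sigma)\reg{0}{1}$ by \Cref{sym1}.
\item $\mathbb{C}_2$ is symmetric premonoidal by \Cref{symtop}.
\end{itemize}
For the functors, $(-)\reg{0}{1}$ is strict symmetric monoidal: it preserves $\otimes_{0,0}$ by \textsc{Reg-$\otimes$}, since $0 \oplus 0 \leqslant 1 \oplus 1$, and it preserves braidings by definition. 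The functor $(-)\reg{1}{2}$ is strict premonoidal with central image. Whiskering preservation again comes from \textsc{Reg-$\otimes$}, using $1 \oplus 0 = 1 \leqslant 2 = 2 \oplus 0$. For centrality, I would prove an analogue of \Cref{lem:interchange} with grade $1$ in place of $0$. Take $f \in \mathbb{C}_1$ and $g \in \mathbb{C}_2$. Since $1 \perp 2$, the product $f \otimes_{1,2} g$ exists, and \textsc{Inter} together with \textsc{Reg-$\otimes$} should give $(f\reg{1}{2} \otimes \id)\comp(\id \otimes g) = f \otimes g = (\id \otimes g)\comp(f\reg{1}{2}\otimes \id)$, by the same argument as in \Cref{app:lem:interchange}. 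Finally, the composite $(-)\reg{0}{2}$ factors as $(-)\reg{1}{2}\circ(-)\reg{0}{1}$ by \textsc{Reg-Act}.

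For the converse, start from a triple $(\mathbb{V},\mathbb{M},\mathbb{C},J,K)$ and set $\mathbb{C}_0=\mathbb{V}$, $\mathbb{C}_1=\mathbb{M}$, $\mathbb{C}_2=\mathbb{C}$, with regradings $J$, $K$ and $K\circ J$. The mixed products are defined as follows.
\begin{itemize}
\item For grade $(0,1)$: $f \otimes_{0,1} g := J(f)\otimes g$, computed in $\mathbb{M}$.
\item For grades $(0,2)$ and $(1,2)$: $f \otimes g := (\eta(f)\rtimes X')\comp(Y\ltimes g)$, with $\eta$ equal to $KJ$ or $K$ respectively.
\item The products $\otimes_{2,1}$ and $\otimes_{2,0}$ are defined symmetrically.
\end{itemize}
The axioms are then verified case by case, copying the proof of \Cref{eff-2grad}. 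Interchange uses centrality of the images of $K$ and $KJ$. The symmetry condition is checked as in \Cref{symeff-sym2grad}. The two constructions are evidently mutually inverse, since the mixed products of a graded category are forced: \textsc{Inter} and \textsc{Reg-$\otimes$} give $f \otimes g = (f\reg{a}{a\oplus b}\otimes \id)\comp(\id\otimes g\reg{b}{a\oplus b})$.

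I expect the main obstacle to be bookkeeping rather than any new idea. Specifically, the hardest part is checking associativity and the instances of \textsc{Reg-$\otimes$} that mix three distinct grades. An example is $(f\otimes_{0,1} g)\reg{1}{2} = f\reg{0}{0}\otimes g\reg{1}{2}$, which reduces to $K$ being strict monoidal into the premonoidal structure. I would handle all of these uniformly by expressing every mixed product through whiskerings, using the forced formula above.
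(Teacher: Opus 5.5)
Your proposal is correct and takes essentially the paper's route. The paper states this corollary only as an observation obtained by combining \Cref{idem}, \Cref{sym1}, \Cref{symtop} and the constructions of \Cref{eff-2grad,symeff-sym2grad}, which is what your grade-by-grade matching and case-by-case converse spell out; your grade-$1$ analogue of \Cref{lem:interchange} (which itself only covers grade-$0$ morphisms) correctly supplies the one step those results do not literally give, namely centrality of the image of $(-)_1^2$, via $1 \oplus 2 = 2$ with \textsc{Reg-$\otimes$} and \textsc{Inter}, and your forced formula $f \otimes g = (f_a^{a\oplus b} \otimes \id)\comp(\id \otimes g_b^{a\oplus b})$ correctly justifies that the two constructions are mutually inverse.
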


\section{Categories from PCM-graded monoidal categories}\label{sec:globalcat}

Although the sequential composition operators of an "$E$-graded monoidal category" $\mathbb{C}$ are \emph{homogeneous} in the grade, if the "extension preorder" of $E$ is appropriately well behaved, we can make sense of the sequential composition of \emph{heterogeneously} graded morphisms, assembling the ``local'' categories $\mathbb{C}_a$ into a single ``global'' category. The simplest case is when the "extension preorder" of $E$ has binary joins, in which case we can take the disjoint union of the hom-sets.

\begin{proposition}\label{lem:global-category-structure}
  Let $(E,\oplus,0)$ be a \pcm{} whose "extension preorder" has binary joins, and let $ℂ$ be an "$E$-graded monoidal category".
  There is a category, also denoted by $ℂ$, with objects $ℂ\obj$ and hom-sets $ℂ(X;Y) := \coprod_{a \in E} ℂ_a(X;Y)$.
  Identities are given by $\id_X$ (grade 0 identities), and composition for $f \in ℂ_a(X;Y)$ and $g \in ℂ_b(Y;Z)$ by $f ; g := f\reg{a}{a \join b} \comp g\reg{b}{a \join b}.$
\end{proposition}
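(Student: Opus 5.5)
The plan is to check directly the three category axioms for the proposed data: well-definedness of composition, the unit laws, and associativity. Each reduces to functoriality of the regrading functors together with \textsc{Reg-Act}. Throughout I fix, for each pair $a,b$, a chosen join $a \join b$ in the "extension preorder". I will use this choice so that $a \join 0 = a = 0 \join a$ and $(a \join b)\join c = a \join (b \join c)$ hold as equalities of elements, not merely up to the equivalence of the preorder. This is automatic when the preorder is a poset, for instance when $E$ is a "separation algebra".

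\emph{Well-definedness.} Since $a \leqslant a \join b$ and $b \leqslant a \join b$, the regrading functors $(-)\reg{a}{a\join b}$ and $(-)\reg{b}{a \join b}$ exist. They are identity-on-objects, so $f\reg{a}{a\join b} \in ℂ_{a\join b}(X;Y)$ and $g\reg{b}{a\join b} \in ℂ_{a \join b}(Y;Z)$. These are composable in $ℂ_{a\join b}$, and $f ; g$ lands in the summand indexed by $a \join b$.

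\emph{Units.} For $f \in ℂ_a(X;Y)$ we have $f ; \id_Y = f\reg{a}{a\join 0} \comp (\id_Y)\reg{0}{a \join 0} = f\reg{a}{a} \comp (\id_Y)\reg{0}{a}$. By \textsc{Reg-Act}, $f\reg{a}{a} = f$. Since $(-)\reg{0}{a}$ is a functor, $(\id_Y)\reg{0}{a}$ is the identity of $ℂ_a$ at $Y$. Hence $f ; \id_Y = f$, and symmetrically $\id_X ; f = f$.

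\emph{Associativity.} Take $f \in ℂ_a$, $g \in ℂ_b$ and $h \in ℂ_c$, and write $d := (a\join b)\join c = a \join (b \join c)$. Expanding the left bracketing gives $(f\reg{a}{a\join b}\comp g\reg{b}{a\join b})\reg{a\join b}{d} \comp h\reg{c}{d}$. Functoriality of $(-)\reg{a\join b}{d}$ together with \textsc{Reg-Act} rewrites this as $f\reg{a}{d}\comp g\reg{b}{d}\comp h\reg{c}{d}$. The right bracketing reduces to the same expression by the symmetric argument, and associativity of $(\comp)_d$ in $ℂ_d$ finishes the proof.

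The main obstacle is the one hidden in the choice of joins. \textsc{Reg-Act} identifies regradings only along equal target grades. If the "extension preorder" is not antisymmetric, two joins of the same pair may be distinct but equivalent elements. The two bracketings could then land in different summands of the coproduct, and the unit law could fail because $a\join 0$ need not equal $a$. I would resolve this first by insisting that joins be chosen as above, and I would remark that the hypothesis is meant in this sense.
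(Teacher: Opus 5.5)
Your proof is correct and is essentially the paper's argument. The unit laws follow from \textsc{Reg-Act} and functoriality of $(-)\reg{0}{a}$ once $0 \join a = a = a \join 0$, and associativity follows from functoriality of the regradings, \textsc{Reg-Act}, and associativity of $(\comp)$ at the common grade $(a \join b) \join c = a \join (b \join c)$. Your caveat about choosing joins coherently makes explicit what the paper's proof uses silently, and what Proposition~\ref{upperboundingglobal} later imposes as hypotheses on $\join$.

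One aside is inaccurate: cancellativity alone does not make the extension preorder antisymmetric. In $\mathbb{Z}/2$ under addition one has $0 \leqslant 1 \leqslant 0$. So a separation algebra lands in the poset case only if it is also positive, i.e.\ $a \oplus b = 0$ implies $a = b = 0$.
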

\begin{proof}
Let $f\in ℂ_a(X;Y)$, $g \in ℂ_b(Y;Z)$, and $h\in ℂ_c(Z;W)$. For associativity, we have by definition
$$(f ; g) ;  h = (f\reg{a}{a\join b} \comp g\reg{b}{a\join b})\reg{a\join b}{(a\join b)\join c} \comp h_c^{(a\join b)\join c}.$$
Then we have
\begin{align*}
  &(f\reg{a}{a\join b}\comp g\reg{b}{a\join b})\reg{a\join b}{(a\join b)\join c} \comp h\reg{c}{(a\join b)\join c} \\
  &= ((f\reg{a}{a\join b})\reg{a\join b}{a \join b \join c} \comp (g_b^{a\join b})\reg{a \join b}{a \join b \join c})  \comp h\reg{c}{(a\join b)\join c}
  &&\text{(regrad. func.)} \\
  &= (f\reg{a}{a \join b \join c} \comp g\reg{b}{a \join b \join c}) \comp h\reg{c}{(a\join b)\join c}
  &&(\textsc{Reg-Act}) \\
  &= f\reg{a}{a \join b \join c} \comp (g\reg{b}{a \join b \join c} \comp h\reg{c}{(a\join b)\join c}),
  &&(\comp\ \text{assoc.})
\end{align*}
and the same reasoning starting with $f ; (g ;  h)$ arrives at the same term.
    For left unitality, let $f \in \mathbb{C}_a(X;Y)$ then we have
    $\id_X ; f
      := \id_X\reg{0}{0 \join a} \comp f\reg{a}{0 \join a}
      =  f\reg{a}{a}
      = f_a$,
  and similarly for right unitality.
\end{proof}

\begin{example} \label{ex:pcmg}
  The "extension preorder" of a powerset \pcm{} (\Cref{powersetpcm}) is exactly the usual subset inclusion preorder, since $U = T - S$ witnesses $S \leqslant T$. Therefore join is given by the union $S \cup T$.

  Recall from \Cref{ex:powersetgrad} the \emph{device} interpretation of $\mathscr{P}(X)$-graded monoidal categories: a morphism of grade $S \subseteq X$ is a program that uses devices in $S$, and parallel composition is defined only for disjoint device sets. Under this interpretation, sequential composition in the category $ℂ$ corresponds to the intuitive notion that a sequential program uses the union of the devices appearing in each term of the sequence.
\end{example}

\begin{example} \label{ex:totn}
  The (total) commutative monoid $(\mathbb{N},+,0)$ has binary joins, $n \join m := \max(n,m)$.
  Under the interpretation of $(\mathbb{N},+,0)$-graded monoidal categories as those modelling programs whose grade corresponds to the amount of a reusable resource used by that program (such as auxiliary ``scratch'' memory cells), composition in the resulting global category captures the intuitive idea that we should be able to compose an $n$ graded and an $m$ graded program to obtain a $\max(n,m)$ graded one.
\end{example}

In fact, inspection of the proof of \Cref{lem:global-category-structure} shows that we do not need the full strength of a join: it suffices that $\vee$ is an ``upper-bounding monoid'' structure in the following sense.

\begin{proposition} \label{upperboundingglobal}
  Let $(E, \oplus, 0)$ be a \pcm{} and let $\vee$ be an associative binary operation on $E$, with unit $0$, and such that it provides an upper bound of its arguments in the "extension preorder", $a \leqslant a \vee b$ and $b \leqslant a \vee b$. Then an $E$-graded monoidal category gives a category $ℂ$ exactly as defined in \Cref{lem:global-category-structure}. When $\vee$ is moreover idempotent, the composition of two morphisms $f \in \mathbb{C}_a(X;Y)$ and $g \in \mathbb{C}_a(Y;Z)$ with the same grade $a$ coincides with the operation $(\comp_a)$ of the $E$-graded monoidal category.
\end{proposition}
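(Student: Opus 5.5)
The plan is to rerun the proof of \Cref{lem:global-category-structure} with $\join$ replaced by the given operation $\vee$, checking that each step uses only the three hypotheses: upper bounds, associativity, and unitality with $0$. First, the composite $f ; g := f\reg{a}{a \vee b} \comp g\reg{b}{a \vee b}$ is well defined. The regrading functors $(-)\reg{a}{a\vee b}$ and $(-)\reg{b}{a \vee b}$ exist because $a \leqslant a \vee b$ and $b \leqslant a \vee b$ in the "extension preorder". Both regraded morphisms then lie in $ℂ_{a\vee b}$, so they can be composed there.

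For associativity, take $f \in ℂ_a$, $g \in ℂ_b$, $h \in ℂ_c$ and write $(f;g);h$.
\begin{itemize}
\item The outer regrading $(-)\reg{a\vee b}{(a\vee b)\vee c}$ exists because $a \vee b \leqslant (a\vee b)\vee c$.
\item Functoriality of this regrading distributes it over $\comp_{a\vee b}$.
\item \textsc{Reg-Act} collapses the iterated regradings to $f\reg{a}{(a\vee b)\vee c}$ and $g\reg{b}{(a\vee b)\vee c}$.
\end{itemize}
The result is $f\reg{a}{d} \comp g\reg{b}{d} \comp h\reg{c}{d}$ with $d = (a\vee b)\vee c$. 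Expanding $f;(g;h)$ the same way gives the same expression with $d' = a\vee(b\vee c)$. Associativity of $\vee$ gives $d = d'$, and associativity of $\comp_d$ finishes this step.

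For the unit laws, take $\id_X \in ℂ_0$. Then $\id_X ; f = (\id_X)\reg{0}{a} \comp f\reg{a}{a}$, because $0 \vee a = a$. This equals $(\id_X)\reg{0}{a} \comp f$ by \textsc{Reg-Act}. Since regrading is a functor, $(\id_X)\reg{0}{a}$ is the identity of $ℂ_a$, so the composite is $f$. Right unitality is symmetric and uses $a \vee 0 = a$.

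For the idempotent clause, if $a \vee a = a$ then $f ; g = f\reg{a}{a} \comp g\reg{a}{a} = f \comp_a g$ by \textsc{Reg-Act}.

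There is no real obstacle here. The one point needing care is that the proof of \Cref{lem:global-category-structure} never uses the least-upper-bound property or the commutativity of $\join$. Its only uses are that the relevant regradings exist (upper bound), that the target grades of the two bracketings agree (associativity), and that $0$ is neutral (unit), and each of these is supplied by the hypotheses.
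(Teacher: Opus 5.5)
Your proposal is correct and matches the paper's argument. The paper justifies this proposition only by saying that inspecting the proof of \Cref{lem:global-category-structure} shows it uses nothing beyond the upper-bound, associativity and unit properties of $\join$, and you carry out exactly that check, including the observation that $(\id_X)\reg{0}{a}$ is the identity of $ℂ_a$ by functoriality of regrading, a step the paper's unitality computation leaves implicit.
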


Any total commutative monoid provides an example of such an upper-bounding operation, given by the operation of the monoid.
For example $+$ is an upper bounding operation for the total monoid $(\mathbb{N},+,0)$, differing from the join, given by $\max$.
We can also swap these operations, as in the following example.

\begin{example}\label{ex:upb}
  Consider the total commutative monoid of natural numbers with maximum, $(\mathbb{N}, \max, 0)$. 
  The "extension preorder" has a join given by $\max$ but addition is an upper-bounding operation in the sense of \Cref{upperboundingglobal}.
  Therefore, given an $(\mathbb{N},\max,0)$-graded monoidal category (\Cref{ex:nmax}) we obtain a category from \Cref{lem:global-category-structure} in which grades \emph{sum} under sequential composition: this might model execution \emph{time} or other accumulating \emph{costs}.
\end{example}

The category $ℂ$ constructed above contains copies of ``the same'' morphism in different grades, being a \emph{disjoint} union of the hom-sets at each grade, $\mathbb{C}_a$. It is also natural to consider identifying morphisms along regrading functors, which allows us to weaken the hypothesis on the "extension preorder" to directedness.

\begin{proposition} \label{directedc}
  Let $(E, \oplus, 0)$ be a \pcm{} whose "extension preorder" is directed. Then there is a category $\overline{ℂ}$ with objects $ℂ\obj$ and hom-sets $\overline{ℂ}(X;Y) := \coprod_{a} ℂ_a(X;Y)/{\equiv}$
  where $\equiv$ is the least equivalence relation generated by pairs $\langle a, f \rangle \equiv \langle b, f\reg{a}{b} \rangle$ for every $a \leqslant b$ in $E$.
\end{proposition}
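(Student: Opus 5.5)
The plan is to define composition on representatives by regrading both morphisms to a common upper bound, and then show that the result is independent of every choice made. Take $\langle a,f\rangle$ with $f \in ℂ_a(X;Y)$ and $\langle b,g\rangle$ with $g \in ℂ_b(Y;Z)$. Directedness gives some $c$ with $a \leqslant c$ and $b \leqslant c$. Set
$$[\langle a,f\rangle] ; [\langle b,g\rangle] := [\langle c, f\reg{a}{c} \comp g\reg{b}{c}\rangle].$$
Identities are the classes $[\langle 0,\id_X\rangle]$.

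The first step is a normal form for $\equiv$. I claim that $\langle a,f\rangle \equiv \langle a',f'\rangle$ holds iff there is some $d$ with $a, a' \leqslant d$ and $f\reg{a}{d} = f'\reg{a'}{d}$.
\begin{itemize}
\item Call this relation $\sim$. It contains the generating pairs (take $d=b$ and use \textsc{Reg-Act}) and it is contained in $\equiv$, so it suffices to show $\sim$ is an equivalence relation.
\item Reflexivity and symmetry are immediate.
\item For transitivity, suppose $d$ witnesses $\langle a,f\rangle \sim \langle a',f'\rangle$ and $d'$ witnesses $\langle a',f'\rangle \sim \langle a'',f''\rangle$. Directedness gives $e$ above both $d$ and $d'$. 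Applying $(-)\reg{d}{e}$ and $(-)\reg{d'}{e}$ and using \textsc{Reg-Act} gives $f\reg{a}{e} = f'\reg{a'}{e} = f''\reg{a''}{e}$.
\end{itemize}
Transitivity of the preorder $\leqslant$ is used tacitly throughout, so that regradings compose as in \textsc{Reg-Act}.

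The second step, and the main obstacle, is well-definedness. First, independence of $c$: given two upper bounds $c$ and $c'$, choose $e$ above both. Since regrading is a functor,
$$(f\reg{a}{c}\comp g\reg{b}{c})\reg{c}{e} = f\reg{a}{e}\comp g\reg{b}{e},$$
and the same holds for $c'$, so the two composites are $\sim$-related. Next, independence of representatives: by the normal form, it suffices to treat a generating pair, replacing $\langle a,f\rangle$ by $\langle a', f\reg{a}{a'}\rangle$ with $a \leqslant a'$. Choose $c$ above $a'$ and $b$. Then $c$ is also above $a$, and by \textsc{Reg-Act} we have $(f\reg{a}{a'})\reg{a'}{c} = f\reg{a}{c}$, so both representatives yield literally the same composite. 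The argument on the right is symmetric.

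The third step is the category axioms. For associativity, pick a single $e$ above $a$, $b$ and $c$ by iterating directedness. The freedom of choice established in the second step lets us use $e$ for both intermediate composites. Both bracketings then reduce, by functoriality of regrading and \textsc{Reg-Act}, to $f\reg{a}{e}\comp g\reg{b}{e}\comp h\reg{c}{e}$, and associativity in $ℂ_e$ finishes. For unitality, take upper bound $a$ for the pair $0,a$. Then $\id_X\reg{0}{a}$ is the identity of $ℂ_a$, since regrading is a functor, so the composite is $\langle a, f\rangle$.
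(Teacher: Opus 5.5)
Your proof is correct, and its skeleton matches the paper's. In both, you compose at a common upper bound, check independence of the bound and of representatives, then get associativity at a single bound above all three grades. The difference is in how well-definedness with respect to representatives is handled.

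The paper unfolds $\langle a,f\rangle \equiv \langle p,h\rangle$ into an explicit zigzag chain of generating steps. It then chooses one grade $w$ above every grade occurring in the chains for both arguments (and above $z$), and checks step by step that everything becomes equal after regrading to $w$.

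You instead isolate exactly this content as a lemma: $\equiv$ coincides with the relation ``equal after regrading to some common upper bound''. You prove it not by induction over chains but by observing that this relation contains the generators, lies inside $\equiv$, and is already transitive by directedness. This is the familiar description of a directed colimit of sets, so it identifies $\overline{\mathbb{C}}(X;Y)$ with $\mathrm{colim}_{a}\, \mathbb{C}_a(X;Y)$. After that, invariance only needs checking on one forward generating step, where the two composites coincide on the nose by \textsc{Reg-Act}.

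Your route is cleaner, and the lemma is reusable, e.g.\ for the tensor in \Cref{globmon} or for the comparison with $\mathbb{C}_\top$. The paper's version is more hands-on and in effect reproves that lemma inline.

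One small point: the reduction to generating pairs does not actually need the normal form. The relation ``yields the same composite class'' on representatives is itself an equivalence relation, so invariance on generators already suffices. Your normal form gives an equally valid justification, since two equivalent representatives both step forward to a common $\langle d, \cdot\rangle$.
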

\begin{proof}
  See \Cref{app:directedc}.
\end{proof}

\begin{example}
  \Pcms{} whose extension preorders have binary joins, or have an upper-bounding monoid structure (in the sense of \Cref{upperboundingglobal}), are directed, and so \Cref{ex:pcmg,ex:totn,ex:upb} also give rise to categories as in \Cref{directedc}.
\end{example}

In case $E$ has a top element in its "extension preorder", for example, when $E$ is an "effect algebra", this global category $\overline{ℂ}$ is isomorphic to $ℂ_\top$, and in case $\oplus$ is total, $\overline{ℂ}$ is a monoidal category.

\begin{proposition}
Let $(E, \oplus, 0)$ be a \pcm{} whose "extension preorder" has a top element. Then the category $\overline{ℂ}$ with objects $ℂ\obj$ and hom-sets $\overline{ℂ}(X;Y) := \coprod_{a} ℂ_a(X;Y)/{\equiv}$ defined in \Cref{directedc} exists and is isomorphic to the category $ℂ_\top$.
\end{proposition}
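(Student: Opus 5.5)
Existence is immediate: if the extension preorder has a top element $\top$, then it is directed, since any $a,b$ have the upper bound $\top$. Hence \Cref{directedc} gives the category $\overline{ℂ}$. Composition there is defined on representatives $\langle a,f\rangle$, $\langle b,g\rangle$ by regrading both to a common upper bound and composing in that grade. I will use $\top$ as that bound, which \Cref{directedc} permits up to $\equiv$.

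To build the isomorphism, I will define an identity-on-objects functor
\[
  \Phi : \overline{ℂ} \to ℂ_\top, \qquad [\langle a,f\rangle] \mapsto f\reg{a}{\top}.
\]
\emph{Well-definedness.} It suffices to check the generating pairs. For $a \leqslant b$ we have $(f\reg{a}{b})\reg{b}{\top} = f\reg{a}{\top}$ by \textsc{Reg-Act}. So $\Phi$ is constant on $\equiv$-classes, since $\equiv$ is the least equivalence relation containing these pairs.

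\emph{Functoriality.} Take the composite of $[\langle a,f\rangle]$ and $[\langle b,g\rangle]$, represented at grade $\top$ by $\langle \top, f\reg{a}{\top}\comp g\reg{b}{\top}\rangle$. Since $\top \leqslant \top$ with regrading $(-)\reg{\top}{\top}$ the identity, $\Phi$ sends this composite to exactly $\Phi[f] \comp \Phi[g]$. Identities $[\langle 0,\id_X\rangle]$ go to $(\id_X)\reg{0}{\top}$. This is the identity of $ℂ_\top$, because the regrading $(-)\reg{0}{\top}$ is a functor.

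\emph{Inverse.} Define $\Psi : ℂ_\top \to \overline{ℂ}$ by $h \mapsto [\langle \top, h\rangle]$. Then $\Phi\Psi(h) = h\reg{\top}{\top} = h$ by \textsc{Reg-Act}. In the other direction, $\Psi\Phi[\langle a,f\rangle] = [\langle \top, f\reg{a}{\top}\rangle]$, and this equals $[\langle a,f\rangle]$ because $\langle a,f\rangle \equiv \langle \top, f\reg{a}{\top}\rangle$ is a generating pair. So $\Phi$ is bijective on each hom-set and the identity on objects, hence an isomorphism of categories.

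The main obstacle is bookkeeping: making sure that composition in $\overline{ℂ}$, as constructed in \Cref{directedc}, is independent of the chosen upper bound, so that computing it via $\top$ is legitimate. I take that independence from \Cref{directedc} itself. If it needed re-checking, the argument is as follows. Given two upper bounds $c,c'$ of $a$ and $b$, both lie below $\top$. Regrading the composite at $c$ up to $\top$ gives the composite at $\top$, because $(-)\reg{c}{\top}$ is a functor and \textsc{Reg-Act} holds. The same holds for $c'$. So both composites are $\equiv$-related to the one at $\top$. A minor point: $\top$ need not be unique in a preorder, but any fixed choice of top element works, since the choice was only used to pick a common upper bound.
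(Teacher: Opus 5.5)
Your proposal is correct and takes essentially the same route as the paper. It uses the same map $[\langle a,f\rangle] \mapsto f\reg{a}{\top}$ with inverse $h \mapsto [\langle \top, h\rangle]$, and computes composition with $\top$ as the common upper bound. Your explicit check via \textsc{Reg-Act} that the map respects the generating pairs of $\equiv$ spells out the well-definedness step, which the paper only asserts.
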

\begin{proof}
  $\overline{ℂ}$ exists since a "top element" implies directedness. For every $a \in E$ we have $a \leqslant \top$, so every equivalence class $[\langle a,f \rangle]$ has the representative $\langle \top, f_a^\top \rangle$.
   The map $[\langle a,f\rangle] \mapsto f_a^\top$ is then a well defined bijection $\overline{ℂ}(X;Y) \cong ℂ_\top(X;Y)$ with inverse $u\mapsto [\langle \top,u \rangle]$. These functions preserve identities since $[\langle 0,\id_X \rangle] \mapsto (\id_X)_0^\top$, and composition since for composable $[\langle a,f \rangle]$ and $[\langle b,g \rangle]$, the top element is an upper bound of $a$ and $b$, so by the definition of composition in $\overline{ℂ}$,
   \[
     [\langle a,f \rangle] \comp [\langle b,g \rangle] = [\langle \top, f_a^\top \comp_\top g_b^\top \rangle] \mapsto (f_a^\top \comp_\top g_b^\top)_\top^\top = f_a^\top \comp_\top g_b^\top.
   \]
   Hence these bijections assemble into an isomorphism of categories $\overline{ℂ} \cong ℂ_\top$.
\end{proof}
 
\begin{proposition} \label{globmon}
  When $(E,\oplus,0)$ is a (total) commutative monoid then $\overline{ℂ}$ is defined and has a strict "monoidal structure".
\end{proposition}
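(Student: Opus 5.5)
First, when $\oplus$ is total, the extension preorder is directed, because $a \leqslant a \oplus b$ and $b \leqslant a \oplus b$. So $\overline{ℂ}$ exists by \Cref{directedc}, with morphisms the classes $[\langle a, f\rangle]$. The same fact shows that $\oplus$ is an upper-bounding operation in the sense of \Cref{upperboundingglobal}. I will use this to choose convenient representatives when composing.

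Define the monoidal product on objects to be the monoid $(ℂ\obj, \otimes, I)$. On morphisms, set
\[ [\langle a,f\rangle] \otimes [\langle b,g\rangle] := [\langle a\oplus b, f\otimes_{a,b} g\rangle], \]
which is always defined since $\oplus$ is total. To check this is well defined, it suffices by symmetry to handle a single generating step $\langle a,f\rangle \equiv \langle a', f\reg{a}{a'}\rangle$ with $a \leqslant a'$. In that case \textsc{Reg-$\otimes$} gives
\[ f\reg{a}{a'} \otimes g = f\reg{a}{a'} \otimes g\reg{b}{b} = (f\otimes g)\reg{a\oplus b}{a'\oplus b}, \]
and $a\oplus b \leqslant a'\oplus b$ by \Cref{lem:leqslant-cong}. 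Hence the two results are $\equiv$-related.

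Next come the strict monoidal axioms. Unit and associativity hold representative-wise by \textsc{$\otimes$-U-A}, since $a\oplus 0 = a$ and associativity holds in $E$. Preservation of identities, $[\langle 0,\id_X\rangle]\otimes[\langle 0,\id_Y\rangle] = [\langle 0,\id_{X\otimes Y}\rangle]$, is \textsc{$\otimes$-ID}.

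The main obstacle is functoriality, i.e.\ interchange, because the composites involved have heterogeneous grades. Composition in $\overline{ℂ}$, as in the proof of \Cref{directedc}, regrades both factors to a common upper bound and composes there; I assume this is independent of the chosen bound. Take $f\in ℂ_a$, $h \in ℂ_c$, $g \in ℂ_b$, $k\in ℂ_d$. Regrade $f,h$ to $u := a\oplus c$ and $g,k$ to $v := b\oplus d$. Then $[f]\comp[h]$ is represented by $f\reg{a}{u}\comp h\reg{c}{u}$, and $[g]\comp[k]$ by $g\reg{b}{v}\comp k\reg{d}{v}$. By \textsc{Inter}, the tensor of these two composites equals
\[ (f\reg{a}{u}\otimes g\reg{b}{v})\comp(h\reg{c}{u}\otimes k\reg{d}{v}). \]
By \textsc{Reg-$\otimes$} this is
\[ (f\otimes g)\reg{a\oplus b}{u\oplus v}\comp(h\otimes k)\reg{c\oplus d}{u\oplus v}. \]
Since $u\oplus v$ is an upper bound of both $a\oplus b$ and $c\oplus d$, this expression represents $([f]\otimes[g])\comp([h]\otimes[k])$. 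This establishes interchange, and hence that $\otimes$ is a functor $\overline{ℂ}\times\overline{ℂ}\to\overline{ℂ}$, so $\overline{ℂ}$ is a strict monoidal category.
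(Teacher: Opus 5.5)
Your proposal is correct and takes essentially the same approach as the paper's proof. Both define $[\langle a,f\rangle]\otimes[\langle b,g\rangle] := [\langle a\oplus b, f\otimes_{a,b} g\rangle]$, check well-definedness on a single generating step via \textsc{Reg-$\otimes$} and \textsc{Reg-Act}, get the strict axioms from \textsc{$\otimes$-U-A} and \textsc{$\otimes$-ID}, and prove interchange by regrading to $a\oplus c$ and $b\oplus d$, applying \textsc{Inter} and \textsc{Reg-$\otimes$}, and reading the result as the composite at the common upper bound. The independence of the chosen upper bound that you assume is exactly what the paper establishes in the proof of \Cref{directedc}.
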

\begin{proof}
See \Cref{app:globmon}.
\end{proof}

\section{PCM-graded monoidal categories as monoids} \label{sec:gradprom}
In this section, we present a category-theoretic perspective on our central definition (\Cref{gradmoncat}). This reformulation better connects our notion to the existing literature, and opens it up to generalization.

Categories of grades are often treated as thin monoidal categories, for example in the literature on graded monads \cite{katsumata2014parametric,mellies2017,orchard2014semantic} and locally graded categories \cite{wood1976indicial,levy2019locally,10.1007/978-3-031-16912-0_4}. The preorder structure in such categories of grades induces regrading maps, and the monoidal structure captures the \emph{monoid} structure on grades.

To deal with the fact that our grades combine only partially, we introduce a thin \emph{promonoidal} category of grades \cite{day1970closed}. Promonoidal structure suffices to obtain a convolution monoidal structure on presheaves, which are duoidal with the pointwise cartesian monoidal product. This allows us to use the results of Heunen and Sigal \cite{heunen2023duoidally} to characterize PCM-graded monoidal categories as monoids in a category of lax monoidal functors. We shall not need promonoidal categories in their full generality, but rather the simpler case of $\Bool$-enriched promonoidal categories.

\begin{definition} \label{def:boolprof}
  Let $\Bool$ be the poset of truth values $\{\varnothing \leqslant \top\}$.
  A $\Bool$-profunctor $P : \mathbb{C} \lto \mathbb{D}$ is a functor $P : \mathbb{C}\op \times \mathbb{D} \to \Bool$.
  Composition is relational composition,
  $(Q \circ P)(c;d) := \exists\ e \in \mathbb{D},\ P(c;e) \wedge Q(e;d).$
\end{definition}

\begin{definition} \label{def:boolprom}
  A $\Bool$-""promonoidal category"" $(\mathbb{C},P,I)$ is a category $ℂ$ equipped with $\Bool$-profunctors
  $P : ℂ \times ℂ \lto ℂ$ and $I : 1 \lto ℂ,$
  satisfying the associativity and unitality laws
  $$P \circ (P \times \bid_ℂ) = P \circ (\bid_ℂ \times P),\quad P \circ (\bid_ℂ \times I) = \bid_ℂ, \quad P \circ (I \times \bid_ℂ) = \bid_ℂ,$$
  where
  $\bid_{\mathbb{C}}$ is the $\Bool$-profunctor given by change of base of $\mathsf{Hom}_\mathbb{C}$ along
  $$\Set \to \Bool : \begin{cases} \varnothing \mapsto \varnothing, \quad A \mapsto \top. \end{cases}$$
\end{definition}

We encode a \pcm{} as thin $\Bool$-"promonoidal category" as follows.

\begin{proposition} \label{pcmprom}
  Let $(E,\oplus,0)$ be a \pcm{}. Then $\mathbf{E} = ((E,\leqslant), P, I)$ is a $\Bool$-"promonoidal category" on the "extension preorder" of $E$ where
  \[
    P(a,b ; c) :=
    \begin{cases}
      \top & \text{if}\ a \oplus b \leqslant c,\\
      \varnothing & \text{otherwise,}
    \end{cases}
    \qquad
    I(c) := \top\ \text{for all}\ c.
  \]
\end{proposition}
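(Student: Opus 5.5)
The plan is to check the conditions of \Cref{def:boolprom} directly, using \Cref{lem:leqslant-cong} wherever definedness needs care. Since $(E,\leqslant)$ is thin, $\bid_{\mathbf{E}}(a;b)$ is just the truth value of $a \leqslant b$. All composites of $\Bool$-profunctors then become existential statements about grades. I read ``$a \oplus b \leqslant c$'' as including the assertion that $a \oplus b$ is defined.

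First I show that $P$ and $I$ really are $\Bool$-profunctors. For $I$ this is trivial, since it is constant at $\top$. For $P$ I must show contravariance in $a,b$ and covariance in $c$: if $a' \leqslant a$, $b' \leqslant b$, $c \leqslant c'$ and $a \oplus b \leqslant c$, then $a' \oplus b'$ is defined and $a' \oplus b' \leqslant c'$. I apply \Cref{lem:leqslant-cong} twice, using commutativity in between. From $a' \leqslant a$ and $a \perp b$ I get $a' \oplus b \leqslant a \oplus b$. From $b' \leqslant b$ and $b \perp a'$ I get $a' \oplus b' \leqslant a' \oplus b$. Transitivity of the preorder then gives the claim.

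Next comes associativity. Composing with the product profunctor $P \times \bid$ gives
\[
\big(P \circ (P \times \bid)\big)(a,b,c;d) = \exists\, e, c'.\ a \oplus b \leqslant e \wedge c \leqslant c' \wedge e \oplus c' \leqslant d .
\]
I claim this is equivalent to ``$(a \oplus b) \oplus c$ is defined and $\leqslant d$''.
\begin{itemize}
\item For the backward direction, take $e := a \oplus b$ and $c' := c$.
\item For the forward direction, I use the functoriality just proved: $a \oplus b \leqslant e$ and $c \leqslant c'$ with $e \oplus c' \leqslant d$ give that $(a \oplus b) \oplus c$ is defined and $\leqslant e \oplus c' \leqslant d$.
\end{itemize}
Symmetrically, $P \circ (\bid \times P)$ evaluates to ``$a \oplus (b \oplus c) \leqslant d$''. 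The two sides agree by the Kleene associativity $(a \oplus b) \oplus c \simeq a \oplus (b \oplus c)$, which matches both definedness and value.

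Finally, for unitality I compute
\[
\big(P \circ (\bid \times I)\big)(a;c) = \exists\, x, e.\ a \leqslant x \wedge x \oplus e \leqslant c .
\]
This implies $a \leqslant c$, because $a \leqslant x \leqslant x \oplus e \leqslant c$. Conversely, it is implied by $a \leqslant c$ via the choice $x := a$, $e := 0$, since $a \oplus 0 = a$. So the composite equals $\bid(a;c)$, and the other unit law is the same argument after commutativity.

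The only real obstacle is bookkeeping of definedness under partiality. This is exactly what \Cref{lem:leqslant-cong}, upgraded to two-variable monotonicity in the functoriality step, is designed to absorb. Everything else is unwinding relational composition.
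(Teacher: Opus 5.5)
Your proof is correct and takes essentially the paper's route: you check functoriality of $P$ with the monotonicity lemma (both contravariant variables at once, where the paper does one at a time), and you prove associativity and unitality with the same witnesses ($a \oplus b$, and $a$ with $0$). The differences are presentational: you keep the identity-profunctor factors explicit and reduce both associativity composites to the normal form ``$a \oplus b \oplus c$ is defined and $\leqslant d$'' before applying Kleene associativity, whereas the paper simplifies the composites first and trades the witnesses $a \oplus b$ and $b \oplus c$ directly.
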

\begin{proof}
  See \Cref{app:pcmprom}.
\end{proof}

A functor $F : \mathbf{E} \to \Set$, or \emph{copresheaf}, comprises a family of sets $\{F(e)\}_{e \in E}$ indexed by $E$, together with functions $F(e \leqslant e') : F(e) \to F(e')$ for each $e \leqslant e'$ in the "extension preorder". These families will give graded hom-sets, and the functions $F(e \leqslant e')$ will give regrading maps. We shall also need two monoidal structures on the category $[\mathsf{E},\Set]$ of functors $\mathbf{E} \to \Set$ and natural transformations between them.

\begin{proposition} \label{boolconv}
  Let $(\mathbf{E},P,I)$ be the strict $\Bool$-"promonoidal category" generated by a \pcm{} (\Cref{pcmprom}). Then the category $[\mathbf{E},\Set]$ has convolution monoidal structure $([\mathbf{E},\Set], \ast, J)$ where
    $$(F \ast G)(c) := \left( \coprod_{a \oplus b \leqslant c} F(a) \times G(b) \right) / \sim$$
    and $\sim$ is the least equivalence relation generated by pairs
    $(a,b,x,y) \sim (a',b',F(a \leqslant a')(x),G(b \leqslant b')(y))$
  for $x \in F(a), y \in G(b)$, $a \leqslant a'$, $b \leqslant b'$ and $a' \oplus b' \leqslant c$.
  The unit is given by $J(c) := \{*\}$ for all $c$.
\end{proposition}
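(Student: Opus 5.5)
The plan is to identify this as the general Day convolution for a promonoidal category, $(F \ast G)(c) = \int^{a,b} P(a,b;c) \times F(a) \times G(b)$ and $J(c) = \int^{e} I(e)\times \mathbf{E}(e,c)$, specialised to the thin $\Bool$-promonoidal category $\mathbf{E}$ of \Cref{pcmprom}. I would then compute these coends explicitly, so that the monoidal structure comes from Day's theorem \cite{day1970closed}. Here $\Bool$-profunctors are viewed as $\Set$-profunctors via $\varnothing \mapsto \varnothing$, $\top \mapsto \{*\}$. The associativity and unit laws of \Cref{def:boolprom} then transfer to $\Set$-valued laws up to isomorphism. This works because all hom-sets of $(E,\leqslant)$ have at most one element, so composing thin profunctors via coends yields subsingletons that agree with relational composition.

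The first step is to unwind the coend for $F \ast G$. Since $P(a,b;c)$ is a subsingleton, the coend is the quotient of $\coprod_{a \oplus b \leqslant c} F(a)\times G(b)$ by the relation generated by dinaturality. Dinaturality in $a$ identifies $(a',b,x',y)$ with $(a,b,x,y)$ whenever $a\leqslant a'$, $x' = F(a\leqslant a')(x)$ and $a'\oplus b \leqslant c$. Here \Cref{lem:leqslant-cong} ensures that $a\oplus b$ is defined and $a \oplus b\leqslant a'\oplus b\leqslant c$, so $P(-,b;c)$ is indeed contravariant. Combining dinaturality in $a$ and in $b$ gives exactly the generators of $\sim$ in the statement. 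For $J$, the unit profunctor is constantly $\top$, so the coend is $\int^e \mathbf{E}(e,c)$. This is the set of connected components of the down-set of $c$, and it is a singleton because $0\leqslant e$ for every $e$. Functoriality of $F\ast G$ in $c$ is by inclusion, since $c \leqslant c'$ implies $a\oplus b \leqslant c'$ by transitivity.

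The associator and unitors would then be obtained either by invoking Day's theorem, or directly. Directly, both $(F\ast G)\ast H$ and $F\ast(G\ast H)$ at $c$ are identified with the quotient of $\coprod_{a\oplus b\oplus d\leqslant c} F(a)\times G(b)\times H(d)$. For the left side, I would use \Cref{lem:leqslant-cong} to pass from $a\oplus b\leqslant e$ and $e\oplus d\leqslant c$ to $a\oplus b\oplus d\leqslant c$, and take the intermediate witness $e := a\oplus b$ for the converse. The unitor $F\ast J\cong F$ sends $[(a,b,x,*)]$ to $F(a\leqslant c)(x)$, which is well defined because $a\leqslant a\oplus b\leqslant c$. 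Its inverse is $x\mapsto[(c,0,x,*)]$. Every class meets this form because $(a,b,x,*)\sim(c,0,\dots)$ fails in general, but $(a,b,x,*)\sim (a,0,x,*)\sim(c,0,F(a\leqslant c)x,*)$, using regrading along $0\leqslant b$ in the second factor and $a\leqslant c$ in the first.

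I expect the main obstacle to be checking that the hand-built associator is well defined on equivalence classes and satisfies the pentagon. This is why I would prefer to cite Day's theorem, which yields coherence for free. In that approach, the only real work is the transfer of the $\Bool$-level laws to $\Set$-level isomorphisms, which holds by thinness.
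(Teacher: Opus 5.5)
Your route is the paper's, whose proof is just the remark that this is Day convolution for the promonoidal category $\mathbf{E}$. Your unwinding of the coend for $F \ast G$, your computation of $J$, your flattening of the two bracketings and your unitor are all correct. The problem is the justification for the step you call the only real work, transporting the $\Bool$-level laws to $\Set$-level isomorphisms. It is not true that composing thin profunctors via coends yields subsingletons. For subsingleton-valued $Q$ (covariant) and $R$ (contravariant) on a preorder, $\int^e Q(e) \times R(e)$ is the set of connected components of the witness set $\{e : Q(e) \neq \varnothing \neq R(e)\}$, and this set can be disconnected. For example, take the constantly true $\Bool$-profunctors $1 \lto D \lto 1$ on the discrete category $D$ with two objects: the $\Set$-coend has two elements, while the relational composite is $\top$. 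This is exactly the failure of $\Bool \to \Set$ to preserve joins ($\top \vee \top = \top$, but $\{*\} \sqcup \{*\}$ has two elements). It is why the passage from $\Bool$ to $\Set$, which the paper's one-line proof does not address, needs an argument.

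The argument that works is \Cref{lem:leqslant-cong}, not thinness. For $P \circ (P \times \bid)$, the witness set $\{e : a \oplus b \leqslant e,\ e \oplus d \leqslant c\}$, when nonempty, contains $a \oplus b$, which lies below every witness, so the set is connected. Likewise $b \oplus d$ serves for the other bracketing, and $0$ for the unit composites $\{e : a \oplus e \leqslant c\}$. The same least-witness argument makes the four-fold composites subsingletons. Hence the coherence axioms of the resulting $\Set$-promonoidal structure hold automatically, and Day's theorem applies. You already use exactly this witness ($e := a \oplus b$) in your direct argument. That direct route is also easier than you fear. Identify every bracketing with the flat quotient over $a_1 \oplus \cdots \oplus a_n \leqslant c$; then each associator acts as the identity on flat representatives. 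Well-definedness reduces to checking generators, and the pentagon and triangle commute trivially.
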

\begin{proof}
This is the $\Bool$-enriched case of the standard convolution monoidal structure for presheaves on a promonoidal category \cite{day1970closed}, specialized to the promonoidal category $\mathbf{E}$.
\end{proof}

Along with the pointwise cartesian monoidal structure $(\times, K)$, where $(F \times G)(e) := F(e) \times G(e)$, and $K$ is constant at the singleton, it is standard that $([\mathbf{E},\Set], \ast, J, \times, K)$ is a duoidal category \cite{street12:linking}, and so:

\begin{proposition} \label{circl} \textup{(Heunen and Sigal, \cite[\S 5, Propositions 3,4 and 5]{heunen2023duoidally})} Let $(\mathbb{C}\obj, \otimes, I)$ be a monoid, seen as a monoidal discrete category, so that $\mathbb{C}\obj\op \times \mathbb{C}\obj$ is monoidal with $(X,Y) \otimes (X',Y') := (X \otimes X', Y \otimes Y')$. The functor category
  $$\MonCatl(\mathbb{C}\obj\op \times \mathbb{C}\obj, ([\mathbf{E},\Set], \ast, J))$$ has a monoidal structure $(\circ, L)$ given by lifting $(\times,K)$,
  \begin{gather*}
      (P \circ Q)(X;Z) := \coprod_{Y \in \mathbb{C}\obj} Q(X;Y) \times P(Y;Z) \qquad
      L(X;Y) := \begin{cases}
    K & \text{if}\ X = Y \\
    e \mapsto \varnothing & \text{otherwise.}
    \end{cases}
  \end{gather*}
\end{proposition}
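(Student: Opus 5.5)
The statement is that the category of lax monoidal functors $\mathbb{C}\obj\op \times \mathbb{C}\obj \to ([\mathbf{E},\Set],\ast,J)$ carries a monoidal structure $(\circ,L)$. I would prove it by instantiating the general result of Heunen and Sigal for an arbitrary duoidal category $(\mathcal{V},\ast,J,\times,K)$. The instance is $\mathcal{V} = [\mathbf{E},\Set]$, for which duoidality is standard \cite{street12:linking}. I would still make the structure maps explicit so that the formulas for $\circ$ and $L$ come out as stated.

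\textbf{Step 1: the duoidal interchange.} I would write down the interchange map
\[
\zeta : (F \times G) \ast (F' \times G') \to (F \ast F') \times (G \ast G').
\]
It sends a class $[(a,b,(x,y),(x',y'))]$ to the pair $([(a,b,x,x')],[(a,b,y,y')])$. This is well defined because the generating relation $\sim$ is applied componentwise.

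I would also record the remaining unit maps, for which $\mathbf{E}$ being thin and having $0$ as least element is what is needed:
\begin{itemize}
\item $J \ast J \to J$ and $K \ast K \to K$ are the unique maps to singletons;
\item $J \to K$ is the identity of singletons;
\item $J \to K \ast K$ picks the class of $(0,0,*,*)$, which is defined since $0 \oplus 0 = 0 \leqslant c$ for every $c$.
\end{itemize}
The duoidal coherence axioms are equations between maps into products, or between classes of tuples, so they hold by inspection.

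\textbf{Step 2: $P \circ Q$ is lax monoidal.} Given lax monoidal $P,Q$, I would define the laxator $(P\circ Q)(X;Z) \ast (P\circ Q)(X';Z') \to (P\circ Q)(X\otimes X';Z\otimes Z')$ as follows.
\begin{enumerate}
\item Use that $\ast$ preserves coproducts in each variable, since it is a coend built from coproducts. This reduces the source to a sum over pairs $(Y,Y')$ of terms $(Q(X;Y)\times P(Y;Z)) \ast (Q(X';Y')\times P(Y';Z'))$.
\item Apply $\zeta$ to each such term.
\item Apply the laxators of $Q$ and $P$.
\item Include into the summand indexed by $Y\otimes Y'$.
\end{enumerate}
The unit map $J \to (P\circ Q)(I;I)$ is $J \to J\times J \to Q(I;I)\times P(I;I)$, followed by inclusion at $Y = I$.

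The unit $L$ is lax monoidal via $K \ast K \to K$ and $\varnothing \ast - = \varnothing$. The only point to check is that $X = Y$ and $X' = Y'$ imply $X\otimes X' = Y\otimes Y'$. The converse fails, but we only need a map into $L$, so this causes no problem.

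\textbf{Step 3: coherence.} The associator and unitors of $\circ$ are inherited from those of $\times$ and the reindexing of coproducts over $\mathbb{C}\obj$. Because $\mathbb{C}\obj$ is discrete, the coproduct is genuinely over objects with no coend quotient. Since $L$ is concentrated on the diagonal, $L \circ Q \cong Q$ with the sum collapsing to $Y = Z$. It remains to show these isomorphisms are monoidal transformations, i.e.\ that they commute with the laxators.

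\textbf{Main obstacle.} I expect the key difficulty to be the interaction of $\zeta$ with associativity: showing that the two laxators of $(P\circ Q)\circ R$ and $P\circ(Q\circ R)$ agree. This needs the hexagon-type coherence of $\zeta$ with the associators of $\times$ and $\ast$. I would discharge it by an element-level computation on representatives $(a,b,\dots)$, using that every map involved acts componentwise on tuples, rather than by diagram chasing. Alternatively, I would cite directly \cite[\S 5, Propositions 3, 4 and 5]{heunen2023duoidally}, where this verification is carried out for general duoidal $\mathcal{V}$.
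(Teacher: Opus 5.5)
Your proposal is correct and takes the same route as the paper. The paper gives no separate argument: it combines the standard duoidality of $([\mathbf{E},\Set],\ast,J,\times,K)$ with Heunen and Sigal's general lifting result, and your explicit $\zeta$, laxator for $P \circ Q$ and structure on $L$ unpack exactly that instance. One small slip: the duoidal unit map is $\Delta : J \to J \times J$ (which you do use in Step 2), not $J \to K \ast K$, but since $J$ and $K$ are both the constant singleton presheaf this is harmless.
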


Since $\mathbb{C}\obj$ is discrete, $\mathbb{C}\obj\op \cong \mathbb{C}\obj$, but we retain the $\op$ to anticipate future generalization.

\begin{theorem} \label{egradismon}
  Let $(E, \oplus, 0)$ be a \pcm{}, $\mathbf{E}$ be the corresponding $\Bool$-"promonoidal category", as in \Cref{pcmprom}, and $(\mathbb{C}\obj, \otimes, I)$ be a monoid, seen as monoidal discrete category.
  An $E$-graded monoidal category with monoid of objects $(\mathbb{C}\obj, \otimes, I)$ is precisely a monoid in the monoidal category $$(\MonCatl(\mathbb{C}\obj\op \times \mathbb{C}\obj, ([\mathbf{E},\Set], \ast, J)), \circ, L),$$
  that is, a duoidally $[\mathbf{E},\Set]$-enriched Freyd category, in the terminology of Heunen and Sigal \cite{heunen2023duoidally}.
\end{theorem}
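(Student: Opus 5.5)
The plan is to unpack a monoid $(\mathcal{H}, \mu, \nu)$ in $(\MonCatl(\mathbb{C}\obj\op \times \mathbb{C}\obj, [\mathbf{E},\Set]), \circ, L)$ component by component and match each piece with the data and axioms of \Cref{gradmoncat}.

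\emph{Object part.} Since $\mathbb{C}\obj$ is discrete, a functor $\mathcal{H} : \mathbb{C}\obj\op \times \mathbb{C}\obj \to [\mathbf{E},\Set]$ is just a family of copresheaves $\mathcal{H}(X;Y)$. I read this as graded hom-sets $\mathbb{C}_a(X;Y) := \mathcal{H}(X;Y)(a)$, with the functorial actions $\mathcal{H}(X;Y)(a \leqslant b)$ giving the regradings. Functoriality in $\mathbf{E}$, which is thin, is then exactly \textsc{Reg-Act}.

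\emph{Lax monoidal structure.} The unit $J \to \mathcal{H}(I;I)$ picks an element of each $\mathcal{H}(I;I)(c)$, natural in $c$. By \textsc{Reg-Act} this is the same as choosing $\id_I$ at grade $0$ together with all of its regradings. The multiplication $\mathcal{H}(X;Y) \ast \mathcal{H}(X';Y') \to \mathcal{H}(X\otimes X';Y\otimes Y')$ is, by the coend/colimit description in \Cref{boolconv}, the same as a family of maps $\mathbb{C}_a(X;Y)\times\mathbb{C}_b(X';Y') \to \mathbb{C}_c(X\otimes X';Y\otimes Y')$ for $a\oplus b\leqslant c$, compatible with the relation $\sim$ and natural in $c$.

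\emph{Reducing the tensor to $\otimes_{a,b}$.} I would show that this family is determined by its values at $c = a\oplus b$, namely $\otimes_{a,b}$. Naturality in $c$ then forces the value at $c \geqslant a\oplus b$ to be $(f\otimes g)\reg{a\oplus b}{c}$. Compatibility with $\sim$ says $f\reg{a}{a'}\otimes g\reg{b}{b'} = (f\otimes g)\reg{a\oplus b}{a'\oplus b'}$, which is \textsc{Reg-$\otimes$} via \Cref{lem:leqslant-cong}. Conversely, given \textsc{Reg-$\otimes$}, the formula $(f \otimes g)\reg{a\oplus b}{c}$ respects $\sim$. The lax associativity and unit axioms become $\otimes$-\textsc{U-A}, after cancelling regradings using \textsc{Reg-Act}.

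\emph{Monoid structure.} The unit $\nu : L \to \mathcal{H}$ must be a monoidal natural transformation. Its component at $(X;X)$ is a natural map $K \to \mathcal{H}(X;X)$, i.e.\ identities at every grade that are regradings of $\id_X$ at grade $0$. Monoidality of $\nu$ is $\otimes$-\textsc{ID}, together with agreement with the chosen $\id_I$. The multiplication $\mu : \mathcal{H}\circ\mathcal{H}\to\mathcal{H}$ is a natural family of composition maps at each grade $c$. Naturality in $c$ says regrading is functorial; monoid associativity and unitality say each $\mathbb{C}_c$ is a category.

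\emph{Main obstacle.} The subtle step is that $\mu$ must be a \emph{monoidal} transformation. For this I need the duoidal interchange $(P\circ Q)\ast(P'\circ Q') \to (P\ast P')\circ(Q\ast Q')$ of \Cref{circl}, which comes from the interchange $(F\times G)\ast(F'\times G')\to (F\ast F')\times(G\ast G')$ that is the diagonal on representatives. I would unfold this explicitly on representatives $(a,b,(f,h),(g,k))$. Monoidality of $\mu$ then says $(f\comp h)\otimes(g\comp k) = (f\otimes g)\comp(h\otimes k)$ at grade $a\oplus b$ (and regraded versions), which is exactly \textsc{Inter}. Once these correspondences are checked in both directions, they are evidently mutually inverse, which gives the stated bijection.
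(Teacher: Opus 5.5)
Your proposal is correct and follows essentially the same route as the paper. You unfold the monoid into lax-functor data, composition and identities; recover $\otimes_{a,b;c}$ as $(f \otimes_{a,b} g)\reg{a \oplus b}{c}$ by naturality in $c$; and match compatibility with $\sim$ to \textsc{Reg-$\otimes$}, the lax laws to $\otimes$-\textsc{U-A}, monoidality of the unit to $\otimes$-\textsc{ID} together with $\eta_c = (\id_I)_c$, and monoidality of composition to \textsc{Inter}, which is exactly the paper's item-by-item correspondence with the converse checked in the appendix.
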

\begin{proof}  
  This is mostly a case of unfolding definitions, so let us first unpack what such a monoid comprises in elementary terms. We will then examine the apparent discrepancies.
  \begin{enumerate}[label=\arabic*., ref=\arabic*]
  \item A lax monoidal functor $\mathbb{C} : \mathbb{C}\op\obj \times \mathbb{C}\obj \to ([\mathbf{E},\Set], *, J)$, which is
    \begin{enumerate}[label=(\roman*), ref=\theenumi.(\roman*)]
    \item\label{itm:lax-functor-homset} a set $\mathbb{C}_a(X;Y)$, for each pair of objects $X,Y$ in $\mathbb{C}\obj$, and grade $a \in \mathbf{E}$, and
    \item\label{itm:lax-functor-regrade} a function $(-)_a^b : \mathbb{C}_a(X;Y) \to \mathbb{C}_b(X;Y)$, for grades $a \leqslant b$ in $\mathbf{E}$,
    \item\label{itm:lax-functor-regrade-laws} such that $(-)_a^a$ is the identity, and $(-)_b^c \circ (-)_a^b = (-)_a^c$,
    \item\label{itm:lax-tensor} functions $\otimes_{a,b;c} : \mathbb{C}_a(X;Y) \times \mathbb{C}_b(X';Y') \to \mathbb{C}_c(X \otimes X'; Y \otimes Y')$ for each $a \oplus b \leqslant c$
    \item\label{itm:lax-unit} an element $\eta_a \in \mathbb{C}_a(I;I)$, for each $a \in \mathbf{E}$,
      \par\smallskip
      \noindent\hspace*{-\leftmargin} satisfying the following naturality and coherence equations
      \item\label{itm:equiv-compat} compatibility of $\otimes_{a,b;c}$ with the equivalence relation in \Cref{boolconv}, i.e. whenever $a \leqslant a'$, $b \leqslant b'$, and $a' \oplus b' \leqslant c$, $(f\reg{a}{a'}) \otimes_{a',b';c} (g\reg{b}{b'}) = f \otimes_{a,b;c} g,$
    \item\label{itm:unit-nat} $\eta_{c'} = (\eta_c)\reg{c}{c'}$ whenever $c \leqslant c'$,
    \item\label{itm:tensor-nat} $(f \otimes_{a,b;c} g)\reg{c}{d} = f \otimes_{a,b;d} g$ whenever $a \oplus b \leqslant c \leqslant d$,
    \item\label{itm:lax-laws-assoc} $(f \otimes_{a,b;x} g) \otimes_{x,c;d} h = f \otimes_{a,y;d} (g \otimes_{b,c;y} h)$ whenever $a \oplus b \leqslant x$, $x \oplus c \leqslant d$, $b \oplus c \leqslant y$, and $a \oplus y \leqslant d$,
    \item\label{itm:lax-laws-unit} $f \otimes_{a,b;c} \eta_b = f\reg{a}{c}$ and $\eta_a \otimes_{a,b;c} f = f\reg{b}{c}$ whenever $a \oplus b \leqslant c$.
    \end{enumerate}
  \item A monoidal natural transformation $(\comp) : \mathbb{C} \circ \mathbb{C} \Rightarrow \mathbb{C}$, which is
      \begin{enumerate}[label=(\roman*), ref=\theenumi.(\roman*)]
        \item\label{itm:comp-map} a function $(\comp)_a : \mathbb{C}_a(X;Y) \times \mathbb{C}_a(Y;Z) \to \mathbb{C}_a(X;Z),$ for each grade $a \in \mathbf{E}$ and objects $X,Y,Z$, 
        \item\label{itm:comp-nat} natural in the grade, i.e. for $a \leqslant b$, $(f \comp_a g)\reg{a}{b} = f\reg{a}{b} \comp_b g\reg{a}{b},$ and
        \item\label{itm:comp-monoidal} monoidal with respect to the lax structure $\otimes_{a,b;c}$, i.e. $(f \comp_a g) \otimes_{a,b;c} (h \comp_b i) = (f \otimes_{a,b;c} h) \comp_c (g \otimes_{a,b;c} i),$ whenever $a \oplus b \leqslant c$.
      \end{enumerate}
  \item A monoidal natural transformation $\mathsf{id} : L \Rightarrow \mathbb{C}$, which is
      \begin{enumerate}[label=(\roman*), ref=\theenumi.(\roman*)]
        \item\label{itm:id-map} an element $(\id_X)_a \in \mathbb{C}_a(X;X)$, for each grade $a \in \mathbf{E}$ and object $X$,
        \item\label{itm:id-nat} natural in the grade, i.e. $((\id_X)_a)\reg{a}{b} = (\id_X)_b$ whenever $a \leqslant b$,
        \item\label{itm:id-monoidal} monoidal with respect to $\otimes_{a,b;c}$, i.e. $(\id_X)_a \otimes_{a,b;c} (\id_Y)_b = (\id_{X \otimes Y})_c$ whenever $a \oplus b \leqslant c$,
        \item\label{itm:id-lax} and compatible with the lax monoidal unit above, i.e. $\eta_c = (\id_I)_c$, for all $c \in \mathbf{E}$.
      \end{enumerate}
  \item\label{itm:mon-laws} the monoid laws for $\comp$ and $\id$, namely
      \begin{enumerate}[label=(\roman*), ref=\theenumi.(\roman*)]
        \item\label{itm:comp-assoc} associativity in each grade: for composable $f,g,h \in \mathbb{C}_a$, we have $(f \comp_a g)\comp_a h = f \comp_a (g \comp_a h)$,
        \item\label{itm:id-units} left and right unit in each grade: $(\id_X)_a \comp_a f = f$ and $f \comp_a (\id_Y)_a = f$ for all $f \in \mathbb{C}_a(X;Y)$.
      \end{enumerate}
  \end{enumerate}
    This data defines an $E$-graded monoidal category:
  by \cref{itm:lax-functor-homset,itm:comp-map,itm:id-map,itm:mon-laws}, for each $a \in E$ we have a category $ℂ_a$ with objects $\mathbb{C}\obj$;
  \cref{itm:lax-functor-regrade,itm:comp-nat,itm:id-nat} give identity-on-objects regrading functors $(-)\reg{a}{b} : ℂ_a \to ℂ_b$;
  \cref{itm:lax-tensor} gives  partial monoidal products with $c = a \oplus b $, i.e. $\otimes_{a,b} := \otimes_{a,b;a \oplus b}$.
  For the laws, \textsc{Reg-Act} is \cref{itm:lax-functor-regrade-laws}.
  For \textsc{Reg-$\otimes$} we have
  $$(f \otimes g)\reg{a \oplus b}{c \oplus d}
  := (f \otimes_{a,b;a \oplus b} g)\reg{a \oplus b}{c \oplus d}
  \overset{\ref{itm:tensor-nat}}{=} f \otimes_{a,b;c \oplus d} g
  \overset{\ref{itm:equiv-compat}}{=} (f\reg{a}{c}) \otimes_{c,d;c \oplus d} (g\reg{b}{d})
  =: f\reg{a}{c} \otimes g\reg{b}{d}.$$
\textsc{$\otimes$-U-A} follows from \cref{itm:id-lax,itm:lax-laws-unit,itm:lax-laws-assoc};
\textsc{$\otimes$-Id} follows from \cref{itm:id-monoidal};
\textsc{Inter} follows from \cref{itm:comp-monoidal} with $c = a \oplus b$.
For the converse, the idea is that given an $E$-graded monoidal category, we can define the monoidal products required above via regrading, $f \otimes_{a,b;c} g := (f \otimes_{a,b} g)\reg{a \oplus b}{c}$.
The lax monoidal unit $\eta_c$ (\cref{itm:lax-unit}) is determined by the sequential unit via \cref{itm:id-lax}, and \cref{itm:unit-nat} is then subsumed by \cref{itm:id-nat} at $X = I$.
We verify that the necessary laws hold in \Cref{app:egradismon}. These two processes are mutually inverse: starting from an $E$-graded monoidal category, we have $f \otimes_{a,b;a \oplus b} g = (f \otimes g)\reg{a \oplus b}{a \oplus b} = f \otimes g$ by \textsc{Reg-Act}; conversely, starting from a lax monoidal functor with laxator $\otimes_{a,b;c}$, the recovered laxator is $(f \otimes_{a,b;a \oplus b} g)\reg{a \oplus b}{c} = f \otimes_{a,b;c} g$ by \Cref{itm:tensor-nat}.
\end{proof}

\begin{remark}
  \emph{Locally indexed categories} \cite{levy2019locally} axiomatize a notion of category in which hom-sets $\mathbb{C}_v(A;B)$ carry a grade $v$, typically an object of a monoidal category $\mathcal{V}$, equipped with regrading morphisms and sequential composition operations that are homogeneous in the grade: $\mathbb{C}_v(A;B) \times \mathbb{C}_v(B;C) \to \mathbb{C}_v(A;C)$. Formally, a locally $\mathcal{V}$-indexed category is a category enriched in the category of presheaves on $\mathcal{V}$, equipped with its pointwise cartesian product ($\times$).  
  \emph{Locally graded categories} \cite{wood1976indicial,levy2019locally,10.1007/978-3-031-16912-0_4} are instead enriched in the convolution monoidal product on presheaves ($*$), which allows grades to \emph{combine} under sequential composition. As noted above, these two monoidal structures form a duoidal structure \cite{street12:linking}. Whenever one has a duoidal category $(\mathcal{V}, \ast, J, \times, I)$, the category $(\mathcal{V},*,J)$-$\cat{Cat}$ has a monoidal structure given by $(\times, I)$, as in Batanin and Markl \cite[Section 3]{BATANIN20121811}. In the case of $\mathcal{V} = [\mathbf{E},\Set]$, monoids in this monoidal category are a ``flipped'' notion of graded monoidal category in which grades combine sequentially but not monoidally.
\end{remark}

\section{Conclusion and future work}
We have introduced monoidal categories graded by partial commutative monoids: a uniform framework for monoidal categories, effectful categories, and Freyd categories. By varying the grading PCM, we have modelled non-interfering parallelism, bounded resource usage, and other resource-sensitive settings, going beyond the two-element grading of effectful categories.

\Cref{lem:global-category-structure} raises the possibility of equipping an "$E$-graded monoidal category" with sequential composition operations of the type $ℂ_a(X;Y) \times ℂ_b(Y;Z) \to ℂ_{a \vee b}(X;Z),$ suggesting a notion of monoidal categories graded in algebraic structures equipped with two (partial) monoid structures, with one operation grading the monoidal composition, and the other sequential composition. A similar notion in the total case has appeared in a preprint of the third author with Di Lavore \cite{dilavore2025timingpinwheeldoublecategories}. \Cref{sec:gradprom} suggests a concrete approach to defining such ``partial duoid'' graded monoidal categories, by taking an $\mathbf{E}$ to be an appropriate (thin) \emph{produoidal} category, which again induces a duoidal structure on presheaves \cite{booker2013tannaka,produoidal}. This might provide a common roof for our notion and that of Sarkis and Zanasi \cite{sarkis_et_al:LIPIcs.CALCO.2025.5} in which grades combine totally, by the same operation, under both sequential and monoidal composition.

Most of our examples of $E$-graded monoidal categories have been agnostic towards the definition of objects and morphisms in the category. It would be nice to give conditions under which effectful categories can be refined to non-trivially $E$-graded monoidal categories. The work of Breuvart, McDermott and Uustalu on canonical gradings of monads \cite{Breuvart_2023} may be relevant in this regard.

In previous work \cite{earnshaw2025resourceful}, we constructed free effectful categories over what might be termed $\mathscr{P}(X)$-graded \emph{signatures} (or ``effectful signatures''). In particular, we showed that every effectful category has a non-trivial underlying $\mathscr{P}(X)$-graded signature, where the set $X$ is given by taking maximal cliques in a graph determined by the morphisms of an effectful category. Thus we may wonder if there is a further right adjoint to the functor $R$ of \Cref{coreflective}, defined using this construction. However, it seems that the morphisms of $\GradMonTop$ are too strong for this to be the case, since their assignment on morphisms is governed by a morphism of \pcms{} whereas the morphisms of signatures in our paper \cite{earnshaw2025resourceful} were necessarily weaker, only preserving orthogonality. It remains to be seen how to relate these two notions of morphism.

Partial commutative monoids are used extensively in separation logic to model resources \cite{10.5555/645683.664578,10.1145/2775051.2676980}. Establishing substantive connections to separation logic is also a promising direction for future work. In particular, a natural next step would be to understand the relation between separating conjunction and partially defined monoidal products arising from grading by appropriate separation algebras.

\paragraph*{Acknowledgements}
\noindent Matthew Earnshaw and Chad Nester were supported by Estonian Research Council grant PRG2764. Mario Román was supported by the Safeguarded AI programme of the Advanced Research + Invention Agency and the Estonian Research Council grant PRG3215.

\bibliographystyle{./entics}
\bibliography{main}

\appendix

\section{Monoidal, premonoidal, and effectful categories}
\subsection{Monoidal categories and their morphisms}
A strict monoidal category is usually defined to be a monoidal category for which associators and unitors are identities. The following elementary reformulation (see e.g. Román \cite{marioThesis}) is more convenient for our purposes.

\begin{definition} \label{moncat}
    A ""strict monoidal category"" $ℂ$ consists of a monoid of objects, $(ℂ\obj, ⊗, I)$, a collection of morphisms $ℂ(X; Y)$, for every pair of objects $X, Y ∈ ℂ\obj$, (families of) operations for the sequential and parallel composition of morphisms, respectively
  \begin{gather*}
    (\comp) : ℂ(X; Y) × ℂ(Y ; Z) → ℂ(X; Z),\ \text{and} \\
    (\otimes) : ℂ(X; Y) × ℂ(X'; Y') → ℂ(X ⊗ X'; Y ⊗ Y'),
  \end{gather*}
  and a family of identity morphisms, $\id_{X} ∈ ℂ(X; X)$. This data must satisfy the following axioms
  \begin{itemize}
    \item sequential composition is unital, $f ⨾ \id_{Y} = f$ and $\id_{X} ⨾ f = f$,
    \item sequential composition is associative, $f ⨾ (g ⨾ h) = (f ⨾ g) ⨾ h$,
    \item parallel composition is unital, $f ⊗ \id_{I} = f$ and $\id_{I} ⊗ f = f$,
    \item parallel composition is associative, $f ⊗ (g ⊗ h) = (f ⊗ g) ⊗ h$,
    \item parallel composition and identities interchange, $\id_{A} ⊗ \id_{B} = \id_{A⊗B}$, and
    \item parallel composition and sequential composition interchange, \\
    $(f ⨾ g) ⊗ (f' ⨾ g') = (f ⊗ f') ⨾ (g ⊗ g').$
  \end{itemize}
  Parallel composition is also called the ""tensor product"" or ""monoidal product"".
\end{definition}

\begin{definition}
  A ""strict monoidal functor"" $F : \mathbb{C} \to \mathbb{D}$ between strict monoidal categories is a functor that is a monoid homomorphism on objects, i.e. $F(A \otimes B) = F(A) \otimes F(B)$ and $F(I) = I$, and preserves the monoidal product on morphisms, i.e. $F(f \otimes g) = F(f) \otimes F(g)$.
\end{definition}

\begin{definition} \label{symstrmoncat}
  A ""symmetric strict monoidal category"" is a "strict monoidal category" further equipped with braiding maps $\sigma_{X,Y} \in ℂ(X \otimes Y ; Y \otimes X)$ for each pair of objects $X,Y$,
  which are natural and satisfy the axioms (i) $\sigma_{X,Y} \comp \sigma_{Y,X} = \id_{X \otimes Y}$, (ii) $(\sigma_{X,Y} \otimes \id_Z) \comp (\id_Y \otimes \sigma_{X,Z}) =  \sigma_{X,Y \otimes Z}$, (iii) $\sigma_{X,I} = \id_X$.
\end{definition}

\begin{definition}
  A ""symmetric strict monoidal functor"" is a "strict monoidal functor" which moreover preserves the braiding maps, $F(\sigma_{X,Y}) = \sigma_{FX,FY}$.
\end{definition}

\begin{definition}
  A ""cartesian monoidal category"" is a monoidal category whose monoidal product $A \otimes B$ is given by a chosen category theoretic product of $A$ and $B$, and in which the unit $I$ is a terminal object.
  A ""cartesian monoidal functor"" between cartesian monoidal categories is simply a "strict monoidal functor".
\end{definition}

\subsection{Premonoidal categories and their morphisms} 
For more details, see Power and Robinson \cite{power_robinson_1997}, or Román \cite{roman2022promonads}.

\begin{definition} \label{defn:premonoidal}
  A ""strict premonoidal category"" is a category $\mathbb{C}$ equipped with:
  \begin{itemize}
  \item for each pair of objects $A,B \in \mathbb{C}$ an object $A⊗B$,
  \item for each object $A \in \mathbb{C}$ a functor $A \ltimes -$ (``left-whiskering with $A$'') whose action on objects sends $B$ to $A ⊗ B$,
  \item for each object $A \in \mathbb{C}$ a functor $- \rtimes A$ (``right-whiskering with $A$'') whose action on objects sends $B$ to $B ⊗ A$, and
  \item a unit object $I$,
  \end{itemize}
  such that, for all morphisms $f$ and objects $A,B,C$ the following equations hold:
  \begin{itemize}
  \item $(A \ltimes f) \rtimes B = A \ltimes (f \rtimes B)$,
  \item $(A \otimes B) \ltimes f = A \ltimes (B \ltimes f)$,
  \item $f \rtimes (A \otimes B) = (f \rtimes A) \rtimes B$,
  \item $I \ltimes f = f = f \rtimes I$,
  \item $I ⊗ A = A = A ⊗ I$, and
  \item $A ⊗ (B ⊗ C) = (A ⊗ B) ⊗ C$.
  \end{itemize}
\end{definition}

\begin{definition} \label{defn:premonoidal_central}
  A morphism $f : A \to B$ in a "premonoidal category" is ""central"" if and only if for every morphism $g : C \to D$,
  \[\begin{aligned}
      (A \ltimes g){\comp}(f \rtimes D) &= (f \rtimes C){\comp}(B \ltimes g), ~\text{and}\\
      (g \rtimes A){\comp}(D \ltimes f) &= (C \ltimes f){\comp}(g \rtimes B).
      \end{aligned}\]
\end{definition}

\begin{definition} \label{defn:premonoidal_centre}
  The ""center of a premonoidal category"" is the wide subcategory of "central morphisms".
\end{definition}

\begin{definition}
  A ""symmetric strict premonoidal category"" $ℂ$ is a "strict premonoidal category" further equipped with central braiding maps $\sigma_{X, Y} \in ℂ(X \otimes Y; Y \otimes X)$ which are natural and satisfy the axioms (i) $\sigma_{X,Y} \comp \sigma_{Y,X} = \id_{X \otimes Y}$, (ii) $(\sigma_{X,Y} \rtimes Z) \comp (Y \ltimes \sigma_{X,Z}) =  \sigma_{X,Y \otimes Z}$, (iii) $\sigma_{X,I} = \id_X$.
\end{definition}

\begin{definition} \label{defn:premonoidalfunctor}
  A ""strict premonoidal functor"" $F : \mathbb{X} \to \mathbb{Y}$ between (strict) "premonoidal categories" is a functor that is a monoid homomorphism on objects, i.e. $F(A \otimes B) = F(A) \otimes F(B)$  and $F(I) = I$, and preserves whiskerings, i.e. $F(A \ltimes f) = F(A) \ltimes F(f)$ and $F(f \rtimes A) = F(f) \rtimes F(A)$.
\end{definition}

\begin{definition} \label{defn:sympremonoidalfunctor}
  A ""symmetric strict premonoidal functor"" is a "strict premonoidal functor" which moreover preserves the central braiding maps, $F(\sigma_{X,Y}) = \sigma_{FX,FY}.$
\end{definition}

\subsection{Effectful categories and their morphisms} \label{appeff}

\begin{definition}
  A strict ""effectful category"" $(\mathbb{V}, \mathbb{C}, \eta)$ is given by a
  "strict monoidal category" $\mathbb{V}$,
  \kl{strict premonoidal category} $\mathbb{C}$, and an
  identity on objects, "strict premonoidal functor" $\eta : \mathbb{V} \to
  \mathbb{C}$, such that the image of $\eta$ lands in the "center" of
  $\mathbb{C}$.
\end{definition}

\begin{definition}
  A ""symmetric strict effectful category"" is an effectful category $(\mathbb{V}, \mathbb{C}, \eta)$ in which $\mathbb{V}$ is equipped with a "symmetric strict monoidal structure", and $\mathbb{C}$ a "symmetric strict premonoidal structure", and $\eta : \mathbb{V} \to \mathbb{C}$ preserves the braidings.
\end{definition}

\begin{definition}
  A strict ""effectful functor"" $(F_0, F_1) : (\mathbb{V}, \mathbb{C}, \eta) \to (\mathbb{V}', \mathbb{C}', \eta')$ between "effectful categories" is a "strict monoidal functor" $F_0 : \mathbb{V} \to \mathbb{V}'$ and a "strict premonoidal functor" $F_1: \mathbb{C} \to \mathbb{C}'$ such that $\eta \comp F_1 = F_0 \comp \eta'$. In particular, we must have $F_0(X) = F_1(X)$ for all $X$ in $\mathbb{V}\obj = \mathbb{C}\obj$.
\end{definition}

\begin{definition}
  A ""symmetric strict effectful functor"" is a strict "effectful functor" $(F_0, F_1)$ for which $F_0$ is a "symmetric strict monoidal functor" and $F_1$ is a "symmetric strict premonoidal functor".
\end{definition}

Write $""\SymEff""$ for category of "symmetric strict effectful categories" and "symmetric strict effectful functors".

\begin{definition}
  A (strict) ""Freyd category"" $(\mathbb{V}, \mathbb{C}, \eta)$
  is a "symmetric strict effectful category" in which $\mathbb{V}$ is a cartesian monoidal category.
\end{definition}

\begin{definition}
  A ""functor of Freyd categories"" $(F_0, F_1) : (\mathbb{V}, \mathbb{C}, \eta) \to (\mathbb{V}', \mathbb{C}', \eta')$
  is a "symmetric strict effectful functor" for which $F_0$ is a "cartesian monoidal functor".
\end{definition}

Write $""\Freyd""$ for category of "Freyd categories" and functors between them.

 \section{Details for \Cref{sec:effectful}}

\subsection{Proof of \Cref{lem:interchange}} \label{app:lem:interchange}
\begin{proof}
      \begin{align*}
  &(f\reg{0}{a} \otimes \id_{X'}) ⨾ (\id_Y \otimes g) \\
  &= (f\reg{0}{a} \otimes \id_{X'}\reg{0}{0}) ⨾ (\id_Y \otimes g)
  &&\text{(\textsc{Reg-Act})} \\
  &= (f \otimes \id_{X'})\reg{0 \oplus 0 }{a \oplus 0} ⨾ (\id_Y \otimes g)
  &&\text{(\textsc{Reg-$\otimes$})} \\
  &= (f \otimes \id_{X'})\reg{0 \oplus 0}{0 \oplus a} ⨾ (\id_Y \otimes g)
  &&\text{(PCM comm.)} \\
  &= (f\reg{0}{0} \otimes \id_{X'}\reg{0}{a}) ⨾ (\id_Y \otimes g)
    &&\text{(\textsc{Reg-$\otimes$})} \\
  &= (f ⨾ \id_Y) \otimes (\id_{X'}\reg{0}{a} ⨾ g)
    &&\text{(\textsc{Reg-Act}, \textsc{Inter})} \\
  &= f \otimes g,
\end{align*}
and analogous reasoning starting from $(\id_X \otimes g) \comp (f\reg{0}{a} \otimes \id_{Y'})$ similarly arrives at $f \otimes g$.
\end{proof}

 \subsection{Proof of \Cref{symtop}} \label{app:symtop}
\begin{proof}
        \Cref{toppremon} gives that $\mathbb{C}_a$ is premonoidal. Write $\sigma_{X,Y} \in \mathbb{C}_0(X\otimes Y,Y \otimes X)$ for the braiding of $\mathbb{C}$. Then the braidings of $\mathbb{C}_a$ are given by $(\sigma_{X,Y})_0^a \in \mathbb{C}_a(X \otimes Y,Y \otimes X)$. These satisfy the required axioms of braidings, using axioms for regrading and the axioms of braidings in $ℂ_0$,
  \begin{align*}
    &(i) \quad (\sigma_{X,Y})_0^a \comp (\sigma_{Y,X})_0^a
      = (\sigma_{X,Y}\comp\sigma_{Y,X})_0^a
      = (\id_{X \otimes Y})_0^a
      = \id_{X \otimes Y},\\
    & (ii) \quad ((\sigma_{X,Y})_0^a \rtimes Z) \comp (Y \ltimes (\sigma_{X,Z})_0^a) :=
      ((\sigma_{X,Y})_0^a \otimes_{a,0} \id_Z) \comp (\id_Y \otimes_{0,a} (\sigma_{X,Z})_0^a) \\
      &= (\sigma_{X,Y} \otimes \id_Z)_0^a \comp (\id_{Y} \otimes \sigma_{X,Z})_0^a
      = ((\sigma_{X,Y} \otimes \id_Z) \comp (\id_{Y} \otimes \sigma_{X,Z}))_0^a
    = (\sigma_{X,Y\otimes Z})_0^a  \\
      & (iii) \quad  (\sigma_{X,I})_0^a = (\id_{X})_0^a
      \end{align*}
      and for any $f \in \mathbb{C}_a(X;Y)$ we have naturality, via the symmetry axiom for symmetric $E$-graded monoidal categories,
      $$(f \rtimes Z)\comp(\sigma_{Y,Z})_0^a
      =  (f \otimes \id_Z)\comp(\sigma_{Y,Z})_0^a
      = (\sigma_{X,Z})_0^a\comp(\id_Z \otimes f)
      = (\sigma_{X,Z})_0^a\comp(Z \ltimes f).$$

      It remains to show that these braidings are "central" in $\mathbb{C}_a$. Let $g \in \mathbb{C}_a(A;B)$. Unfolding the whiskerings in $\mathbb{C}_a$ and applying \Cref{lem:interchange} we obtain
      \begin{align*}
	    ((X \otimes Y) \ltimes g)\comp((\sigma_{X,Y})_0^a \rtimes B)
	    &:= (\id_{X \otimes Y} \otimes g)\comp((\sigma_{X,Y})_0^a \otimes \id_B) \\
	    &= ((\sigma_{X,Y})_0^a \otimes \id_A)\comp(\id_{Y \otimes X} \otimes g) \\
	    &=: ((\sigma_{X,Y})_0^a \rtimes A)\comp((Y \otimes X) \ltimes g),
      \end{align*}
and similarly for the other interchange equation.
Hence $(\sigma_{X,Y})_0^a$ is central, so $\mathbb{C}_a$ is a "symmetric premonoidal category".
\end{proof}

    \section{Details for \Cref{sec:globalcat}}

  \subsection{Proof of \Cref{directedc}} \label{app:directedc}
  \begin{proof}
  Write $[\langle a , f \rangle] \in \overline{ℂ}(X;Y)$ for the equivalence class of a morphism $f \in ℂ_a(X;Y)$. Let $[\langle a, f \rangle] \in \overline{ℂ}(X;Y)$ and $[\langle b, g \rangle] \in \overline{ℂ}(Y;Z)$. We define composition as follows.
  Since $E$ is directed, we can choose a $c$ such that $a,b \leqslant c$.
  Define $[\langle a, f \rangle] \comp [\langle b, g \rangle] := [\langle c, f\reg{a}{c} \comp_c g\reg{b}{c} \rangle]$. We must show this is well defined. Firstly, if $c'$ is another $a,b \leqslant c'$, then by directedness we can again choose $c, c' \leqslant d$, and by the axioms of $E$-graded monoidal categories we have $$(f\reg{a}{c} \comp_c g\reg{b}{c})\reg{c}{d} = (f\reg{a}{c})\reg{c}{d} \comp_d (g\reg{b}{c})\reg{c}{d} = f\reg{a}{d} \comp_d g\reg{b}{d} = (f\reg{a}{c'})\reg{c'}{d} \comp_d (g\reg{b}{c'})\reg{c'}{d} = (f\reg{a}{c'} \comp_{c'} g\reg{b}{c'})\reg{c'}{d},$$ so $[\langle c, f\reg{a}{c} \comp g\reg{b}{c} \rangle] = [\langle d, f\reg{a}{d} \comp g\reg{b}{d}\rangle] = [\langle c', f\reg{a}{c'} \comp g\reg{b}{c'} \rangle]$.
  
  Now let $\langle p,h \rangle$ and $\langle q, i \rangle$ be different representatives of the equivalence classes $[\langle a, f \rangle]$ and $[\langle b, g \rangle]$ respectively. For $p,q \leqslant r$, we must show $[\langle r, h\reg{p}{r} \comp i\reg{q}{r} \rangle] = [\langle c, f\reg{a}{c} \comp g\reg{b}{c} \rangle]$. Again, by directedness choose $c,r \leqslant z$, then we shall show that both equivalence classes are equal to $[\langle z, f\reg{a}{z} \comp g\reg{b}{z}\rangle]$. On the one hand, from $c \leqslant z$ we have $[\langle c, f\reg{a}{c} \comp g\reg{b}{c} \rangle] = [\langle z, f\reg{a}{z} \comp g\reg{b}{z}\rangle]$, establishing one of the desired equalities. From $r \leqslant z$ and the axioms for regrading, we have $[\langle r, h\reg{p}{r} \comp i\reg{q}{r} \rangle] = [\langle z, h\reg{p}{z} \comp i\reg{q}{z} \rangle]$, so it remains to show $[\langle z, h\reg{p}{z} \comp i\reg{q}{z} \rangle] = [\langle z, f\reg{a}{z} \comp g\reg{b}{z}\rangle]$.

  We have assumed $[\langle a, f\rangle] = [\langle p, h \rangle]$, so there must exist a sequence $\langle a_0, f_0 \rangle \equiv ... \equiv \langle a_n, f_n \rangle$, with $a_0 = a, f_0 = f$ and $a_n = p, f_n = h$, and either $a_{i-1} \leqslant a_i$ and $f_i = (f_{i-1})\reg{a_{i-1}}{a_i}$, or $a_i \leqslant a_{i-1}$ and $f_{i-1} = (f_i)\reg{a_{i}}{a_{i-1}}$, and similarly for $g$ (with sequence of grades $b = b_0, ..., b_m = i$, say). By directedness of $E$, there exists an element $w$ such that $w \geqslant z, a_i, b_k$ for all $0 \leqslant i \leqslant n$ and $0 \leqslant k \leqslant m$. Observe that for any step in the chain, say $\langle a_{i-1}, f_{i-1} \rangle \equiv \langle a_i, f_i \rangle$, the morphisms become equal when regraded to $w$. Explicitly, if $a_{i-1} \leqslant a_i$ and $f_i = (f_{i-1})\reg{a_{i-1}}{a_i}$, then by the axioms of regrading
  $$(f_i)\reg{a_i}{w} = ((f_{i-1})\reg{a_{i-1}}{a_i})\reg{a_i}{w} = (f_{i-1})\reg{a_{i-1}}{w}.$$
  
The case $a_i \leqslant a_{i-1}$ follows symmetrically. Applying this to each pair we obtain, $f\reg{a}{w} = f_0\reg{a_0}{w} = ... = f_n\reg{a_n}{w} = h\reg{p}{w}$.
Similar reasoning for $g$ establishes $g\reg{b}{w} = i\reg{q}{w}$, so finally we have,
    $$[\langle z, f\reg{a}{z} \comp g\reg{b}{z} \rangle] = [\langle w, (f\reg{a}{z} \comp g\reg{b}{z})\reg{z}{w} \rangle] = [\langle w, f\reg{a}{w} \comp g\reg{b}{w} \rangle] = [\langle w, h\reg{p}{w} \comp i\reg{q}{w} \rangle] = [\langle w, (h\reg{p}{z} \comp i\reg{q}{z})\reg{z}{w} \rangle] = [\langle z, h\reg{p}{z} \comp i\reg{q}{z} \rangle],$$
  establishing that composition is well defined.
  For associativity, let $[\langle a,f \rangle]$, $[\langle b,g \rangle]$, $[\langle c,h \rangle]$ be composable. We show both orders of composition are equal to $[\langle t, f_a^t \comp_t g_b^t \comp_t h_c^t \rangle],$ for some $t \geqslant a,b,c$, recalling that $\comp_t$ is associative axiomatically.
  For instance, compute $( [\langle a,f \rangle] \comp [\langle b, g \rangle] ) \comp [\langle c,h \rangle]$ by taking $d \geqslant a,b$ and then $t \geqslant d,c$, giving $[\langle t, (f\reg{a}{d} \comp g\reg{b}{d})\reg{d}{t} \comp h\reg{c}{t} \rangle] = [\langle t, f\reg{a}{t} \comp g\reg{b}{t} \comp h\reg{c}{t} \rangle]$, and similarly for the other parenthesisation. Therefore composition is associative.

  The identity at $X$ is $[\langle 0,\id_X \rangle]$. Given $[\langle a,f \rangle] : X \to Y$, and taking $a \geqslant 0,a$ we have left unitality,
  $ [\langle 0, \id_X \rangle] \comp [ \langle a,f \rangle]
    = [\langle a, (\id_X)\reg{0}{a} \comp_a f\reg{a}{a} \rangle]
    = [\langle a,f \rangle],$
    with right unitality following similarly.
  \end{proof}

  \subsection{Proof of \Cref{globmon}} \label{app:globmon}
  \begin{proof}
    A total commutative monoid has a directed "extension preorder" since $a,b \leqslant a \oplus b$, so $\overline{ℂ}$ exists by \Cref{directedc}. Define
  $[\langle a, f \rangle] \otimes [\langle b, g \rangle] := [\langle a \oplus b, f \otimes_{a,b} g \rangle].$
  To show that this is well defined, it suffices to check compatibility with the generating relation. If $a \leqslant c$ and $\langle a,f \rangle \equiv \langle c, f\reg{a}{c} \rangle$, then
  $$(f \otimes g)\reg{a \oplus b}{c \oplus b} =
    f\reg{a}{c} \otimes g\reg{b}{b} =
    f\reg{a}{c} \otimes g$$
  by \textsc{Reg-$\otimes$} and \textsc{Reg-Act}, so
    $$[\langle a \oplus b, f \otimes g \rangle] =
    [\langle c \oplus b, f\reg{a}{c} \otimes g \rangle].$$
  The argument in the second variable is identical, and therefore the monoidal product descends to equivalence classes.

  The monoidal unit is $I$, whose identity morphism in $\overline{ℂ}$ is $[\langle 0,\id_I \rangle]$. Associativity on morphisms follows immediately from $\otimes$-\textsc{U-A}. For instance,
    $$([\langle a,f \rangle] \otimes [\langle b,g \rangle]) \otimes [\langle c,h \rangle] =
    [\langle (a \oplus b) \oplus c, (f \otimes g) \otimes h \rangle] =
    [\langle a \oplus (b \oplus c), f \otimes (g \otimes h) \rangle] =
    [\langle a,f \rangle] \otimes ([\langle b,g \rangle] \otimes [\langle c,h \rangle]).$$
  Unitality for the monoidal product is likewise inherited from $\otimes$-\textsc{U-A}:
    $$[\langle a,f \rangle] \otimes [\langle 0,\id_I \rangle] = [\langle a \oplus 0, f \otimes \id_I \rangle] = [\langle a,f \rangle],$$
  and similarly on the left, and we have that
  $[\langle 0,\id_X \rangle] \otimes [\langle 0,\id_Y \rangle] = [\langle 0, \id_X \otimes \id_Y \rangle] = [\langle 0, \id_{X \otimes Y} \rangle]$
  by $\otimes$-\textsc{ID}.

  It remains to check interchange with composition in $\overline{ℂ}$. Let $[\langle a,f \rangle]$, $[\langle a',h \rangle]$, $[\langle b,g \rangle]$, and $[\langle b',k \rangle]$ be composable. Using the definition of composition from \Cref{directedc}, choose the upper bound $a \oplus a'$ of $a,a'$ and $b \oplus b'$ of $b,b'$. Then
  \begin{align*}
    &([\langle a,f \rangle] \comp [\langle a',h \rangle]) \otimes ([\langle b,g \rangle] \comp [\langle b',k \rangle]) \\
    &= [\langle a \oplus a', f\reg{a}{a \oplus a'} \comp h\reg{a'}{a \oplus a'} \rangle]
      \otimes
      [\langle b \oplus b', g\reg{b}{b \oplus b'} \comp k\reg{b'}{b \oplus b'} \rangle] \\
    &= [\langle a \oplus a' \oplus b \oplus b', (f\reg{a}{a \oplus a'} \comp h\reg{a'}{a \oplus a'}) \otimes (g\reg{b}{b \oplus b'} \comp k\reg{b'}{b \oplus b'}) \rangle] \\
    &= [\langle a \oplus b \oplus a' \oplus b', (f \otimes g)\reg{a \oplus b}{(a \oplus a') \oplus (b \oplus b')} \comp (h \otimes k)\reg{a' \oplus b'}{(a \oplus a') \oplus (b \oplus b')} \rangle],
  \end{align*}
  by \textsc{Reg-$\otimes$}, \textsc{Inter}, associativity and commutativity of $\oplus$. This is exactly the composite
    $$[\langle a \oplus b, f \otimes g \rangle] \comp [\langle a' \oplus b', h \otimes k \rangle] =
    ([\langle a,f \rangle] \otimes [\langle b,g \rangle]) \comp ([\langle a',h \rangle] \otimes [\langle b',k \rangle])$$
  using the common upper bound $a \oplus a' \oplus b \oplus b'$.
  Therefore $\overline{ℂ}$ is a strict monoidal category.
  \end{proof}

\section{Details for \Cref{sec:gradprom}} \label{app:gradpromproof}

\subsection{Proof of \Cref{pcmprom} (cont.)} \label{app:pcmprom}
\begin{proof}
  As regards functoriality, we have
  \begin{itemize}
  \item if $a' \leqslant a$ and $P(a,b;c) = \top$, then $a' \oplus b \leqslant a \oplus b \leqslant c$ by \Cref{lem:leqslant-cong}, hence $P(a',b;c) = \top$, and similarly for $b$,
  \item if $c \leqslant c'$ and $P(a,b;c) = \top$, then $a \oplus b \leqslant c$ implies $a \oplus b \leqslant c'$, so $P(a,b;c') = \top$,
  \item if $c\leqslant c'$ and $I(c) = \top$ then $0 \leqslant c \leqslant c'$, so $I(c') = \top$.
  \end{itemize}

  For associativity, we need to show, for arbitrary elements $a,b,c,d \in \mathbf{E}$ $$\exists\ {x \in \mathbf{E}},\ a \oplus x \leqslant d\ \wedge\ b \oplus c \leqslant x \iff \exists\ {x \in \mathbf{E}},\ a \oplus b \leqslant x\ \wedge\ x \oplus c \leqslant d.$$

  Assume the left hand side holds with witness $x$, then by \Cref{lem:leqslant-cong} we have $(a \oplus b) \perp c$ and $a \oplus b \oplus c \leqslant a \oplus x \leqslant d$, and so $a \oplus b$ witnesses the right hand side. If the right hand side holds with witness $x$, then \Cref{lem:leqslant-cong} similarly gives that $b \oplus c$ witnesses the left hand side.
  
  For right unitality, since $I$ is constantly true, we need to show $$\exists\ {x \in \mathbf{E}},\ a \oplus x \leqslant b \iff a \leqslant b.$$ Let the left hand side hold with witness $x$. Then since $0 \leqslant x$, we have $a = a \oplus 0 \leqslant a \oplus x \leqslant b$ from \Cref{lem:leqslant-cong}. Conversely, if the right hand side holds, $0$ witnesses the left hand side. Left unitality is analogous.
\end{proof}

\subsection{Proof of \Cref{egradismon} (cont.)} \label{app:egradismon}
\begin{proof}
  For the converse, let an $E$-graded monoidal category be given in the sense of \Cref{gradmoncat}. Then immediately we have sets $ℂ_a(X;Y)$, associative and unital compositions $(\comp)_a$ with identities $(\id_X)_a := (\id_X)_0^a$, and regrading functors $(-)\reg{a}{b}$. For every $a \oplus b \leqslant c$, define
  $$f \otimes_{a,b;c} g := (f \otimes_{a,b} g)\reg{a \oplus b}{c},
    \qquad
    \eta_c := (\id_I)_0^c.$$
  As before, \cref{itm:lax-functor-regrade-laws} is exactly \textsc{Reg-Act}.
  \Cref{itm:unit-nat} follows an application of \textsc{Reg-Act}.
  For \cref{itm:equiv-compat}, let $a \leqslant a'$, $b \leqslant b'$, and $a' \oplus b' \leqslant c$.
  Then $a \oplus b \leqslant c$, so both sides below are defined, and
  \begin{align*}
    (f\reg{a}{a'}) \otimes_{a',b';c} (g\reg{b}{b'})
    &:= ((f\reg{a}{a'}) \otimes_{a',b'} (g\reg{b}{b'}))\reg{a' \oplus b'}{c} \\
    &= ((f \otimes_{a,b} g)\reg{a \oplus b}{a' \oplus b'})\reg{a' \oplus b'}{c}
    &&\text{(\textsc{Reg-$\otimes$})} \\
    &= (f \otimes_{a,b} g)\reg{a \oplus b}{c}
    &&\text{(\textsc{Reg-Act})} \\
    &=: f \otimes_{a,b;c} g.
  \end{align*}
  \Cref{itm:tensor-nat} follows similarly, by application of \textsc{Reg-Act}.
  For \cref{itm:lax-laws-assoc}, suppose grades satisfying the hypotheses are given. Then, we have
    \begin{align*}
      (f \otimes_{a,b;x} g) \otimes_{x,c;d} h
      &:= (( (f \otimes g)\reg{a \oplus b}{x} ) \otimes h)\reg{x \oplus c}{d} \\
      &:= (( (f \otimes g)\reg{a \oplus b}{x} ) \otimes h\reg{c}{c})\reg{x \oplus c}{d}
      && \textsc{(Reg-Act)} \\
      &= (((f \otimes g) \otimes h)\reg{(a \oplus b) \oplus c}{x \oplus c})\reg{x \oplus c}{d}
      &&\textsc{(Reg-$\otimes$)} \\
      &= ((f \otimes g) \otimes h)\reg{(a \oplus b) \oplus c}{d},
      && \textsc{(Reg-Act)}
    \end{align*}
    An analogous derivation from the right hand side yields the same result, after applying $\otimes$-\textsc{U-A} and associativity of \pcms{}.
    For \cref{itm:lax-laws-unit}, let $a \oplus b \leqslant c$. Then
    \begin{align*}
      f \otimes_{a,b;c} \eta_b
      &:= (f \otimes_{a,b} (\id_I)_b)\reg{a \oplus b}{c} \\
      &= (f \otimes_{a,b} (\id_I)\reg{0}{b})\reg{a \oplus b}{c}
      &&\text{(def. of $(\id_I)_b$)} \\
      &= ((f \otimes \id_I)\reg{a}{a \oplus b})\reg{a \oplus b}{c}
      &&\text{(\textsc{Reg-$\otimes$}, \textsc{Reg-Act})} \\
      &= (f \otimes \id_I)\reg{a}{c}
      &&\text{(\textsc{Reg-Act})} \\
      &= f\reg{a}{c}.
      &&\text{($\otimes$-\textsc{U-A})}
    \end{align*}
    and similarly for left unitality.
    For \cref{itm:id-monoidal}, let $a \oplus b \leqslant c$. Then
    \begin{align*}
      (\id_X)_a \otimes_{a,b;c} (\id_Y)_b
      &:= ((\id_X)_a \otimes_{a,b} (\id_Y)_b)\reg{a \oplus b}{c} \\
      &= ((\id_X)\reg{0}{a} \otimes_{a,b} (\id_Y)\reg{0}{b})\reg{a \oplus b}{c}
      &&\text{(def. of grade-$a,b$ identities)} \\
      &= ((\id_X \otimes \id_Y)\reg{0}{a \oplus b})\reg{a \oplus b}{c}
      &&\text{(\textsc{Reg-$\otimes$})} \\
      &= (\id_X \otimes \id_Y)\reg{0}{c}
      &&\text{(\textsc{Reg-Act})} \\
      &= (\id_{X \otimes Y})\reg{0}{c}
      &&\text{($\otimes$-\textsc{ID})} \\
      &=: (\id_{X \otimes Y})_c.
    \end{align*}
    \Cref{itm:id-lax} is immediate.
    For \cref{itm:comp-monoidal}, let $a \oplus b \leqslant c$. Then we have
  \begin{align*}
    (f \comp_a g) \otimes_{a,b;c} (h \comp_b i)
    &:= ((f \comp_a g) \otimes_{a,b} (h \comp_b i))\reg{a \oplus b}{c} \\
    &= ((f \otimes_{a,b} h) \comp_{a \oplus b} (g \otimes_{a,b} i))\reg{a \oplus b}{c}
      &&\text{(\textsc{Inter})}\\
    &= (f \otimes_{a,b} h)\reg{a \oplus b}{c} \comp_c (g \otimes_{a,b} i)\reg{a \oplus b}{c}
      &&\text{(reg. func.)}\\
    &=: (f \otimes_{a,b;c} h) \comp_c (g \otimes_{a,b;c} i).
  \end{align*}
    
  \Cref{itm:comp-nat} follows from the functoriality of regrading, \cref{itm:id-nat} follows from \textsc{Reg-Act}.
  We therefore have a monoid in $(\MonCatl(\mathbb{C}\obj\op \times \mathbb{C}\obj, ([\mathbf{E},\Set], \ast, J)), \circ, L)$.
\end{proof}

\end{document}